\documentclass[a4paper,USenglish,thm-restate]{lipics-v2021}

\pdfoutput=1
\hideLIPIcs
\nolinenumbers
\usepackage{cite}
\usepackage{booktabs}
\hypersetup{colorlinks,linkcolor=magenta,citecolor=magenta,urlcolor=blue}

\title{Extending Higher-Order Critical Pairs}

\author{Johannes Niederhauser}{University of Innsbruck, Innsbruck, Austria}
{johannes.niederhauser@uibk.ac.at}{https://orcid.org/0000-0002-8662-6834}{}
\author{Aart Middeldorp}{University of Innsbruck, Innsbruck, Austria}
{aart.middeldorp@uibk.ac.at}{https://orcid.org/0000-0001-7366-8464}{}

\authorrunning{J. Niederhauser and A. Middeldorp} 

\Copyright{Johannes Niederhauser and Aart Middeldorp}

\ccsdesc{Theory of computation~Equational logic and rewriting}
\ccsdesc{Theory of computation~Lambda calculus}

\keywords{higher-order rewriting, confluence, critical pairs}

\supplement{\url{https://github.com/niedjoh/hrstk}}

\usepackage{definitions}

\begin{document}

\maketitle

\begin{abstract}
Confluence is a fundamental property of computation which implies
that the nondeterminism of a system does not affect its outcomes. A
key concept of confluence research in rewriting theory is the notion
of critical pairs which captures the potential local sources of
non-confluence. In this paper, we lift the definition of higher-order
critical pairs to a more expressive class of higher-order rewrite systems.
Thereby, we facilitate confluence analysis of rewrite systems which
can express complex one-step computations as well as program
transformations from functional programming.
\end{abstract}

\section{Introduction}
\label{sec:introduction}

Term rewriting \cite{BN98,T03} is a Turing-complete model of
computation which underlies much of functional programming.%
\footnote{For now, we deliberately do not specify what exactly we mean
by ``term'' as the focus lies on the general idea of term rewriting.}
The two distinguishing features which set it apart from functional
programming are that
\begin{enumerate}[(i)]
\item
rules are applied nondeterministically, and
\item
the left-hand sides of rules can match arbitrary terms.
\end{enumerate}
Regarding (i), term rewrite systems (TRSs) represent sets of rules as
opposed to functional programming which tries to match the defining
equations of a function in a fixed order according to the program.  On
top of that, TRSs are not limited to specific evaluation strategies
like call by name or call by value.  Regarding (ii), TRSs can easily
express program transformations like
\[
\m{map}\:f\:(\m{map}\:g\:xs) \R \m{map}\:(f \mathrel{.} g)\:xs
\]
which fuses two applications of the $\m{map}$ function into one.
In addition, functional programs are usually \emph{left-linear}
(variables occur at most once in each left-hand side)
while TRSs do not have this restriction.

Given a
TRS $\xR$, we write $s \RbR t$ ($s \Rab{\xR}{*} t$)
if there is a rewrite step (rewrite sequence) from $s$ to $t$.
From the rule above we can
compute the following rewrite sequence:
\begin{align*}
\m{map}\:(\div 3)\: (\m{map}\:(+ 1)\:(\m{map}\:(\times 2)\:[1,2]))
&\RbR \m{map}\:(\div 3)\: (\m{map}\:((+1) \mathrel{.} (\times 2))\:[1,2])
\\
&\RbR \m{map}\:((\div 3) \mathrel{.} ((+1) \mathrel{.} (\times 2)))\:[1,2]
\end{align*}
If $s \Rab{\xR}{*} t$ and $t$ cannot be rewritten anymore, we say that
$t$ is a \emph{normal form} of $\xR$ and write $s \Rab{\xR}{!} t$.
Research in rewriting often focuses on one of the two central
questions: Given a TRS $\xR$,
\begin{enumerate}[(i)]
\item
are there infinite rewrite sequences $t_1 \RbR t_2 \RbR \cdots\:$?
\item
do $s \Rab{\xR}{!} t$ and $s \Rab{\xR}{!} u$ imply $t = u$?
\end{enumerate}
If (i) is answered negatively, we say that the TRS is \emph{terminating}.
If (ii) is true then the TRS has \emph{unique normal forms}. Given a TRS
$\xR$, consider the relation $\Rab{\xR}{!}$. Whenever
$\xR$ has unique normal forms, $\Rab{\xR}{!}$ represents a
partial function. If $\xR$ is also terminating,
$\Rab{\xR}{!}$ is a total function. The relevance of termination
analysis in term rewriting directly carries over to functional programs.
Unique normal forms are important for functional languages
which support parallelism, concurrency, or other nondeterministic
operations.

\subsection{Critical Pairs}

Instead of establishing that a rewrite system has unique normal forms,
one usually shows a stronger property called \emph{confluence}.
A TRS is confluent if 
diverging rewrite sequences can always be joined. More formally,
in a confluent TRS $\xR$, for every \emph{peak}
$s \Lab{\xR}{*} \cdot \Rab{\xR}{*} t$
we have $s \Rab{\xR}{*} \cdot \Lab{\xR}{*} t$.\footnote{%
Here, $\cdot$ denotes the composition of binary relations.}
In this case, we say that $s$ and $t$ are \emph{joinable}.
A peak is joinable whenever its endpoints are joinable.
For \emph{local confluence}
we only demand joinability of \emph{local peaks}
$s \Lb{\xR} \cdot \Rb{\xR} t$. By Newman's Lemma \cite{N42},
local confluence together with termination implies confluence and therefore
unique normal forms. A fundamental result of
confluence research due to Knuth and Bendix is that local confluence of a
first-order TRS can be reduced to joinability of its
\emph{critical pairs} \cite{KB70,H80}.
Intuitively, critical pairs are the endpoints of the local peaks
which potentially destroy local confluence. These local peaks
are also called \emph{critical peaks} and can be computed from the
rules of a TRS. A
finite and first-order TRS only has finitely many critical pairs,
so the result of Knuth and Bendix
transforms the potentially
infinite task of checking joinability of all possible local peaks
to checking joinability of finitely many critical pairs.
Thus, confluence is decidable for first-order terminating TRSs.
The second basic result about critical pairs states that
\emph{orthogonal} TRSs, i.e., left-linear TRSs without critical pairs,
are confluent.

Nipkow
lifted the ``critical pair approach'' to confluence from first-order
rewriting to rewriting on simply-typed $\lambda$-terms \cite{N91,N93a}.
The corresponding TRSs are called higher-order rewrite systems (HRSs).
In HRSs, terms are viewed modulo the meta-theory of
$\beta\eta$-conversion in the simply-typed $\lambda$-calculus. This
makes it possible to view rules of several (untyped) lambda calculi as
object-level rewrite rules and study term rewriting and
$\lambda$-calculus in a unified theory. Note that HRSs 
are in some sense closer to functional programs than first-order TRSs
as they operate on higher-order terms. However, higher-order matching
modulo $\beta\eta$ is even further away from pattern matching in functional
programs than full first-order matching. In fact, HRSs cannot only
directly model more functional programs than first-order TRSs but also
succinctly express more involved program transformations for these
programs.

The definition of critical pairs always relies on the underlying
unification theory. For first-order terms, this is particularly easy
due to decidability of the problem as well as the existence of most
general unifiers. Higher-order unification of terms in simple type
theory, however, is undecidable and infinitary \cite{H75,SG89,D01}. Thus,
the definition of critical pairs given by Nipkow \cite{N91} is restricted
to a subclass of HRSs called pattern rewrite systems
(PRSs). In PRSs, the left-hand sides of rules are
instances of Miller's patterns \cite{M91}. Like first-order terms,
Miller's patterns enjoy a decidable and unitary unification problem
with most general unifiers. 

\subsection{Contributions}

Even though PRSs can express many interesting
examples including various $\lambda$-calculi (see \exaref{prs}),
there are interesting
higher-order computation rules which are out of scope for PRSs. These
include involved one-step computations (see
Examples~\ref{exa:round} and \ref{exa:sdu})
as well as function synthesis in program transformations of functional
programming (see \exaref{fusion} and \secref{progtrans}).
Hence, the goal of this paper is to
provide a conservative extension of the definition of critical pairs
for PRSs to a class of HRSs which strictly includes PRSs but can also
express the aforementioned higher-order rules. From a practical point
of view, it makes sense to keep matching deterministic; otherwise
there are multiple ways to apply a single rule at the same
position. The deterministic higher-order patterns (DHPs) of
Yokoyama et al.~\cite{YHT04}
capture a large class of simply-typed $\lambda$-terms for
which matching is deterministic while still exhibiting a
matching problem with linear time complexity. In particular, every
pattern in the sense of Miller is also a DHP. Most importantly,
the left-hand sides of interesting higher-order computation rules
like the ones mentioned above can be expressed by DHPs.
Hence, we will
introduce the class of deterministic higher-order pattern rewrite
systems (DPRSs) with DHPs as left-hand sides and lift
the definition of critical pairs from PRSs to DPRSs.
In particular, we will prove that
\begin{itemize}
\item
DPRSs are locally confluent if and only if their critical pairs are
joinable, and
\item
orthogonal DPRSs are confluent.
\end{itemize}

We also provide a prototype implementation of our results.\footnote{%
\url{https://github.com/niedjoh/hrstk}}
The implementation is based on our sound and complete unification procedure
for DHPs which produces minimal complete sets of unifiers~\cite{NM26a}.
In general, minimal complete sets of
unifiers of DHPs and therefore the set of critical pairs of DPRSs may be
infinite which complicates questions on decidability. Therefore, we will
show that these sets stay finite for finite left-linear DPRSs. Note 
that this is only a sufficient condition; neither our local confluence
result nor our prototype implementation are limited to this class.

\subsection{Related Work}

We are not the first ones which try to lift the notion of higher-order
critical pairs beyond Miller's patterns. For second-order systems,
this has indeed been already done by Hamana \cite{H19}, but we
will show in
\exaref{hamana1} that the confluence result given there is actually
unsound. Moreover, Hamana uses a modified version of
functions-as-constructors
higher-order unification (FCU) \cite{LM22}, another extension of
Miller's pattern unification. FCU is decidable and 
produces most general unifiers. However, the restrictions it imposes
are not local to a given term but global to the whole unification
problem. This renders its application in critical pair computation
incomplete as the unification problem may not be defined on certain
combinations of terms which cannot be excluded by locally restricting
left-hand sides of rules. In \exaref{varOverlap} we will see that
this happens frequently due to the usage of lifters introduced
in \secref{lifters}.

\subsection{Outline}

The remainder of this paper is structured as follows: After presenting
the preliminaries in \secref{preliminaries}, DHPs as well as DPRSs are
discussed in \secref{dhp}. We introduce our novel definition of
critical pairs in \secref{cps} and prove the basic confluence
results regarding critical pairs in \secref{conf}. Finally, we
showcase the applicability of DPRSs for program transformations of
functional programs in \secref{progtrans} before concluding in
\secref{conclusion}.

\section{Preliminaries}
\label{sec:preliminaries}

A binary relation $R$ on a set $A$ is \emph{well-founded} if there is
no infinite sequence $a_0 \mathrel{R} a_1 \mathrel{R} \cdots$ where
$a_i \in A$ for $i \in \mathbb{N}$. The composition of two binary
relations $R$ and $S$ is defined as follows:
$a \mathrel{R} \cdot \mathrel{S} b$ if there exists an element $c$
such that $a \mathrel{R} c$ and $c \mathrel{S} b$. For a binary
relation denoted by $\R$ we write $\L$, $\LR$, $\Ra{=}$, $\Ra{+}$ and
$\Ra{*}$ for its inverse, symmetric closure, reflexive closure,
transitive closure and transitive-reflexive closure, respectively.
Given $\R$ we also define the \emph{joinability relation}
${\D} = {\Ra{*} \cdot \La{*}}$. Throughout this text, we will denote a
sequence $\seq{a}$ by $\vec{a}$ where $n \geq 0$ and the corresponding
set by $\SEt{\vec{a}}$. For every $a$, $\vec[0]{a}$ represents the
empty sequence which does not have a denotation. For a binary
relation $R$, $\vec{a} \mathrel{R} \vec{b}$ abbreviates
$a_1 \mathrel{R} b_1, \dots, a_n \mathrel{R} b_n$. Similarly, for a
postfix function $f$, $\vec{a}f$
denotes the pointwise application of $f$ to $\vec{a}$. Given two
sequences $\vec{a}$ and $\vec[m]{b}$, their concatenation is written
as $\vec{a},\vec[m]{b}$.

\subsection{Terms}

In this paper, we consider higher-order rewrite systems (HRSs) as
introduced by Wolfram and Nipkow \cite{W93,MN98} which use simply-typed
$\lambda$-terms in $\beta\eta$-long normal form \cite{C40,BDS13}.
Instead of viewing HRSs as rewrite systems modulo $\beta\eta$, we follow
another tradition in higher-order rewriting where terms stay in
$\beta\eta$-long form (from now on just referred to as
\emph{canonical form}) at all times which is possible due to the
usage of hereditary substitution application \cite{K92,FJ25}.
As a consequence, our development does
not depend on results established for the simply-typed
$\lambda$-calculus.

Let $\xS$ be a set of \emph{sorts} ($a$, $b$).\footnote{%
This means that in the remainder,
$a$ and $b$ are used exclusively to represent arbitrary sorts.
We will continue to define naming conventions in this fashion.}
The set $\xT$ of \emph{types} ($\sigma$, $\tau$)
is defined as follows: $\xS \subseteq \xT$ and if
$(\vec{\sigma}) \in \xT^n$ and $a \in \xS$ then
$(\vec{\sigma}) \to a \in \xT$ where we identify $() \to a$ with
$a$. Furthermore, we abbreviate $(a) \to b$ to $a \to b$.
The \emph{order} of a type is defined as
$\ord{a} = 1$ and $\ord{(\vec{\sigma}) \to a} =
\m{max}\,\SET{\ord{\sigma_i} \mid 1 \leq i \leq n} + 1$
where we defined $\m{max}\,\varnothing = 0$.
Suppose there is an infinite set $\xV$ of typed variables ($x$,
$y$, $z$, $w$, $F$, $G$, $H$, $X$, $Y$, $Z$) and a set $\xF$ of typed
function symbols ($f$, $g$) called \emph{signature}, such that
$\xV$, $\xF$, $\xS$ and $\SET{\to}$ are pairwise disjoint
and there are infinitely many variables of each type. A \emph{head}
$h$ is either a function symbol or a variable and we write
$h : \sigma$ to denote that it has type $\sigma$. In order to keep
the notation concise, we drop the symbol $\lambda$ in abstractions.
The following
inference rule defines the set $\trm{\sigma}$ of \emph{terms} ($s$,
$t$, $u$, $v$) of type $\sigma$
\[
\frac{h : (\vec{\sigma}) \to a \in \xF \cup \xV \quad
(\vec{t}) \in \trm{\vec{\sigma}} \quad
(\vec[k]{x} : \vec[k]{\tau}) \in \xV^k}
{\vec[k]{x}.h(\vec{t}) \in \trm{(\vec[k]{\tau}) \to a}}
\]
where we use $\trm{\vec{\sigma}}$ as a shorthand for the set
$\trm{\sigma_1} \times \cdots \times \trm{\sigma_n}$. We abbreviate
$h()$ to $h$, $.s$ to $s$ and follow the convention that postfix
operations $\diamondsuit$ bind stronger than binders in terms, so
$\vec{x}.t\diamondsuit = \vec{x}.(t\diamondsuit)$. The set $\trm{\xS}$
denotes the set of all terms which coincides with the set of well-typed
$\lambda$-terms over $\xF$ and $\xV$ using sorts from $\xS$ in
$\beta\eta$-long normal form. As for
function symbols and variables, we denote that a term $s$ has type
$\sigma$ by $s : \sigma$. We say
that $s = \vec[k]{x}.h(\vec{s})$ is a \emph{basic} term if $k = 0$.

The set of free variables $\fv{s}$ of a term $s$ is defined
as usual.
For a sequence of terms $\vec{s}$, we use $\fv{\vec{s}}$ as
a shorthand for $\bigcup \SET{\fv{s_i} \mid 1 \leq i \leq n}$.
A term is \emph{linear} if no free variable occurs more than once in it.
Terms are uniquely defined up the renaming of bound variables
($\alpha$-renaming). Hence, we may assume that no variable occurs both
free and bound in any term.
In particular, we will reserve uppercase variable names
for free variables while lowercase variable names may represent both
bound and free variables.

Finally, note that $\xV \subseteq \trm{\xS}$ does
not hold as higher-order variables are not in canonical form. The function
$\cf$ takes a variable and returns the corresponding term
$x\cf = \vec{y}.x(\vec{y}\cf)$ where
$x : (\vec{\sigma}) \to a$ and $\vec{y} : \vec{\sigma}$.
The function $\cf$ is well-defined
by induction on the order of the involved types.
In terms of the simply-typed $\lambda$-calculus,
$\cf$ corresponds to restricted $\eta$-expansion.

\subsection{Positions and Subterms}

We refer to strings over $\mathbb{N}$ as
\emph{positions} ($p$, $q$, $r$). The
empty string, denoted by $\epsilon$, represents the
\emph{root position}. We write $pq$ for the concatenation of two
positions $p$ and $q$ and $p \leq q$ if $p$ is a prefix of $q$, i.e.,
there exists a position $r$ such that $pr = q$. In that case, we say
that $p$ is \emph{above} $q$ and $q$ is \emph{below} $p$. If neither
$p \leq q$ nor $q \leq p$ then $p$ and $q$ are \emph{parallel},
written $p \parallel q$. The set $\pos{s}$ of positions of a term
$s$ is defined inductively as follows:
$\pos{\vec[k]{x}.h(\vec{s})} = \SET{\epsilon} \cup
\SET{ip \mid \text{$1 \leq i \leq n$ and $p \in \pos{s_i}$}}$.
Given a position $p \in \pos{s}$, $s|_p$ denotes the
\emph{subterm} at position $p$ where
$s|_\epsilon = s$ and $(\vec[k]{x}.h(\vec{s}))|_{ip} = \vec[k]{x}.s_i|_p$.
In particular, the scope of bound
variables is not discarded, so the definition of subterms
is well-defined in the presence of $\alpha$-renaming. If
$s|_p = \vec[k]{x}.h(\vec{s})$ we refer to
$\vec[k]{x}$ as $\bvp{s}{p}$. Furthermore,
we use $s \subteq t$ to denote that $t$ is a subterm of
$s$ and define $s \subt t$ as $s \subteq t$ and $s \neq t$. A
position $p \in \pos{s}$ is called \emph{functional} (denoted by
$p \in \posf{s}$) if the head of $s|_p$ is a function symbol.
Finally, given $p \in \pos{s}$ and a
term $\vec{x}.t$ of the same type as $s|_p = \vec{x}.u$ we define
$s[\vec{x}.t]_p$ as the result of replacing $u$ by $t$ in $s$.
Note that this would not be well-defined if we had chosen
$\beta\eta$-normal
forms as our canonical forms due to the possible removal of
variable occurrences through subterm replacement; for
$s = x.\m{f}(\m{h}(x),x)$, $s[x.\m{c}]_1 = x.\m{f}(\m{c},x)$ is
$\eta$-reducible.

\subsection{Substitutions}

Finite mappings from variables to terms of the same type are called
\emph{substitutions} ($\theta, \gamma$, $\delta$, $\mu$). Given a
substitution $\theta = \SEt{\vec{x} \mapsto \vec{t}}$, its
\emph{domain} and \emph{image} are defined as
$\dom{\theta} = \SEt{\vec{x}}$ and $\im{\theta} = \SEt{\vec{t}}$,
respectively. The \emph{free variables} introduced by a substitution
$\theta$ are defined as the set
$\fv{\theta} = \bigcup\,\SET{\fv{t} \mid t \in \im{\theta}}$.
Given a substitution $\theta$ and a set of variables $V$, we denote by
$\theta|_V$ the restriction of $\theta$ to variables in $V$.
Furthermore, we define $\ord{\theta} =
\max\,\SET{\m{ord}(\sigma) \mid x : \sigma \in \dom{\theta}}$.
The application of a substitution
$\theta$ to a term $t$ (written in postfix notation $t\theta$) is
defined hereditarily by implicitly performing $\beta$-reduction
\cite{K92,WCPW04}:
\begin{itemize}
\item
$X(\vec{t})\theta = u\SET{\vec{x} \mapsto \vec{t}\theta}$ if
$X \mapsto \vec{x}.u \in \theta$,
\item
$h(\vec{t})\theta = h(\vec{t}\theta)$ if $h \notin \dom{\theta}$, and
\item
$(\vec{x}.t)\theta = \vec{y}.t\SEt{\vec{x} \mapsto \vec{y}}\theta$ where
$\SEt{\vec{y}}$ is a set of fresh variables.
\end{itemize}
Hence, $t\theta$ represents the
capture-avoiding application of $\theta$ to the free variables of $t$.
Note that substitution application $s\theta$ is well-defined by induction
on $(\ord{\theta},s)$ with respect to $(>_{\mathbb{N}},\subt)_{\lex}$.
Given $\theta = \SEt{\vec{x} \mapsto \vec{s}}$ and
$\delta = \SEt{\vec[m]{y} \mapsto \vec[m]{t}}$, we define their
composition as
$\theta\delta = \SET{\vec{x} \mapsto \vec{s}\delta} \cup
\SET{y_j \mapsto t_j \mid \text{$1 \leq j \leq m$ and
$y_j \notin \SEt{\vec{x}}$}}$.

A term $s$ \emph{matches} a term $t$ ($s \subsumes t$) if there exists
a substitution $\theta$ such that $s\theta = t$. In that case,
$t$ is called an \emph{instance} of $s$. If $s \subsumes t$
and $t \subsumes s$ we write $s \doteq t$ and say that $s$ and $t$
are \emph{variants}. Moreover,
if $s\theta = t\theta$ then $\theta$ is a \emph{unifier} of
$s$ and $t$ which are said to be \emph{unifiable}.
Let $W$ be a set of variables. Two substitutions
$\theta_1$ and $\theta_2$ are \emph{equal over W}
($\theta_1 = \theta_2 \restr{W}$) if $\theta_1(x) = \theta_2(x)$ for all
$x \in W$. A substitution $\theta$ is
\emph{at least as general} as a substitution $\delta$ with respect to $W$,
denoted by $\theta \subsumes \delta\restr{W}$, if there exists a
substitution $\gamma$ such that $\theta\gamma = \delta\restr{W}$.
We drop the suffix $[W]$ if $W$ is the set of all variables.
Two substitutions $\theta_1$ and $\theta_2$ are \emph{orthogonal}
($\theta_1 \perp \theta_2$) if there is no substitution $\gamma$ such
that $\theta_1 \subsumes \gamma$ and $\theta_2 \subsumes \gamma$.
Finally, given a binary relation $R$, $\theta_1 \mathrel{R} \theta_2$
denotes $\theta_1(x) \mathrel{R} \theta_2(x)$ for all
$x \in \xV$.

\subsection{Rewriting}

An ordered pair of basic terms $(\ell,r)$
can be seen as an ordered
\emph{equation} (written $\ell \approx r$), or, if we additionally
assume that $\fv{r} \subseteq \fv{\ell}$ and the head of $\ell$ is not
a variable, as a \emph{rule} (written $\ell \R r$). A
\emph{higher-order equational system} (HES) is a set of
equations while a \emph{higher-order rewrite system} (HRS) is a set of
rules. Every rule can be seen as an equation, so every HRS is also an
HES. We will exploit this by giving some general definitions
only for HESs.

Given an \textup{HES} $\xE$, its
\emph{rewrite relation} $\RbE$ is defined as follows: There is a
\emph{rewrite step} $s \RbE t$ if there exist an equation
$\ell \approx r \in \xE$, a substitution $\theta$ and a position
$p \in \pos{s}$ such that $s|_p = \vec{x}.\ell\theta$ and
$t = s[\vec{x}.r\theta]_p$. In that case, we say that $s$
is a \emph{redex} and $t$ is a \emph{reduct} of $s$.
Sometimes, we make the position $p$ (as
well as the rule $\ell \R r$ and substitution $\theta$) explicit by
writing $s \Rab{\xE}{p} t$
\smash{($s \Rab{\xE}{p \mid \ell \R r \mid \theta} t$)}. Two
equations $\ell_1 \approx r_1$ and $\ell_2 \approx r_2$ are
\emph{variants} if there exist substitutions $\theta_1$, $\theta_2$
such that $\ell_1\theta_1 = \ell_2$, $r_1\theta_1 = r_2$,
$\ell_2\theta_2 = \ell_1$
and $r_2\theta_2 = r_1$. In particular, this means that
$\ell_1 \doteq \ell_2$ and $r_1 \doteq r_2$ is witnessed by the same
substitutions.

An HRS is \emph{left-linear}
if the left-hand side of each of its rules is linear. Finally,
we say that an HRS $\xR$ is \emph{terminating} if $\Rb{\xR}$ is
well-founded and call $\xR$ (\emph{locally}) \emph{confluent} if
${\Lab{\xR}{*} \cdot \Rab{\xR}{*}} \subseteq {\Db{\xR}}$
(${\Lb{\xR} \cdot \Rb{\xR}} \subseteq {\Db{\xR}}$).

\subsection{Lifters}
\label{sec:lifters}

In higher-order rewriting, special substitutions are needed in order
to make confluence analysis through unification feasible \cite{MN98}.

\begin{definition}
A \emph{variable renaming} is a bijective function from $\xV$ to $\xV$.
Given a term $s$, a sequence of variables $\vec{x}$ and a set of variables
$W$ we define the \emph{$\vec{x}$-lifter of $s$ away from $W$} to be the
substitution $\SET{Z \mapsto \vec[m]{y}.\rho(Z)(\vec{x}\cf,\vec[m]{y}\cf) \mid
Z : (\vec[m]{\tau}) \to a \in \fv{s}}$ where $\rho$ is a variable renaming
such that $\im{\rho} \cap (W \cup \SEt{\vec{x}})
= \varnothing$, $\rho(Z) : (\vec{\sigma},\vec[m]{\tau}) \to a$,
$\vec[m]{y} : \vec[m]{\tau}$ and $\vec{x} : \vec{\sigma}$.  
\end{definition}

\begin{example}
Consider the sort $\m{a}$, the function symbols $\m{d} : \m{a}$,
$\m{f} : \m{a} \to \m{a}$ and $\m{g} : (\m{a} \to \m{a}) \to \m{a}$
as well as the variable $X : \m{a}$. For the HRS $\xR$ consisting of the
rules
\begin{align*}
\m{f}(X) &\R X & \m{g}(y_1.\m{f}(\m{f}(y_1))) &\R \m{d}
\end{align*}
we have $\m{g}(y_1.\m{f}(X)\SET{X \mapsto \m{f}(y_1)}) =
\m{g}(y_1.\m{f}(\m{f}(y_1)))$. Hence, the two rules overlap and produce
the following non-joinable local peak:
\[
\m{g}(y_1.\m{f}(y_1)) \Lab{\xR}{1} \m{g}(y_1.\m{f}(\m{f}(y_1)))
\Rab{\xR}{\epsilon} \m{d}
\]
As in first-order rewriting, we would like to use higher-order unification
on subterms of left-hand sides to compute such overlaps. However,
$y_1.\m{f}(X)$ and $y_1.\m{f}(\m{f}(y_1))$ are not unifiable
because $X$ does not depend on $y_1$. In order to resolve this,
we use the $\vec[1]{y}$-lifter $\delta = \SET{X \mapsto G(y_1)}$ where
$G : \m{a} \to \m{a}$ is a fresh variable to obtain
$y_1.\m{f}(X)\delta = y_1.\m{f}(G(y_1))$. This term is now unifiable with
$y_1.\m{f}(\m{f}(y_1))$ as witnessed by the substitution
$\SET{G \mapsto z.\m{f}(z)}$.
\end{example}

In the remainder, we often leave out the types of free
variables in examples when they can be inferred.

\section{Deterministic Higher-Order Pattern Rewrite Systems}
\label{sec:dhp}

Even though higher-order matching is decidable \cite{S09}, rewriting
with HRSs can be highly impractical.

\begin{example}
\label{exa:hrs}
Consider the sort $\m{a}$ and the function symbols
$\m{f}, \m{g} : \m{a} \to \m{a}$. The HRS
$\xR$ consisting of the single rule
$\m{f}(H(X)) \R \m{g}(X)$
is infinitely branching: For an arbitrary term $t : \m{a}$
we have $\m{f}(Y) \RbR \m{g}(t)$ by considering the
substitution $\theta_t = \SET{H \mapsto z.Y, X \mapsto t}$.
$\xR$ is not confluent as we can rewrite $\m{f}(Y)$
to two different normal forms, e.g.\ $\m{g}(X)$ and
$\m{g}(\m{g}(X))$. Moreover, the rewrite step
$\m{f}(H(X)) \R \m{g}(\m{f}(H(X)))$ using the substitution
$\SET{X \mapsto \m{f}(H(X)), H \mapsto z.H(X)}$
shows that $\xR$ is also not terminating.
\end{example}

For HRSs with nested free variables in left-hand sides like the one in
\exaref{hrs}, the variable condition
does not extend to rewriting, i.e., $\fv{r} \subseteq \fv{\ell}$ does not
imply $\fv{t} \subseteq \fv{s}$ whenever $s \Rb{\SET{\ell \R r}} t$.
Thus, in \cite{MN98}, Mayr and Nipkow only use HRSs for general
definitions and results, and concentrate their development on a
well-behaved subclass of HRSs.

\begin{definition}
A term $s$ is a \emph{pattern} if for all subterms of the form
$\vec{y}.X(\vec[m]{t})$ we have
$\SEt{\vec[m]{t}} \subseteq \SET{\vec{y}\cf}$ and
$|\SEt{\vec[m]{t}}| =  m$.
A \emph{pattern rewrite system} \textup{(PRS)} is a \textup{HRS} where
the left-hand sides of all rules are patterns. 
\end{definition}

\begin{example}
Consider the sort $\m{a}$, the function symbol
$\m{f} : \m{a} \to \m{a}$ as well as the variables
$F : (\m{a},\m{a},\m{a} \to \m{a}) \to \m{a}$ and $G : \m{a} \to \m{a}$.
The terms
\[
x,y,z.F(x,x,w.z(w)) \quad
x,y,z.F(x,\m{f}(x),w.z(w)) \quad
x,y,z.F(G(x),y,w.z(w))
\]
are not patterns but $x,y,z.F(y,x,w.z(w))$ is.
\end{example}

Patterns originate from Miller \cite{M91}.
The main motivation behind patterns is the fact that unification
of patterns is \emph{unitary}, i.e.,
unifiable patterns have most general unifiers.
This facilitates a definition of critical pairs for PRSs which closely
follows the original first-order definition \cite{MN98}.
The following example shows that PRS can easily model expressive systems
of computation.

\begin{example}
\label{exa:prs}
Consider the sort $\m{t}$ as well as the function symbols
$\m{abs} : (\m{t} \to \m{t}) \to \m{t}$
and $app : (\m{t}, \m{t}) \to \m{t}$. The PRS $\xR$ consisting of the two
rules
\begin{align*}
\m{app}(\m{abs}(z.F(z)),X) &\R F(X) & \m{abs}(z.\m{app}(X,z)) &\R X
\end{align*}
models $\beta$- and $\eta$-reduction in the untyped $\lambda$-calculus.
Given an instance of the $\eta$-reduction rule induced by a substitution
$\theta$, the side condition $z \notin \fv{\theta(X)}$ is implicit 
by the usage of capture-avoiding substitution.
\end{example}

In order to eliminate the problems inherent in \exaref{hrs},
one does not have to demand unitary unification of left-hand sides.
In fact, it is enough to contain matching to a finite problem.
Analogous to unitary unification, it is very convenient have
\emph{deterministic} matching, i.e., given a concrete term and position,
a rule is either applicable (with a unique substitution) or not. Since
pattern unification is unitary, matching with patterns is also
deterministic.
In this section, we explore a way of generalizing patterns such
that deterministic matching is preserved. As for patterns, this class of
terms will then serve as a basis for a new class of HRSs by restricting
the left-hand sides of rules accordingly.
The consequences for unification as well as critical pairs shall only
trouble us in the next section.

As we have already seen in \exaref{hrs}, nested free variables will not
be allowed in the left-hand sides of our rules. The following example
shows that this restriction is not enough.

\begin{example}
\label{exa:const}
Consider the sort $\m{a}$ as well the function symbols
$\m{f} : \m{a} \to \m{a}$ and $\m{c},\m{d} : \m{a}$.
With the HRS $\xR$ consisting of the single rule
$\m{f}(H(\m{c})) \R H(\m{d})$
there are two ways to rewrite the term $\m{f}(\m{c})$ at the root position.
Using $\SET{H \mapsto z.z}$ we obtain $\m{f}(\m{c}) \RbR \m{d}$
while $\SET{H \mapsto z.\m{c}}$ induces the rewrite step
$\m{f}(\m{c}) \RbR \m{c}$.
\end{example}

The underlying problem of \exaref{const} is that the argument $\m{c}$ of
$H$ can also be used in the substitution. If the argument contains a bound
variable, this is no longer possible due to capture-avoidance in
substitution application:

\begin{example}
\label{exa:captavoid}
Consider the sort $\m{a}$ as well as the function symbols
$\m{f} : (\m{a} \to \m{a}) \to \m{a}$,
$\m{g} : \m{a} \to \m{a}$ and $\m{c} : \m{a}$.
With the HRS $\xR$ consisting of the single rule
\[
\m{f}(z.H(\m{g}(z))) \R H(\m{c})
\]
the term $\m{f}(z.\m{g}(z))$ can only be rewritten to $\m{c}$.
Here, the employed substitution is $\SET{H \mapsto y.y}$.
\end{example}

To sum up, the class of terms we are looking for already has the following
constraints: Arguments of free variables
may not contain free variables and
have to contain at least one bound variable in order to force a matching
substitution to use it.
Until now, we have only looked at higher-order
variables which take one argument. The following example sheds light on
the conditions we have to impose in the general case.

\begin{example}
\label{exa:localcond}
Consider the sort $\m{a}$ as well as the function symbols
$\m{f} : (\m{a} \to \m{a}) \to \m{a}$,
$\m{g} : \m{a} \to \m{a}$ and $\m{c} : \m{a}$.
With the HRS $\xR$ consisting of the single rule
\[
\m{f}(z.H(y.z(y),z(\m{c}))) \R H(y.\m{g}(y),\m{c})
\]
there are two ways to rewrite the term $\m{f}(z.z(\m{c}))$ at the
root position: Using $\SET{H \mapsto y_1,y_2.y_1(\m{c})}$ we obtain
$\m{f}(z.z(\m{c})) \R \m{g}(\m{c})$ while
$\SEt{H \mapsto y_1,y_2.y_2}$ induces the rewrite step
$\m{f}(z.z(\m{c})) \R \m{c}$.
\end{example}

Intuitively, we have to demand that arguments of a free variable cannot
be constructed from the remaining arguments and function symbols.
This can be seen as a generalization of the
distinctness condition for Miller's patterns to our setting. We are now very
close to the definition of deterministic higher-order patterns (DHPs)
due to Yokoyama et al.~\cite{YHT04}. The only additional constraint
on DHPs is that the $\eta$-normal forms of the arguments of
free variables are not abstractions. This greatly simplifies the
complexity of dealing with DHPs as it allows us to replace the
vague description of ``not constructible'' by a straightforward
overapproximation which demands that arguments of free variables are not
higher-order subterms of each other. Furthermore, this constraint
also facilitates the matching algorithm
to run in linear time \cite{YHT04}.

\subsection{Definition}

We will now adapt the definition of DHPs to our definition of terms.
To that end, we introduce the notion of expanded terms
to represent the terms whose $\eta$-normal form is not an abstraction.
For these terms, the expanded subterm relation corresponds to
the higher-order subterm relation.

\begin{definition}
A term $\vec{x}.s$ is \emph{expanded} if
$\vec{x}.s = \vec{x},\vec[k]{y}.h(\vec[m]{s},\vec[k]{y}\cf)$ such that
$\bigl(\SET{h} \cup \fv{\vec[m]{s}}\bigr) \cap \SEt{\vec[k]{y}} =
\varnothing$. Consider a term $\vec{x}.s$ as well as an expanded term
$\vec{x}.t = \vec{x},\vec[k]{y}.h(\vec[m]{t},\vec[k]{y}\cf)$. We say
that $\vec{x}.t$ is an \emph{expanded subterm} of $\vec{x}.s$
$(\vec{x}.s \subtEeq \vec{x}.t)$ if there exist
$n' \geq n$ and $k$ terms written as
$\vec[n']{x}.t_{m+1},\dots,\vec[n']{x}.t_{m+k}$
such that $\vec{x}.s \subteq \vec[n']{x}.h(\vec[m+k]{t})$.
We write $\vec{x}.s \subtE \vec{x}.t$ if $\vec{x}.s \subtEeq \vec{x}.t$ and
$\vec{x}.s \neq \vec{x}.t$.
\end{definition}

\begin{example}
Consider the sort $\m{a}$ as well as the function symbols $\m{c} : \m{a}$,
$\m{g} : (\m{a},\m{a}) \to \m{a}$ and
$\m{f} : (\m{a},\m{a},\m{a} \to \m{a}) \to \m{a}$.
The terms $s_1 = \m{c}$, $s_2 = \m{g}(X,Y)$, $s_3 = x.\m{g}(\m{c},x)$ and
$x.s_4 = x,y.\m{f}(x,\m{c},z.y(z))$ are expanded
but $s_5 = x.\m{g}(\m{c},Y)$ and $x.s_6 = x,y.\m{f}(z.y(z),\m{c},z.y(z))$
are not. Consider $x.u = x.\m{f}(x,\m{c},y.\m{g}(x,y))$. We have
$x.u \subtEeq x.v_i$ for $1 \leq i \leq 3$ where
$x.v_1 = x,z_1,z_2,z_3.\m{f}(z_1,z_2,z_4.z_3(z_4))$,
$x.v_2 = x,z.\m{f}(x,\m{c},y.z(y))$ and
$x.v_3 = x,z.\m{g}(x,z)$.
On the other hand, $x.u \nsubtEeq x.v_4$ where
$x.v_4 = x,y.\m{f}(\m{c},x,z.y(z))$.
\end{example}

For expanded terms $\vec{x}.t$, $s \subteq \vec{x}.t$ implies
$s \subtEeq \vec{x}.t$.
Also note that in the definition of $\subtEeq$ we do
not have to demand that $x_{n+1},\dots,x_{n'}$ are fresh
as these variables occur as bound variables in $\vec{x}.s$
and we follow the convention that no variable occurs both
free and bound in a term.
We are now ready to give a formal definition of DHPs.

\begin{definition}[adapted from \cite{YHT04}]
\label{def:dhp}
A term $s$ is a
\emph{deterministic higher-order pattern} \textup{(DHP)} if the
following conditions hold for all subterms
$\vec[n]{y}.X(\vec[m]{t})$ and $1 \leq i \leq m$:
\begin{enumerate}[(i)]
\item \label{enu:dhp-var}
$\varnothing \neq \fv{t_i} \subseteq \SEt{\vec{y}}$,
\item \label{enu:dhp-exp}
$\vec{y}.t_i$ is an expanded term,
\item \label{enu:dhp-subt}
$\vec{y}.t_i \nsubtEeq \vec{y}.t_j$ whenever $i \neq j$.
\end{enumerate}
We refer to the terms $\vec[m]{t}$ as \emph{anchors} of
the \textup{DHP}.
\end{definition}

\begin{example}
Consider the sort $\m{a}$, the function symbols $\m{f} : \m{a} \to \m{a}$,
$\m{c} : \m{a}$ and
the variables $F : (\m{a},\m{a}) \to \m{a}$ and
$G : (\m{a},\m{a} \to \m{a}) \to \m{a}$.
The terms $x,y.F(y,x)$, $x,y.F(\m{f}(x),\m{f}(y))$ and
$x,y.F(x(y),x(\m{c}))$ are DHPs.
Only the first term is also
a pattern. On the other hand,
none of the terms $x,y.F(\m{c},x)$, $x,y.G(y,z.x(z,\m{c}))$,
$x,y.F(x,x)$, $x,y.F(\m{f}(x),x)$, and $x,y.G(x(y),z.x(z))$ is a
DHP because the first one violates condition \enuref{dhp-var}, the
second one violates condition \enuref{dhp-exp}
and the remaining terms violate condition \enuref{dhp-subt}.
\end{example}

\begin{theorem}[Theorem 3 from \cite{YHT04}]
\label{thm:matchUnique}
Let $s$ be a \textup{DHP} and $t$ an arbitrary term.
If $s$ matches $t$ then this is witnessed by exactly one substitution.
\end{theorem}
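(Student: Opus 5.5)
Uniqueness is only meaningful on $\fv{s}$: if $s\theta_1 = t = s\theta_2$, we will show $\theta_1 = \theta_2 \restr{\fv{s}}$. The argument is a structural induction on $s$ that keeps track of the bound-variable prefix. Precisely, we prove the following statement by induction on $s$. For every subterm $\vec{y}.u$ of a DHP at position $p$, if $(\vec{y}.u)\theta_1 = (\vec{y}.u)\theta_2$ (with the bound variables $\vec{y}$ fixed by $\alpha$-renaming), then $\theta_1(X) = \theta_2(X)$ for all $X \in \fv{u}\setminus\SEt{\vec{y}}$.

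\emph{Rigid head.} If the head of $u$ is a function symbol or a bound variable, write $u = h(\vec[m]{u})$. Hereditary substitution leaves $h$ in place, so $h(\vec[m]{u}\theta_1) = h(\vec[m]{u}\theta_2)$. Hence the arguments agree componentwise, and the induction hypothesis applies to each $u_i$ with the extended binder list. Free variables of $u$ are exactly those of the $u_i$, and the subterms of $u_i$ inherit the DHP conditions, so this case closes.

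\emph{Flexible head.} If $u = X(\vec[m]{t})$ with $X$ free, condition~\enuref{dhp-var} gives $\fv{t_i} \subseteq \SEt{\vec{y}}$, so no free variable is nested inside an anchor and $t_i\theta = t_i$. Thus only $\theta(X)$ needs to be determined. Write $\theta_j(X) = \vec[m]{z}.v_j$, so that $v_1\SET{\vec{z}\mapsto\vec{t}} = v_2\SET{\vec{z}\mapsto\vec{t}} =: w$. The core lemma to prove is \emph{injectivity of anchor instantiation}: if $\vec{t}$ satisfy \enuref{dhp-var}--\enuref{dhp-subt} and $v_1,v_2$ are canonical terms with $\fv{v_j}\cap\SEt{\vec{y}} = \varnothing$ (by capture-avoidance), then $v_1\SET{\vec{z}\mapsto\vec{t}} = v_2\SET{\vec{z}\mapsto\vec{t}}$ implies $v_1 = v_2$.

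We prove this lemma by induction on $v_1$, comparing heads. If both $v_1$ and $v_2$ have a head other than some $z_i$, the heads must coincide and we recurse into the arguments; bound variables introduced along the way simply extend the context. If $v_1$ has head $z_i$, then $w$ is $t_i$ with its trailing $\eta$-expanded arguments replaced by instances, so $w$ contains an expanded-subterm copy of $t_i$ at this position. This copy contains a variable of $\vec{y}$ by \enuref{dhp-var}. Since $v_2$ cannot mention $\vec{y}$, the occurrence must come from some $z_j$ in $v_2$ at or above this position. Using \enuref{dhp-exp}, the expanded form of $t_i$ and $t_j$ lets us compare the two decompositions position by position. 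If $j \neq i$, one of $\vec{y}.t_i$ and $\vec{y}.t_j$ would be an expanded subterm of the other, contradicting \enuref{dhp-subt}. Hence $j = i$ at the same position, and the arguments of $z_i$ (the $\eta$-expanded trailing slots) are compared inductively.

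The main obstacle is this head-comparison step. One difficulty is that a $z_j$-occurrence in $v_2$ may sit strictly above the corresponding position in $v_1$; here the non-abstraction property \enuref{dhp-exp} is essential to show that the image of $t_j$ is then $\subtEeq$-related to $t_i$. Another is handling $\beta$-reduction inside higher-order anchors $t_i$ such as $x(y)$, where the substituted arguments get absorbed. I would first carefully formalize the relationship between $\subtEeq$ and hereditary substitution, namely that $t_i$ applied to instances is an expanded instance of $t_i$, and then derive the contradiction from \enuref{dhp-subt}. Once $\theta_1(X) = \theta_2(X)$ is established, the outer induction concludes, and since matching $s$ against $t$ uses the root case, this yields the theorem.
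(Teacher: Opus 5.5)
The paper gives no proof of this statement; it is imported from \cite{YHT04}. So there is no in-paper argument to compare with, and your proof has to stand on its own. In outline it does. Treating one flexible occurrence at a time is legitimate: by (i) the anchors contain no free variables, so $\vec{t}\theta = \vec{t}$ and $\fv{X(\vec{t})} = \SET{X}$. The remaining claim, injectivity of $v \mapsto v\SEt{\vec{z} \mapsto \vec{t}}$ on canonical $v$ not mentioning $\vec{y}$, holds under (i)--(iii). Your route proves uniqueness without constructing the matcher, so it gives no matching procedure, but the statement only asks for uniqueness. In exchange, it checks directly that the paper's own reformulation via expanded terms and $\subtEeq$ suffices, and it shows where each condition enters:
(i) ensures every anchor leaves a $\vec{y}$-trace that only some $z_j$ can explain;
(ii) makes instantiation behave like first-order replacement;
(iii) rules out misaligned explanations.

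Two points in the head-comparison step need tightening.

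\emph{First, the $\beta$-reduction worry.} Write $t_i = \bar{w}.h_i(\bar{s}_i,\bar{w}\cf)$, as (ii) permits, and $\sigma = \SEt{\vec{z} \mapsto \vec{t}}$. Then $z_i(\bar{a})\sigma = h_i(\bar{s}_i,\bar{a}\sigma)$. This uses $h_i \notin \SEt{\bar{w}}$, $\fv{\bar{s}_i} \cap \SEt{\bar{w}} = \varnothing$ and the identity $(w\cf)\SET{w \mapsto a} = a$. No further redexes arise, so $\sigma$ just replaces $z_i$ by the fixed head-and-prefix $h_i(\bar{s}_i,\dots)$. The $\vec{y}$-variable you use must lie in this fixed part, not in the trailing instances $\bar{a}\sigma$. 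Such a variable exists because $\fv{t_i} \subseteq (\SET{h_i} \cap \xV) \cup \fv{\bar{s}_i}$.

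\emph{Second, ``hence $j = i$ at the same position'' does not follow from (iii).} Suppose $v_1$ has head $z_i$ and $v_2$ has a rigid head. That head must be the function symbol $h_i$, since $v_2$ has no head in $\vec{y}$. The responsible $z_j$ then lies in the argument of $v_2$ that is mapped onto some $s_{i,k}$, so $t_i$ properly contains $h_j(\bar{s}_j,\dots)$. Condition (iii) excludes $j \neq i$, but $j = i$ needs a separate argument. For instance, $\bar{s}_i$ is nonempty here, and the inner copy would make $s_{i,k}$ a proper subterm of itself, which is impossible by size.

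Finally, if you compare top-down, a $z_j$ strictly above the position of $z_i$ never occurs. That position would already have been the first mismatch, with the roles of $v_1$ and $v_2$ swapped, so this case needs no treatment of its own.
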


We are now ready to define a subclass of HRSs based on DHPs. A
\emph{deterministic higher-order pattern rewrite rule} is a rewrite rule
whose left-hand side is a DHP. A \emph{deterministic higher-order pattern
rewrite system} \textup{(DPRS)} is a set of deterministic higher-order
pattern rewrite rules. Since DHPs strictly include patterns, the
set of DPRSs strictly includes the set of PRSs.
To the best of our knowledge, DPRSs have not been
considered in the literature.

\subsection{Examples}

We conclude this section by showcasing three typical use cases of
DPRSs which cannot be directly achieved with PRSs. The most basic
one is the usage of DHPs to force the argument of a function
to be in one of a collection of fixed contexts.

\begin{example}
\label{exa:round}
Consider the sorts $\m{int}$, $\m{double}$ as well as the function
symbols $\m{id} : \m{int} \to \m{int}$,
$\m{fromInt} : \m{int} \to \m{double}$,
$\m{round} : \m{double} \to \m{int}$ and
$\m{app} : (\m{int} \to \m{double}, \m{double}) \to \m{double}$.
The DPRS $\xR$ consisting of the following rewrite rules
\begin{align*}
\m{app}(z.F(\m{id}(z),\m{fromInt}(z)),X) &\R F(\m{round}(X),X) \\
\m{round}(\m{fromInt}(Y)) &\R \m{id}(Y)
\end{align*}
makes use of DHPs in order to compute the application
of a function $g$ from $\m{int}$ to $\m{double}$ to a term $t$ of type
$\m{double}$. If the argument of $g$ is used as an $\m{int}$,
we have to cast $t$ to $\m{int}$ by applying $\m{round}$.
However, if the argument of $g$ is used as a $\m{double}$ (indicated
by wrapping it with $\m{fromInt}$), we can use $t$ directly.
Note that the usage of $z.\m{id}(z)$ instead of just $z.z$
in the left-hand side of the first rule is necessary since it
would not be a DHP otherwise.
\end{example}

The second typical use case of DPRSs is to check whether a given
higher-order expression can be matched by a DHP in order to
construct a function using higher-order matching.
We now present a DPRS modeling the
the motivating example in the original paper on DHPs \cite{YHT04}.
In particular, all computations in the original example can be
expressed by a single rewrite sequence.
In \secref{progtrans}, we will argue that the upcoming example
is just a glimpse into the applicability of DPRSs for
a large class of program transformations in functional programming.

\begin{example}
\label{exa:fusion}
Consider the sorts $\m{a}$, $\m{b}$ as well as the function
symbols given in \tabref{fusionfs}.
The DPRS $\xR$ consisting of the following rewrite rules
\begin{align*}
&\m{sum}(\m{nil}) \R \m{0} \\
&\m{sum}(\m{cons}(X,Y)) \R X + \m{sum}(Y) \\
&\m{comp}(z_3.\m{sum}(z_3),z_0.\m{foldr}_1(z_1,z_2.G(z_1,z_2),Y,z_0),Z)
\\
&\quad \R \m{check}(z_1,z_2.\m{sum}(G(z_1,z_2)),Y,Z) \\
&\m{check}(z_1,z_2.F(z_1,\m{sum}(z_2)),X,Y) \\
&\quad \R \m{foldr}_2(z_1,z_2.F(z_1,z_2),\m{sum}(X),Y)
\end{align*}
makes use of DHPs to encode a basic fusion transformation rule in
functional programming. In particular, the rewrite sequence
\begin{align*}
z_0.&\m{comp}(z_1.\m{sum}(z_1),
z_1.\m{foldr}_1(z_2,z_3.\m{cons}(z_2 \times z_2,z_3),\m{nil},z_1),z_0) \\
&\RbR z_0.\m{check}(z_1,z_2.\m{sum}(\m{cons}(z_1 \times z_1,z_2)),
\m{nil},z_0) \\
&\RbR z_0.\m{check}(z_1,z_2.z_1 \times z_1 + \m{sum}(z_2)),\m{nil},z_0) \\
&\RbR z_0.\m{foldr}_2(z_1,z_2.z_1 \times z_1 + z_2,\m{sum}(\m{nil}),z_0) \\
&\RbR z_0.\m{foldr}_2(z_1,z_2.z_1 \times z_1 + z_2,\m{0},z_0)
\end{align*}
fully captures the transformation described in the motivating example of
\cite{YHT04}: The composition of the $\m{sum}$ function with a
$\m{foldr}$ expression mapping a list of numbers to its squares can be
transformed into a $\m{foldr}$ expression which directly computes the sum
of squares of a given list of numbers. Here, $\m{check}$ is used
to model the side condition of the actual rewrite step from the left-hand
side of the third rule to the right-hand side of the fourth rule:
If the first argument of $\m{check}$ can be rewritten (by applying
the definition of $\m{sum}$) such that it is an instance of
the corresponding DHP in the left-hand side of the fourth rule,
the computation succeeds by using the function constructed in
the DHP matching process for the final $\m{fold}$ expression.
The transformation from conditional rewriting
to ordinary rewriting through the auxiliary function $\m{check}$
is called \emph{unraveling} \cite{M96}.
\end{example}

\begin{table}[t]
\centering
\caption{Signature for the DPRS $\xR$ in \exaref{fusion}.}
\label{tab:fusionfs}
\begin{tabular}{c@{\quad}c@{\qquad}c@{\quad}c}
symbol & type & symbol & type \\ \midrule
$\m{0}$ & $\m{a}$ &
$\m{foldr}_1$ & $((\m{a}, \m{b}) \to \m{b}, \m{b}, \m{b}) \to \m{b}$ \\
${+}, {\times}$ & $(\m{a}, \m{a}) \to \m{a}$ &
$\m{foldr}_2$ & $((\m{a}, \m{a}) \to \m{a}, \m{a}, \m{b}) \to \m{a}$ \\
$\m{nil}$ & $\m{b}$ &
$\m{comp}$ & $(\m{b} \to \m{a},\m{b} \to \m{b}, \m{b}) \to \m{a}$ \\
$\m{cons}$ & $(\m{a}, \m{b}) \to \m{b}$ &
$\m{check}$ & $((\m{a}, \m{b}) \to \m{a}, \m{b}, \m{b}) \to \m{a}$ \\
$\m{sum}$ & $\m{b} \to \m{a}$
\end{tabular}
\end{table}

The third typical use case of DPRSs is the extraction of some
common context in the left-hand side. The following example
is taken from \cite[Section 5.1.2.2]{vdP96}. A variant of
it is also included in Nipkow's original paper on
higher-order critical pairs \cite{N91} where it is already stated
that an extension of the theory to accommodate such rewrite
systems is ``highly desirable''.

\begin{example}
\label{exa:sdu}
Consider the sorts $\m{a}$, $\m{b}$, $\m{u}$ as well as the function
symbols $\m{inl} : \m{a} \to \m{u}$, $\m{inr} : \m{b} \to \m{u}$ and
$\m{case}_\iota : (\m{u},\m{a} \to \iota,\m{b} \to \iota) \to \iota$
where $\iota \in \SET{\m{a},\m{b},\m{u}}$. From the rule schemas
\begin{align*}
\m{case}_\iota(\m{inl}(X),z_1.F(z_1),z_2.G(z_2)) &\R F(X) \\
\m{case}_\iota(\m{inr}(Y),z_1.F(z_1),z_2.G(z_2)) &\R G(Y) \\
\m{case}_\iota(Z,z_1.H(\m{inl}(z_1)),z_2.H(\m{inr}(z_2))) &\R H(Z)
\end{align*}
we obtain the DPRS $\xR$ consisting of 9 rules which models
a case distinction operation on a surjective disjoint union datatype
$\m{u} = \m{a} \uplus \m{b}$. Here, $\m{inl}$ ($\m{inr}$) lifts terms of
type $\m{a}$ ($\m{b}$) into the disjoint union datatype $\m{u}$.
\end{example}

\section{Critical Pairs}
\label{sec:cps}

\subsection{Minimal Complete Sets of Unifiers}

The crucial ingredient of any definition of critical pairs is a
corresponding unification procedure. As opposed to
Miller's pattern unification \cite{M91} or FCU \cite{LM22}, unification of
DHPs is not unitary. As a consequence, critical pairs in DPRSs are not
computed from most general unifiers, but from minimal complete sets of
unifiers.

\begin{definition}
A \emph{minimal complete set of unifiers} $U$ of two \textup{DHP}s $s$
and $t$ satisfies the following conditions:
\begin{enumerate}[(i)]
\item
$\dom{\theta} \subseteq \fv{s} \cup \fv{t}$ for all $\theta \in U$,
\item
$s\theta = t\theta$ for all $\theta \in U$,
\item
if $s\delta = t\delta$ then
$\theta \subsumes \delta\restr{\fv{s} \cup \fv{t}}$
for some $\theta \in U$,
\item
if $\theta_1,\theta_2 \in U$ and $\theta_1 \neq \theta_2$ then
$\theta_1 \perp \theta_2$.
\end{enumerate}
\end{definition}

\begin{example}[taken from \cite{YHT04}]
Consider the sort $\m{a}$, the function symbol $\m{f} : \m{a} \to \m{a}$
as well as the variables $F,G : (\m{a},\m{a}) \to \m{a}$. The DHPs
$x,y.F(\m{f}(x),\m{f}(y))$ and $x,y.\m{f}(G(y,x))$ admit the following
minimal complete set of unifiers
\begin{itemize}
\item
$\SEt{F \mapsto z_1,z_2.z_1, G \mapsto z_1,z_2.z_2}$,
\item
$\SEt{F \mapsto z_1,z_2.z_2, G \mapsto z_1,z_2.z_1}$, and
\item
$\SET{F \mapsto z_1,z_2.\m{f}(Z(z_1,z_2)),
G \mapsto z_1,z_2.Z(\m{f}(z_2),\m{f}(z_1))}$
\end{itemize}
where $Z$ is a fresh variable.
\end{example}

We have presented a sound and complete inference system for computing minimal complete sets
of unifiers of DHPs in \cite{NM26a}.
Its most notable difference from the standard inference system for full
higher-order unification \cite{SG89} is that flex-flex pairs can be
solved in a most general way. Unfortunately, minimal complete sets of
DHP unifiers may still be infinite. For critical pair computation,
we only need unification of DHPs which do not share free variables. The
following example shows that minimal complete sets of unifiers can be
infinite even in this special setting.

\begin{example}
Consider the sort $\m{a}$, the function symbols
$\m{f} : \m{a} \to \m{a}$ and $\m{g} : (\m{a},\m{a} \to \m{a}) \to \m{a}$
as well as the variables $F, G : \m{a} \to \m{a}$. The DHPs
$\m{g}(F(x),x.F(\m{f}(x)))$ and $\m{g}(G(x),x.\m{f}(G(x)))$ admit the
infinite minimal complete set of unifiers $\SET{\theta_k \mid k \geq 0}$
where $\theta_k = \SET{F \mapsto x.\m{f}^k(x), G \mapsto x.\m{f}^k(x)}$.
Here, $\m{f}^0(t) = t$ and $\m{f}^{n+1} = \m{f}(\m{f}^n(t)).$
\end{example}

In the following novel result we show that minimal complete sets of
unifiers are guaranteed to be finite if we additionally demand linearity
of the involved terms. This is relevant as it ensures that finite
left-linear DPRSs have finitely many critical pairs.

\begin{restatable}{theorem}{unifdecidable}
\label{thm:unifdecidable}   
For linear \textup{DHPs} $s$ and $t$ with
$\fv{s} \cap \fv{t} = \varnothing$,
minimal complete sets of unifiers are finite and computable.
\end{restatable}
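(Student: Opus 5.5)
The plan is to use the sound and complete inference system for DHP unification from \cite{NM26a}, which already produces minimal complete sets of unifiers, and to show that for linear DHPs $s$ and $t$ with $\fv{s} \cap \fv{t} = \varnothing$ every derivation is finite and the derivation tree is finitely branching. König's lemma then gives finitely many leaves, hence a finite complete set of unifiers. Minimality is obtained by filtering out subsumed or non-orthogonal candidates: subsumption between the finitely many computed unifiers reduces to DHP-style matching, which is decidable by \thmref{matchUnique} and the linear-time matching algorithm of \cite{YHT04}. Since the system of \cite{NM26a} only returns minimal complete sets, any minimal complete set is, up to variants, this one, so finiteness carries over to every minimal complete set.

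The key step is termination, and I will carry it out by an invariant plus a measure. The invariant is that throughout the derivation the unification problem stays \emph{linear and variable-disjoint across pairs}: each free variable occurs at most once in the whole multiset of equations, and every flex term is still a DHP. This holds initially because $s$ and $t$ are linear and share no variables.

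The rules then behave as follows.
\begin{itemize}
\item Decomposition of rigid-rigid pairs splits the problem into subterms and preserves the invariant.
\item Flex-flex pairs are solved in a most general way; by linearity the solved variable occurs nowhere else, so no instantiation propagates.
\item Flex-rigid steps (imitation and the DHP-specific projection onto anchors) bind a variable $X$ that, by linearity, occurs only in the current equation. The binding is therefore applied only to that equation, and the fresh variables it introduces each occur once.
\end{itemize}
With this invariant, I would define the measure as the multiset of sizes of the rigid sides, that is, the total number of function-symbol and bound-variable occurrences not under a free variable. Each imitation or projection step strictly decreases this measure, because the rigid side is consumed and the instantiated flex side never grows a new rigid part elsewhere; this is exactly what linearity buys us. The infinite example preceding the theorem fails precisely because $F$ and $G$ occur twice, so the same binding is replayed in another equation and regenerates the problem. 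Finite branching is clear, since each flex-rigid pair admits only one imitation and at most $m$ projections onto the $m$ anchors.

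The main obstacle will be the projection steps that are specific to DHPs, where a flex term $X(\vec{t})$ is matched against a rigid term containing an anchor $t_i$ as an expanded subterm. Here I must check two things. First, the resulting equations are again between DHPs, with anchors satisfying conditions \enuref{dhp-var}--\enuref{dhp-subt} of \defref{dhp}. Second, the measure genuinely decreases even though anchors are non-trivial terms that can be compared against rigid material. For the second point, I would first try showing that a projection removes an entire expanded subterm from the rigid side, which makes the measure drop by at least the size of the anchor, and that the anchors left inside fresh variables contribute no rigid symbols.

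Computability then follows because each rule is effectively applicable, the tree is finite, and the final minimality filter is decidable.
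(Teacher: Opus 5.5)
Your skeleton is the paper's: linearity is invariant under the inference steps of \cite{NM26a}, derivations terminate on linear problems, and finite branching gives a finite, computable output. Linearity makes the flex-flex rule for equal heads inapplicable, and imitation and projection alter only the one equation containing the instantiated variable. The genuine gap is the termination measure.

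Counting function symbols and bound-variable occurrences not below a free variable does not decrease under projection. Projection exposes the anchor, which was shielded by the free variable and hence uncounted, and the anchor can be larger than the rigid side it faces. Take $\m{g} : \m{a} \to \m{a}$ and $X, Z : \m{a} \to \m{a}$. The linear, variable-disjoint DHPs $y.X(\m{g}(\m{g}(\m{g}(y))))$ and $y.\m{g}(Z(y))$ have the unifier $\{X \mapsto x.x,\ Z \mapsto z.\m{g}(\m{g}(z))\}$. It is not an instance of the imitation binding $X \mapsto x.\m{g}(X'(x))$, so a complete procedure must take the projection branch. There the equation becomes $y.\m{g}(\m{g}(\m{g}(y))) \approx y.\m{g}(Z(y))$, with five unshielded rigid occurrences instead of one. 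After decomposition, the remaining flex-rigid pair $y.\m{g}(\m{g}(y)) \approx y.Z(y)$ has a rigid side of size three that comes entirely from the anchor.

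So your fallback claim fails, namely that a projection removes an expanded subterm from the rigid side and lowers the measure by the anchor's size. The surplus of the anchor is matched against a free variable of the other term and becomes new rigid material for it. If imitation merely instantiates and decomposition is a separate step, imitation also adds a rigid head and does not decrease your count either.

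The missing idea is to attach the measure to the free variables rather than to rigid material. Exposed anchor material contains no free variables by condition (i) of Definition~\ref{def:dhp}, so it can then never increase the measure. For linear $E$, the paper assigns to each free variable $X$ the cost $\m{c}(X)$, the height $\m{h}(t)$ of the term opposite $X$:
\begin{itemize}
\item the other side $t$, if $X$ heads a side of an equation;
\item the $i$-th argument $t$ of the other side, if $X$ heads the $i$-th argument of one side of an equation whose two sides have the same head;
\item $\infty$ otherwise.
\end{itemize}
Linearity makes this well defined. The paper then compares pairs $(M,n)$ lexicographically, where $M$ is the multiset of costs and $n$ the total size of the equations. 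The first component uses the multiset extension of the order on $\mathbb{N} \cup \{\infty\}$ that extends $>$ by $\infty \succ k$; the second uses $>$ on $\mathbb{N}$.

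The rules then behave as follows.
\begin{itemize}
\item Removal and decomposition raise no cost, since costs can only drop from $\infty$ to finite values, and they decrease $n$.
\item Variable elimination and flex-flex solving delete an equation together with its variables.
\item Imitation and projection replace $X$, of cost $\m{h}(t)$, by fresh variables whose costs are heights of proper subterms of $t$. The costs of the variables of $t$ can only drop from $\infty$.
\end{itemize}
Any growth of $n$ caused by an exposed anchor is harmless, because $M$ has strictly decreased. With this measure in place of yours, the rest of your argument goes through.

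Separately, your post-hoc minimality filter is unnecessary, since \cite{NM26a} already returns minimal complete sets. As described, it would even lose completeness if it discarded one of two incomparable but non-orthogonal unifiers.
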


\subsection{Definition}

For convenience, we will extract critical pairs from overlaps which
carry all necessary information for computing the critical pairs.
We obtain a preliminary definition of overlaps of DPRSs
by replacing most general unifiers with minimal complete
sets of unifiers in the definition of overlaps for PRSs \cite{MN98}.
For each overlap, we will obtain one critical pair per
unifier in the minimal complete set of unifiers.

\begin{definition}[preliminary]
\label{def:overlapPreliminary}
An \emph{overlap} of a \textup{DPRS} $\xR$ is a sextuple
\[\langle \ell_1 \R r_1,p,(\vec{x}),\delta,U,\ell_2 \R r_2 \rangle\]
satisfying the following properties:
\begin{enumerate}[(i)]
\item
$\ell_1 \R r_1, \ell_2 \R r_2 \in \xR$,
\item \smallskip
$p \in \posf{\ell_2}$, $\bvp{\ell_2}{p} = \vec{x}$ and $\delta$ is an
$\vec{x}$-lifter of $\ell_1$ away from $\fv{\ell_2}$,
\item \smallskip
$U \neq \varnothing$ is a minimal complete set of unifiers of
$\vec{x}.\ell_1\delta$ and $\ell_2|_p$,
\item \smallskip
if $p = \epsilon$ then $\ell_1 \R r_1$ and $\ell_2 \R r_2$ are not
variants.
\end{enumerate}
Given an overlap
$\langle \ell_1 \R r_1,p,\vec{x},\delta,U,\ell_2 \R r_2 \rangle$
of a \textup{DPRS} $\xR$ and $\theta \in U$ we obtain the
\emph{critical peak}
\[
\ell_2\theta[(\vec{x}.r_1\delta)\theta]_p \Lab{\xR}{p}
\ell_2\theta[(\vec{x}.\ell_1\delta)\theta]_p =
\ell_2\theta \Rab{\xR}{\epsilon} r_2\theta
\]
which induces the \emph{critical pair}
$\ell_2\theta[(\vec{x}.r_1\delta)\theta]_p \approx r_2\theta$.
\end{definition}

We start by illustrating the previous definition on a well-known PRS.

\begin{example}
\label{exa:prscp}
The PRS $\xR$ from \exaref{prs} contains two overlaps. The overlap
\[
\langle \m{abs}(y.\m{app}(X,y)) \R X,1,(),\delta_1,U_1,
\m{app}(\m{abs}(z.F(z)),X) \R F(X) \rangle
\]
where $\delta_1 = \SET{X \mapsto Y}$ and
$U_1 = \SET{F \mapsto w.\m{app}(Y,w)}$
gives rise to the critical peak
\[
\m{app}(Y,X) \Lab{\xR}{1} \m{app}(\m{abs}(z.\m{app}(Y,z)),X)
\Rab{\xR}{\epsilon} \m{app}(Y,X)
\]
while the overlap
\[
\langle \m{app}(\m{abs}(y.F(y)),X) \R F(X),1,z,\delta_2,U_2,
\m{abs}(z.\m{app}(X,z)) \R X \rangle
\]
where $\delta_2 = \SET{F \mapsto w.G(z,w),X \mapsto Z(z)}$
and
\begin{align*}
U_2 &= \SET{X \mapsto \m{abs}(y.H(y)), G \mapsto y_1,y_2.H(y_2),
Z \mapsto y.y}
\end{align*}
gives rise to the following critical peak:
\[
\m{abs}(z.H(z)) \Lab{\xR}{1} \m{abs}(z.\m{app}(\m{abs}(y.H(y)),z))
\Rab{\xR}{\epsilon} \m{abs}(y.H(y))
\]
The corresponding critical pairs are trivially joinable, so $\xR$ (and
therefore the untyped $\lambda$-calculus) is locally confluent. 
\end{example}

\begin{example}
\label{exa:overlapPreliminary}
Consider the sort $\m{a}$
as well as the function symbols
$\m{c} : (\m{a} \to \m{a}) \to \m{a}$, $\m{d} : \m{a}$ and
$\m{f}, \m{g} : \m{a} \to \m{a}$. The
DPRS $\xR$ consisting of the rewrite rules
\begin{align*}
1\colon \quad \m{f}(\m{g}(X)) &\R \m{f}(X) &
2\colon \quad \m{c}(y.Z(\m{g}(y))) &\R Z(\m{d})
\end{align*}
does not have critical pairs according to \defref{overlapPreliminary}.
However, the local peak
\[
\m{c}(y.\m{f}(y)) \Lab{\xR}{1} \m{c}(y.\m{f}(\m{g}(y)))
\Rab{\xR}{\epsilon} \m{f}(\m{d})
\]
is not joinable. Hence, joinability of critical pairs as defined in
\defref{overlapPreliminary} does not imply local confluence of DPRSs.
\end{example}

As already observed by Hamana \cite{H19}, one also has to consider
overlaps at variable positions when going beyond patterns.
To the best of our knowledge,
Hamana provides the sole existing definition of critical
pairs in such a setting, albeit for the restricted class of
second-order systems. We will now show that
Hamana's confluence result \cite[Theorem 7.11]{H19} is actually unsound.

\begin{example}
\label{exa:hamana1}
Recall the DPRS $\xR$ from \exaref{overlapPreliminary}.
According to Hamana's definition, the non-joinable local
peak exhibited in the example is a critical peak which
gives rise to the only critical pair
$\m{c}(y.\m{f}(y)) \approx \m{f}(\m{d})$ of $\xR$.
Following the same definition, the
DPRS $\xR' = \xR \cup \SET{\m{c}(y.\m{f}(y)) \R \m{f}(\m{d})}$
still only has one critical pair which is now joinable. Hence, according
to Hamana, $\xR'$ is locally confluent but actually this is not the
case as now the following local peak is not joinable:
\[
\m{c}(y.\m{f}(\m{f}(y))) \Lab{\xR'}{11} \m{c}(y.\m{f}(\m{f}(\m{g}(y))))
\Rab{\xR'}{\epsilon} \m{f}(\m{f}(\m{d}))
\]
\end{example}

The main challenge of establishing a critical
pair lemma for DPRSs is to give a complete characterization of
overlaps at variable positions which yields at most one critical pair
per unifier in the minimal complete set of unifiers. In the
upcoming example, we will gradually refine the notion of variable overlap
until we are ready to give a correct definition.

\begin{example}
\label{exa:varOverlap}
Recall once more the DPRS $\xR$ from \exaref{overlapPreliminary}.
As expected, we obtain
$\m{c}(y.\m{f}(\m{g}(y)))$
by unifying $y.Z(\m{g}(y))$ and
the lifted term $y.\m{f}(\m{g}(X(y)))$
with the substitution
$\theta = \SET{Z \mapsto z.\m{f}(z), X \mapsto z.z}$.\footnote{%
Note that due to the global restriction, FCU \cite{LM22} is not
applicable here.}
In order to resolve the issue in \exaref{hamana1},
it is natural to wrap the result of the unification
into a fresh variable $X : \m{a} \to \m{a}$ which can be instantiated with
the function corresponding to the concrete context of the actual rewrite
step in any given instance of $y.Z(\m{g}(y))$. Hence, we arrive at
\[
\m{c}(y.X(\m{f}(y))) \Lb{\xR} \m{c}(y.X(\m{f}(\m{g}(y)))) \Rb{\xR}
X(\m{f}(\m{d}))
\]
as our new candidate for the critical peak corresponding to the
overlap. Still, there is the non-joinable local peak
\[
\m{c}(y.F(\m{f}(y),\m{g}(y))) \Lab{\xR}{11}
\m{c}(y.F(\m{f}(\m{g}(y)),\m{g}(y))) \Rab{\xR}{\epsilon}
F(\m{f}(\m{d}),\m{d})
\]
caused by the same overlap. Here, the problem is that
$X$ cannot reconstruct the used context since
it may only use the bound variable $y$ when it is wrapped inside an
$\m{f}$. Hence, in general, we have to supply
the original arguments of $Z$ to $X$.
We arrive at the following improved candidate for the critical peak
corresponding to the overlap:
\[
\m{c}(y.Y(\m{f}(y),\m{g}(y))) \Lab{\xR}{11}
\m{c}(y.Y(\m{f}(\m{g}(y)),\m{g}(y))) \Rab{\xR}{\epsilon}
Y(\m{f}(\m{d}),\m{d})
\]
The next problem is that $\SET{\theta}$ is actually not a minimal complete
set of unifiers of $y.Z(\m{g}(y))$ and $y.\m{f}(\m{g}(X(y)))$.
There is also the unifier
\[
\SET{Z \mapsto z.\m{f}(\m{g}(H(z))), X \mapsto z.H(\m{g}(z))}
\]
but the corresponding local peak
\[
\m{c}(y.Y(\m{f}(H(\m{g}(y))),\m{g}(y))) \Lab{\xR}{11}
\m{c}(y.Y(\m{f}(\m{g}(H(\m{g}(y)))),\m{g}(y))) \Rab{\xR}{\epsilon}
Y(\m{f}(\m{g}(H(\m{d}))),\m{d})
\]
can always easily be joined since the unifier contains an instance of the
left-hand side of rule 1:
\[
\m{c}(y.Y(\m{f}(H(\m{g}(y))),\m{g}(y))) \RbR Y(\m{f}(H(\m{d})),\m{d}) \LbR
Y(\m{f}(\m{g}(H(\m{d}))),\m{d})
\]
Hence, we can safely ignore this unifier.
The local peak
\[
\m{c}(y.Y(G(\m{d},\m{g}(y)),\m{g}(y)))
\Lab{\xR}{11}
\m{c}(y.Y(\m{c}(z.G(\m{g}(z),\m{g}(y))),\m{g}(y)))
\Rab{\xR}{\epsilon}
Y(\m{c}(z.G(\m{g}(z),\m{d})),\m{d})
\]
corresponding to the variable
overlap between rule 2 and itself at position 1 can be safely ignored
for the same reason:
\[
\m{c}(y.Y(G(\m{d},\m{g}(y)),\m{g}(y))) \RbR Y(G(\m{d},\m{d}),\m{d}) \LbR
Y(\m{c}(z.G(\m{g}(z),\m{d})),\m{d})
\]
Note that superfluous variable overlaps between rules and
themselves are a general phenomenon.
Next, suppose we extend $\xR$ with the sort $\m{b}$,
function symbols $\m{e} : \m{b} \to \m{a}$ and $\m{h} : \m{a} \to \m{b}$
as well as the rule
\[
3\colon\quad \m{h}(\m{g}(X)) \R \m{h}(X)
\]
and consider the non-joinable local peak
\[
\m{c}(y.\m{e}(\m{h}(y))) \Lab{\xR}{11} \m{c}(y.\m{e}(\m{h}(\m{g}(y))))
\Rab{\xR}{\epsilon} \m{e}(\m{h}(\m{d}))
\]
caused by a variable overlap of rule 3 with rule 2 at position 1.
Unifying $y.Z(\m{g}(y))$  and the lifted term $y.\m{h}(\m{g}(X(y)))$
is not possible because the types do not match. Hence, before
attempting unification for a variable overlap, we have to replace
the variable by a fresh variable of appropriate type
in order to account for all possible overlaps
at deeper positions in the original variable.
Finally, suppose we extend $\xR$ with the function symbols
$\m{j} : (\m{a},\m{a}) \to \m{a}$ and
$\m{i}: (\m{a} \to \m{a}) \to \m{a}$
as well as the following rule:
\[
4\colon\quad \m{j}(\m{g}(X_1),X_2) \R X_2
\]
The non-joinable local peak
\[
\m{c}(y.\m{i}(z.z)) \Lab{\xR}{11} \m{c}(y.\m{i}(z.\m{j}(\m{g}(y),z)))
\Rab{\xR}{\epsilon} \m{i}(z.\m{j}(\m{d},z))
\]
caused by a variable overlap of rule 4 with rule 2 at position 1 is
not an instance of the corresponding critical peak
\[
\m{c}(y.Y(X_2,\m{g}(y))) \Lab{\xR}{}
\m{c}(y.Y(\m{j}(\m{g}(y),X_2),\m{g}(y))) \Rab{\xR}{\epsilon}
Y(\m{j}(\m{d},X_2),\m{d})
\]
because the first argument of $Y$ cannot be made functionally dependent on
the bound variable $z$ introduced in the instance of $Y$. This can be
resolved by using the following \emph{lifted} critical peak
\[
\m{c}(y.Y'(x.X_2(x),\m{g}(y))) \Lab{\xR}{}
\m{c}(y.Y'(x.(\m{j}(\m{g}(y),X_2(x)),\m{g}(y)))) \Rab{\xR}{\epsilon}
Y'(x.\m{j}(\m{d},X_2(x)),\m{d})
\]
as witnessed by the substitution
$\SET{Y' \mapsto y_1,y_2.\m{i}(z.y_1(z)), X_2 \mapsto y.y}$.
\end{example}

The upcoming definition of overlaps extends our preliminary
definition by an additional position $q$, a substitution
$\gamma$ as well as a lifting bound variable context $\vec[l]{w}$
in order to give a unified presentation of overlaps.
Intuitively, if $p \in \posf{\ell_2}$ then $q = \epsilon$,
$\gamma = \varnothing$, $l = 0$ and the definition coincides with
\defref{overlapPreliminary}. Otherwise we set
$q = 1$ and use $\gamma$ and $\vec[l]{w}$ to model the general
case of a variable overlap. Note that $\vec[l]{w}$ is not determined
by the rules but by the concrete local peak under consideration.

\begin{definition}
\label{def:overlap}
An \emph{overlap} of a \textup{DPRS} $\xR$ is a nonuple
\[
\langle \ell_1 \R r_1,p,q,(\vec{x}),(\vec[l]{w}),
\delta,\gamma,U,\ell_2 \R r_2 \rangle
\]
satisfying the following properties:
\begin{enumerate}[(i)]
\item
$\ell_1 \R r_1, \ell_2 \R r_2 \in \xR$,
\item \smallskip
$p \in \pos{\ell_2}$, $\bvp{\ell_2}{p} = \vec{x}$ and $\delta$ is an
$\vec{x},\vec[l]{w}$-lifter of $\ell_1$ away from
$\fv{\ell_2\gamma}$,
\item \smallskip
$U' \neq \varnothing$ is a minimal complete set of unifiers of
$\vec{x},\vec[l]{w}.\ell_1\delta$ and $\ell_2\gamma|_{pq}$,\footnote{%
By definition, $\ell_2\gamma|_{pq}$ is a DHP, and
$\vec{x},\vec[l]{w}.\ell_1\delta$ is a DHP because being a DHP is
preserved by the application of a lifter.}
\item \smallskip
\label{enu:overlap-iv}
either (a) $q = \epsilon$, $l = 0$, $\gamma = \varnothing$,
$p \in \posf{\ell_2}$ and $U = U'$, or (b) $q = 1$ and
\begin{itemize}
\item \smallskip
$\ell_2|_p = \vec{x}.Y(\vec[m]{s})$,
\item \smallskip
$\gamma = \SET{Y \mapsto \vec[m]{y}.Y''(\vec[l]{w}.Y'(\vec[m]{y}\cf
,\vec[l]{w}\cf),\vec[m]{y}\cf)}$
with
$Y' : (\vec[m]{\sigma},\vec[l]{\tau}) \to b$ and
$Y'' : ((\vec[l]{\tau}) \to b,\vec[m]{\sigma}) \to a$
fresh variables where
$Y : (\vec[m]{\sigma}) \to a$ and $\ell_1 : b$,
\item \smallskip
$U = U' \setminus U''$ such that $U''$ is the set of all $\theta \in U'$
for which at least one of the following holds:
\begin{enumerate}[(1)]
\item \smallskip
 $\theta(Y') = \vec[m]{z},\vec[l]{w}.z_i\cf$ for some
$1 \leq i \leq m$,
\item \smallskip
$\theta(Y') = \vec[m]{z},\vec[l]{w}.\ell_1\mu$ for some $\mu$,
\end{enumerate}
\end{itemize}
\item \smallskip
if $p = \epsilon$ then $\ell_1 \R r_1$ and $\ell_2 \R r_2$ are not
variants.
\end{enumerate}
\end{definition}

Condition \enuref{overlap-iv}(b)(1)
makes sure that overlaps at anchors
are not considered again as variable overlaps.
Furthermore, as explained in \exaref{varOverlap},
condition \enuref{overlap-iv}(b)(2)
rules out trivial variable overlaps where $\theta(Y')$ is an
``instance'' of $\ell_1$.
Note that variable overlaps are always trivial
if the occurrence of $Y$ at position $p$ in $\ell_2$ only has bound
variables as arguments. In particular, this holds for Miller's patterns.
Hence, the presented definition of overlaps is a conservative extension
of the corresponding definition for PRSs. If
$\langle \ell_1 \R r_1,p,q,(\vec{x}),(\vec[l]{w}),\delta,\gamma,U,
\ell_2 \R r_2 \rangle$ is an overlap then
$\bvp{\ell_2\gamma\theta}{pq} = \vec{x},\vec[l]{w}$.
The upcoming result ensures that there exists a source term for
overlaps.

\begin{restatable}{lemma}{overlapequalsource}
\label{lem:overlapEqualSource}
If
$\langle \ell_1 \R r_1,p,q,(\vec{x}),(\vec[l]{w}),\delta,\gamma,U,
\ell_2 \R r_2 \rangle$ is an overlap then $\ell_2\gamma\theta =
\ell_2\gamma\theta[(\vec{x},\vec[l]{w}.\ell_1\delta)\theta]_{pq}$
for all $\theta \in U$.
\end{restatable}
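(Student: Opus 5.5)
The plan is to reduce the claim to a statement about how substitution application interacts with subterm replacement, and then use condition (iii) of \defref{overlap}, i.e.\ that every $\theta \in U \subseteq U'$ unifies $\vec{x},\vec[l]{w}.\ell_1\delta$ and $\ell_2\gamma|_{pq}$. The key auxiliary fact I would establish first is a commutation property: if $pq \in \pos{\ell_2\gamma}$ and no variable of $\dom{\theta}$ is bound above $pq$, then $pq \in \pos{\ell_2\gamma\theta}$ and $(\ell_2\gamma\theta)|_{pq} = (\ell_2\gamma|_{pq})\theta$. The hypothesis on bound variables holds by the convention that no variable is both free and bound, since $\dom{\theta} \subseteq \fv{\ell_1\delta} \cup \fv{\ell_2\gamma}$. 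Given this, we obtain
\[
(\ell_2\gamma\theta)|_{pq} = (\ell_2\gamma|_{pq})\theta = (\vec{x},\vec[l]{w}.\ell_1\delta)\theta,
\]
and replacing a subterm by itself leaves the term unchanged, i.e.\ $s[s|_r]_r = s$. This gives the claim.

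The commutation property needs a case split along the two cases of \defref{overlap}\enuref{overlap-iv}. In case (a), $pq = p \in \posf{\ell_2}$ and $\gamma = \varnothing$, so every position on the path from the root to $p$ is functional. I would prove by induction on $p$ that $(s\theta)|_p = s|_p\theta$ whenever all proper prefixes of $p$ are functional positions of $s$. At such a position the head is a function symbol $f \notin \dom{\theta}$, so the clause $f(\vec{t})\theta = f(\vec{t}\theta)$ applies and positions are preserved. Bound-variable prefixes are handled by the abstraction clause of substitution application together with $\alpha$-renaming, which is harmless because subterms carry their binders $\bvp{\cdot}{\cdot}$.

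In case (b), $\ell_2|_p = \vec{x}.Y(\vec[m]{s})$, and I would first compute $\ell_2\gamma$ at $p$. The prefixes of $p$ are functional in $\ell_2$, because left-hand sides are DHPs, so free variables do not occur nested above other free variables. Moreover, variable positions have no proper extensions that are themselves positions above $p$ except through arguments, and a free variable's arguments contain no free variables by condition \enuref{dhp-var}; hence the path to $p$ only passes through function symbols. By the same induction as before, $(\ell_2\gamma)|_p = \vec{x}.Y''(\vec[l]{w}.Y'(\vec[m]{s}\gamma,\vec[l]{w}\cf),\vec[m]{s}\gamma)$, and $\vec[m]{s}\gamma = \vec[m]{s}$ since these arguments contain only bound variables. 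So position $1$ below $p$ exists, its subterm is $\vec{x},\vec[l]{w}.Y'(\ldots)$, and $pq = p1 \in \pos{\ell_2\gamma}$.

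The main obstacle is the final step in case (b): the head at $p$ is now the free variable $Y''$, which may lie in $\dom{\theta}$, so the naive commutation fails at position $p$ itself. Indeed $(\ell_2\gamma\theta)|_{p1}$ need not even exist when $\theta(Y'')$ discards or duplicates its first argument. I would handle this by reading the claimed equation as well-defined only when $pq \in \pos{\ell_2\gamma\theta}$, and argue that the unifiers in $U'$ never instantiate $Y''$. The justification is that $Y''$ occurs only in $\ell_2\gamma$ outside the unified subterm $\ell_2\gamma|_{p1}$, and condition (i) of minimal complete sets gives $\dom{\theta} \subseteq \fv{\vec{x},\vec[l]{w}.\ell_1\delta} \cup \fv{\ell_2\gamma|_{pq}}$, which excludes $Y''$. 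Consequently $Y'' \notin \dom{\theta}$, the clause $h(\vec{t})\theta = h(\vec{t}\theta)$ applies at $p$, and the induction goes through one more step to $p1$. This completes case (b) exactly as in case (a).
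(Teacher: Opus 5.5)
Your strategy is the paper's: reduce the claim to the commutation $(\ell_2\gamma\theta)|_{pq} = (\ell_2\gamma|_{pq})\theta$, then use that $\theta$ unifies the two terms and that replacing a subterm by itself is the identity.

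The gap is in case (a). There you assert that $p \in \posf{\ell_2}$ makes every position on the path to $p$ functional, but $\posf{\ell_2}$ only constrains the head \emph{at} $p$. Type (a) overlaps may sit inside an anchor. For $\ell_2 = \m{c}(y.Z(\m{g}(y)))$ and $\ell_1 = \m{g}(X)$ (the example after \lemref{cp}), the overlap is at $p = 11$, while position $1$ is headed by the free variable $Z$. This is exactly the anchor-overlap case in the proof of \lemref{cp}. (Prefixes headed by bound variables can also occur; they are harmless, but your induction as stated does not cover them either.) At a prefix headed by a free variable, your step ``the head is a function symbol, hence not in $\dom{\theta}$'' is unavailable. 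If that variable were in $\dom{\theta}$, the commutation could genuinely fail, e.g.\ when its image discards the argument containing $p$.

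The fix is the reasoning you already use for $Y''$ in case (b), applied uniformly, which is what the paper does. Prove the commutation for every position $r$ such that no proper prefix of $r$ has a head in $\dom{\theta}$; bound variables are never in $\dom{\theta}$ after $\alpha$-renaming. Then show that any free variable $Z$ heading a proper prefix of $pq$ in $\ell_2\gamma$ lies outside $\dom{\theta}$:
\begin{itemize}
\item By condition (i) of minimal complete sets, $\dom{\theta} \subseteq \fv{\ell_1\delta} \cup \fv{\ell_2\gamma|_{pq}}$.
\item $Z \notin \fv{\ell_1\delta}$ because $\delta$ is a lifter away from $\fv{\ell_2\gamma}$. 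You also leave this point implicit for $Y''$.
\item $Z \notin \fv{\ell_2\gamma|_{pq}}$ because anchors contain no free variables by condition \enuref{dhp-var} of \defref{dhp}, and $Y''$ is fresh.
\end{itemize}
With this, the split into cases (a) and (b) becomes unnecessary. So does your side remark about reading the equation as defined only when $pq \in \pos{\ell_2\gamma\theta}$.
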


Next, we transform overlaps into lifted critical peaks and pairs.
Note that the set of overlaps is always infinite because
$\vec[l]{w}$
is not fixed by the rewrite system. For confluence analysis, we
will only need critical pairs stemming from overlaps with $l = 0$.

\begin{definition}
Let
$\langle \ell_1 \R r_1,p,q,(\vec{x}),(\vec[l]{w}),\delta,\gamma,U,
\ell_2 \R r_2 \rangle$
be an overlap of a \textup{DPRS} $\xR$ and $\theta \in U$. By
\lemref{overlapEqualSource},
$\ell_2\gamma\theta[(\vec{x},\vec[l]{w}.\ell_1\delta)\theta]_{pq} =
\ell_2\gamma\theta$. This term can be rewritten in two ways
\[
\ell_2\gamma\theta[(\vec{x},\vec[l]{w}.r_1\delta)\theta]_{pq}
\Lab{\xR}{pq | \ell_1 \R r_1}
\ell_2\gamma\theta[(\vec{x},\vec[l]{w}.\ell_1\delta)\theta]_{pq} =
\ell_2\gamma\theta \Rab{\xR}{\epsilon | \ell_2 \R r} r_2\gamma\theta
\]
which gives rise to a \emph{lifted critical peak} of $\xR$. The
equation
$\ell_2\gamma\theta[(\vec{x},\vec[l]{w}.r_1\delta)\theta]_{pq}
\approx r_2\gamma\theta$
is called a \emph{lifted critical pair} of $\xR$.
A lifted critical peak (lifted critical pair) with $l = 0$ is just
called a \emph{critical peak} (\emph{critical pair}).
The set of (lifted) critical pairs is
denoted by $\cp{\xR}$ $(\lcp{\xR})$.
\end{definition}

\subsection{Examples}

We proceed by computing the critical pairs of the DPRSs
introduced in \secref{dhp}. All critical pair computations
can be checked with our prototype implementation.

\begin{example}
\label{exa:roundcps}
The DPRS $\xR$ from \exaref{round} has only one critical pair
\begin{align*}
\m{app}(z.H(\m{id}(z),\m{id}(z),\m{fromInt}(z)),X)
&\approx H(\m{round}(X),\m{round}(X),X)
\end{align*}
stemming from the variable overlap between the two rules.
Since
\begin{align*}
\m{app}(z.H(\m{id}(z),\m{id}(z),\m{fromInt}(z)),X)
\RbR H(\m{round}(X),\m{round}(X),X)
\end{align*}
the critical pair is joinable.
Note that it is important
for the second rule to return $\m{id}(y)$ instead of just $y$ in order to
make the critical pair joinable.
%In our opinion, having to wrap $\m{int}$ values into the
%$\m{id}$ function is a mild price to pay for succinctly expressing
%such a complicated one-step computation.
\end{example}

\begin{example}
\label{exa:fusioncps}
The DPRS $\xR$ from \exaref{fusion} does not exhibit critical pairs.
\end{example}

\begin{example}
\label{exa:sducps}
The DPRS $\xR$ from \exaref{sdu} has 30 critical pairs; 18 of them are
not joinable whereas the other 12 are trivially joinable. Every
non-joinable critical pair can be simplified by rewriting to a particular
instance of the equation schema
\begin{align*}
\m{case}_\iota&(Z,z_1.H_0(H_1(\m{inl}(z_1),z_1),
\m{inl}(z_1)),z_2.H_0(H_2(\m{inr}(z_2),z_2),\m{inr}(z_2))) \\
&\approx H_0(\m{case}_\kappa(Z,z_1.H_1(Z,z_1),z_2.H_2(Z,z_2)),z)
\end{align*}
where $\iota, \kappa \in \SET{\m{a},\m{b},\m{u}}$. Note that this
schematic equation shows how to make this system confluent: We need to
be able to detect potential partial applications of $\m{case}_\iota$
according to the arguments of nested free variables. Unfortunately,
the only possible orientation for a rule is from left to right,
but the left-hand side is not a DHP as it contains nested free variables
as well as anchors which are expanded subterms of each other.
Even though a confluent representation cannot be obtained in
our framework, through the definition of critical pairs for DPRSs this
interesting and general source of non-confluence can be determined
algorithmically.
\end{example}

\section{Confluence}
\label{sec:conf}

In this section, we will prove the two basic confluence results
which build upon the definition of critical pairs:
\begin{enumerate}[(i)]
\item
DPRSs $\xR$ are locally confluent if and only if
their critical pairs are joinable,
\item
orthogonal DPRSs are confluent.
\end{enumerate}

Confluence of terminating DPRSs can then be established by (i)
together with Newman's Lemma \cite{N42}.

\subsection{Local Confluence}
\label{sec:lconf}

The following result, our \emph{critical pair lemma},
is the main result regarding confluence of DPRSs. 

\begin{restatable}{lemma}{cplemma}
\label{lem:cp}
Let $\xR$ be a \textup{DPRS}. If $t \Lb{\xR} \cdot \Rb{\xR} u$ then
$t \Db{\xR} u$ or
$t \Rab{\xR}{*} \cdot \LRb{\lcp{\xR}} \cdot \Lab{\xR}{*} u$.
\end{restatable}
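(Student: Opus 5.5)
We follow the structure of the classical critical pair lemma for PRSs due to Mayr and Nipkow, with a new case for variable overlaps. Consider a local peak
\[
t = s[\vec{z}.r_1\theta_1]_{p_1} \Lab{\xR}{p_1} s \Rab{\xR}{p_2} s[\vec{z'}.r_2\theta_2]_{p_2} = u
\]
with $s|_{p_i} = \bvp{s}{p_i}.\ell_i\theta_i$. Since rewriting is closed under contexts, including contexts that bind variables, we may assume that one of the positions is the root. First, if $p_1 \parallel p_2$, the two steps commute and $t \Db{\xR} u$. Otherwise, by symmetry we may assume $p_2 = \epsilon$ and write $p_1 = p$. Since $\ell_2$ is basic, $s = \ell_2\theta_2$ and $u = r_2\theta_2$.

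The key case split is on how $p$ lies in $\ell_2$. We use the fact that $\theta_2$ is applied hereditarily, so a position of $s$ either is a position of $\ell_2$ or traces back through a free-variable occurrence $\vec{x}.Y(\vec[m]{s})$ at some $p' \in \pos{\ell_2}$.

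\textbf{Case (a): $p \in \posf{\ell_2}$.} Let $\vec{x} = \bvp{\ell_2}{p}$. Choose the $\vec{x}$-lifter $\delta$ of $\ell_1$ away from $\fv{\ell_2}$. By renaming apart, the substitution $\theta_2 \cup \{\rho(Z) \mapsto \vec{x},\vec{y}.\theta_1(Z)\ldots\}$ unifies $\vec{x}.\ell_1\delta$ and $\ell_2|_p$. By completeness of $U$ there are some $\theta \in U$ and $\mu$ with $\theta\mu$ equal to this unifier on the relevant variables. The peak is then a $\mu$-instance of the critical peak, so $t \LRb{\lcp{\xR}} u$. The excluded variant case at the root gives $t = u$, using \thmref{matchUnique}.

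\textbf{Case (b): $p$ goes strictly inside a variable occurrence.} Let $p = p'p''$, where $\ell_2|_{p'} = \vec{x}.Y(\vec[m]{s})$ and $p''$ lies in $\theta_2(Y) = \vec[m]{y}.v$ after hereditary substitution. We split again.

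Subcase (b1): the redex $\ell_1\theta_1$ is contained in, or equal to, the instance of an anchor $s_i\theta_2$. Then it is a redex inside $\theta_2$. Rewrite $\theta_2(X)$ for the affected variables and use a parallel-moves argument: because anchors contain no free variables, the redex must in fact lie in the substitution part. Applying the step to all copies of $Y$'s instance in $r_2\theta_2$, and using that DHP matching is preserved (the rewritten substitution still matches $\ell_2$), gives $t \Rab{\xR}{*}\cdot\Lab{\xR}{*} u$ via $r_2\theta_2 \Rab{\xR}{*} r_2\theta_2'$. Care is needed here: anchors are fixed, so a step in $\theta_2(Y)$ must not destroy the occurrences of the anchors. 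This needs $\theta_2'$ to still satisfy $\ell_2\theta_2' = t$, which we argue from linearity of how $\vec[m]{y}$ appear.

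Subcase (b2): the redex genuinely overlaps the instantiated body of $Y$, involving symbols from $v$ together with anchor instances. Let $\vec[l]{w}$ be the binders in $v$ above the redex. We factor $\theta_2(Y)$ through $\gamma$: $Y''$ takes the context of $v$ around the redex, and $Y'$ takes the redex pattern abstracted over the bound variables $\vec[l]{w}$. This yields a unifier of $\vec{x},\vec[l]{w}.\ell_1\delta$ and $\ell_2\gamma|_{p'1}$. By completeness there is some $\theta \in U'$ generalizing it. If $\theta \in U''$ by (1), the redex sits at an anchor, which we handle as in (b1) or as an ordinary functional overlap inside the anchor. If $\theta \in U''$ by (2), then $\theta(Y')$ is itself an instance of $\ell_1$, and we join as in \exaref{varOverlap}: contract that instance in every copy within both $t$ and $u$. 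Otherwise $\theta \in U$, and the peak is an instance of the lifted critical peak, so $t \LRb{\lcp{\xR}} u$.

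The main obstacle is subcase (b2). We must construct the factorization via $\gamma$ correctly, in particular showing that the hole can take $\vec[l]{w}$ and the original arguments, and show that the instance relation survives hereditary substitution. The joinability argument for the discarded unifiers in $U''$ is also delicate, since it needs a parallel rewriting of all descendants of $Y'$ and relies on the uniqueness of matching from \thmref{matchUnique}.
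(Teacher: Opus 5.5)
Your case structure follows the paper's: parallel steps, variant root overlaps, functional overlaps above all free-variable occurrences, and, inside an occurrence $\vec{x}.Y(\vec[m]{s})$, a split into anchor overlaps, trivial variable overlaps (condition (2)) and genuine lifted critical pairs. However, subcase (b1) is wrong.

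Since $\ell_2$ is a DHP, its anchors contain no free variables of $\ell_2$, so $s_i\theta_2 = s_i$. A redex inside a copy of an anchor is therefore \emph{not} a redex in the substitution part. It sits at a functional position $q_1ir$ of $\ell_2$ itself (below $Y$, hence outside your case (a)), so it is a genuine overlap. Such peaks need not be joinable, and your conclusion $t \Rab{\xR}{*} \cdot \Lab{\xR}{*} u$ fails. Take the rules $\m{g}(X) \R \m{h}(X)$ and $\m{c}(y.Z(\m{g}(y))) \R Z(\m{d})$: the peak $\m{c}(y.\m{f}(\m{h}(y),\m{g}(y))) \Lb{\xR} \m{c}(y.\m{f}(\m{g}(y),\m{g}(y))) \Rb{\xR} \m{f}(\m{d},\m{d})$ is not joinable. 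Your claim that ``the rewritten substitution still matches $\ell_2$'' breaks exactly here, because contracting an anchor destroys the anchor the match depends on. Also, $\theta_2(Y)$ need not use its bound arguments linearly, so no linearity argument is available.

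The missing idea is to treat this as an ordinary overlap of type (iv)(a) at $q_1ir \in \posf{\ell_2}$, with context $\vec{x}' = \bvp{\ell_2}{q_1ir}$. Because $\fv{\ell_2|_{q_1ir}} = \varnothing$, unifying $\vec{x}'.\ell_1\delta$ with $\ell_2|_{q_1ir}$ is really a matching problem. By \thmref{matchUnique} it has a unique solution $\theta_1'$, which gives the critical pair $\ell_2[(\vec{x}'.r_1\delta)\theta_1']_{q_1ir} \approx r_2$.

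The body of $\theta_2(Y)$ may contain several occurrences of the $i$-th bound argument, so the step $s \Rb{\xR} t$ contracts only one of several copies of the anchor. You must first contract all the other copies in $t$. This reaches $(\ell_2[(\vec{x}'.r_1\delta)\theta_1']_{q_1ir})\theta_2$, the $\theta_2$-instance of the critical pair's left-hand side, while $u = r_2\theta_2$ is the instance of its right-hand side. Hence $t \Rab{\xR}{*} \cdot \LRb{\cp{\xR}} u$. This is exactly why the statement carries the prefix $t \Rab{\xR}{*} \cdot$, which your proposal never uses. Your remark in (b2) about ``an ordinary functional overlap inside the anchor'' points the right way, but it is not carried out and contradicts (b1) as written.
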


As expected from the definition of overlaps for DPRSs,
one has to adapt the proof for PRSs due to Mayr and Nipkow \cite{MN98}
quite considerably.
In particular, the statement differs from known critical pair
lemmata by (a) merely establishing connectedness
via lifted critical pairs (of which there are always infinitely many)
and (b) only guaranteeing this after rewriting. The necessity
of (a) was already exhibited in \exaref{varOverlap}, and the upcoming
example shows the need for (b).

\begin{example}
Consider the sort $\m{a}$ as well as the the function symbols
$\m{c} : (\m{a} \to \m{a}) \to \m{a}$,
$\m{d} : \m{a}$, $\m{f} : (\m{a},\m{a}) \to \m{a}$
and $\m{g}, \m{h} : \m{a} \to \m{a}$.
The DPRS $\xR$ consisting of the rules
\begin{align*}
\m{g}(X) &\R \m{h}(X) & \m{c}(y.Z(\m{g}(y))) &\R Z(\m{d})
\end{align*}
admits the non-joinable local peak
\[
\m{c}(y.\m{f}(\m{h}(y),\m{g}(y))) \Lab{\xR}{11}
\m{c}(y.\m{f}(\m{g}(y),\m{g}(y))) \Rab{\xR}{\epsilon} \m{f}(\m{d},\m{d})
\]
as well as the corresponding critical pair
$\m{c}(y.Z(\m{h}(y))) \approx Z(\m{d}) \in \cp{\xR}$.
In order to connect the two terms stemming from the local peak with the
critical pair, we first have to rewrite the left one:
$\m{c}(y.\m{f}(\m{h}(y),\m{g}(y))) \RbR
\m{c}(y.\m{f}(\m{h}(y),\m{h}(y))) \LRb{\cp{\xR}} \m{f}(\m{d},\m{d})$.
\end{example}

Even though \lemref{cp} speaks about
lifted critical pairs, it is enough to consider critical pairs for
confluence analysis.

\begin{restatable}{lemma}{lcpjoin}
\label{lem:lcpjoin}
Let $\xR$ be a \textup{DPRS}. If $\cp{\xR} \subseteq {\Db{\xR}}$
then $\lcp{\xR} \subseteq {\Db{\xR}}$.
\end{restatable}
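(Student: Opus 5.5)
The plan is to show that every lifted critical pair is an instance of a critical pair, in a way that makes joinability transfer along substitutions. Since $\Rb{\xR}$ is closed under substitution (for hereditary substitution application, $s \Rb{\xR} t$ implies $s\mu \Rb{\xR} t\mu$, so $s \Db{\xR} t$ implies $s\mu \Db{\xR} t\mu$), it suffices to do the following. Given a lifted critical pair arising from an overlap $\langle \ell_1 \R r_1,p,q,(\vec{x}),(\vec[l]{w}),\delta,\gamma,U,\ell_2 \R r_2 \rangle$ and some $\theta \in U$, we exhibit an overlap with $l = 0$, a unifier $\theta_0$ in its set, and a substitution $\mu$ such that the lifted critical pair equals the $\mu$-instance of the critical pair determined by $\theta_0$. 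If $l = 0$ there is nothing to do. Likewise, if $q = \epsilon$ then $l = 0$ by definition. So only variable overlaps with $q = 1$ and $l > 0$ remain.

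For such an overlap, the first step is to build the corresponding $l=0$ data. We take a $\vec{x}$-lifter $\delta_0$ of $\ell_1$ away from $\fv{\ell_2\gamma_0}$ and set $\gamma_0 = \SET{Y \mapsto \vec[m]{y}.Y''_0(Y'_0(\vec[m]{y}\cf),\vec[m]{y}\cf)}$, where $Y'_0 : (\vec[m]{\sigma}) \to b$ and $Y''_0 : (b,\vec[m]{\sigma}) \to a$. The substitution $\mu$ then maps each lifted variable $\rho_0(Z)$ of $\delta_0$ to $\vec{x}',\vec[m']{y}.\rho(Z)(\ldots,\vec[l]{w}\cf,\ldots)$ with the $\vec[l]{w}$ treated as bound by the context. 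It sends $Y'_0$ to a term with the $w$'s abstracted and fed into $Y'$, and sends $Y''_0$ to $u,\vec[m]{y}.Y''(\vec[l]{w}.u',\vec[m]{y}\cf)$. This is not literally possible because $Y''_0$ receives a term of type $b$ rather than an abstraction.

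This mismatch is the main obstacle, so the construction has to be arranged differently: the relationship goes via the unifier, not via a naive instance. The better route is to observe that $\ell_2\gamma\theta$ with $\theta \in U$ is a unifier situation in which the $\vec[l]{w}$ are additional bound variables. We therefore view $\theta$ composed with $\gamma$ as a unifier of the $l=0$ problem: set $\delta' = \gamma_0\eta$ with $\eta = \SET{Y'_0 \mapsto \ldots}$ obtained by abstracting over $\vec[l]{w}$ inside $Y''$. By completeness (condition (iii) of minimal complete sets), some $\theta_0 \in U'_0$ is at least as general, witnessed by $\mu$. We must then check that $\theta_0 \notin U''_0$. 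If $\theta_0(Y'_0)$ were a projection or an instance of $\ell_1$, then instantiating by $\mu$ would force $\theta(Y')$ to have the same shape, which is excluded because $\theta \in U$. Finally, we verify by the definition of the critical peak and Lemma~\ref{lem:overlapEqualSource} that both sides of the lifted pair are the $\mu$-instances of the corresponding critical pair sides. Since the rewrite steps are at the same positions $pq$ and $\epsilon$, this reduces to commuting replacement at $pq$ with substitution.

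The delicate calculation is the type adaptation between $Y'_0 : (\vec[m]{\sigma}) \to b$ and $Y'$ applied under the binder $\vec[l]{w}$. The conditions $\vec{x},\vec[l]{w}$-lifter versus $\vec{x}$-lifter must be matched by letting $\mu$ supply the $w$'s, which are free with respect to $\ell_1\delta_0$ but bound in the lifted term. I would first try handling this by showing that the lifted overlap's unification problem is obtained from the $l=0$ one by a uniform raising of the relevant variables over $\vec[l]{w}$, i.e.\ an $\vec[l]{w}$-lifter applied to the $l=0$ problem. Minimal complete sets of unifiers correspond under raising, so one can choose $U$ and $U_0$ compatibly (or, failing that, use subsumption as above). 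Joinability of the lifted pair then follows from closure of $\Db{\xR}$ under substitution and under contexts.
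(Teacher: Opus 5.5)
There is a genuine gap. Your plan rests on the claim that every lifted critical pair is a substitution instance of some critical pair with $l=0$, after which closure of $\Db{\xR}$ under substitution would finish the proof. That claim is false. The type mismatch you noticed is not a technicality to route around; it is exactly why lifted critical pairs were introduced.

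In an $l=0$ variable overlap the outer variable $Y''_0$ has type $(b,\vec[m]{\sigma}) \to a$. Under any substitution $Y''_0 \mapsto z,\vec[m]{y}.T$, the term plugged in for $z$ cannot mention variables bound inside $T$, because substitution is capture-avoiding. In the lifted pair, however, the first argument of the outer variable is an abstraction over $\vec[l]{w}$ whose body depends on $\vec[l]{w}$ and differs between the two sides.

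Concretely, in Example~\ref{exa:varOverlap} the lifted pair $\m{c}(y.Y'(x.X_2(x),\m{g}(y))) \approx Y'(x.\m{j}(\m{d},X_2(x)),\m{d})$ is not an instance of $\m{c}(y.Y(X_2,\m{g}(y))) \approx Y(\m{j}(\m{d},X_2),\m{d})$. The subterm $\m{j}(\m{d},X_2(x))$ below the binder $x$ occurs only on the right, so it would have to be the image of the base-type argument $\m{j}(\m{d},X_2)$, and such an image cannot contain $x$. If your claim held, the non-joinable peak from $\m{c}(y.\m{i}(z.\m{j}(\m{g}(y),z)))$ would be an instance of the unlifted critical peak, which that example explicitly rules out.

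Appealing to completeness of $U'_0$ does not repair this. It only yields $\mu$ with $\theta_0\mu$ equal to some unifier of the $l=0$ problem. The resulting instance of the $l=0$ pair still has no $\vec[l]{w}$-binder inside the first argument of the outer variable.

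Your remark in the last paragraph is correct: the lifted unification problem is the $l=0$ problem with all free variables additionally raised over $\vec[l]{w}$. The paper uses exactly this to find the matching $u \approx v \in \cp{\xR}$ (Lemma~\ref{lem:lcp_cp}). What is missing is the step after that.

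The paper relates the two pairs by the non-substitutive transformation $s \mapsto s^{Y,Y',\vec[l]{w},\delta}\delta$ of Definition~\ref{def:lcprec}, abbreviated $s^\bullet\delta$. It turns every $Y(s_1,\dots,s_n)$ into $Y'(\vec[l]{w}.s_1^\bullet\delta,s_2^\bullet,\dots,s_n^\bullet)$. This changes the type of the head and binds $\vec[l]{w}$ only inside first arguments, with $\delta$ raising the free variables there.

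Neither substitution nor embedding into a context can perform this binding. So you have to prove separately that the transformation preserves rewriting: $s \R t$ implies $s^\bullet\delta \Rab{\xR}{*} t^\bullet\delta$ (Lemma~\ref{lem:lcpstep}). Its key case is a root step with a rule instance $\ell\theta \R r\theta$. That case requires that the transformation commutes with applying $\theta$ (Lemma~\ref{lem:lcpsplit} and Corollary~\ref{cor:lcpsubst}, by induction on the order of $\theta$). Applying this to a joining sequence of $u \approx v$ then gives joinability of the lifted pair.

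As a minor point, closure under substitution only gives $s\mu \Rab{\xR}{*} t\mu$, not a single step.
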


\begin{theorem}
\label{thm:lconfluence}
A \textup{DPRS} is locally confluent if and only if all its
critical pairs are joinable.
\end{theorem}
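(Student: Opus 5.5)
The theorem follows from \lemref{cp} and \lemref{lcpjoin}, so the plan is to prove the two directions separately.

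For the ``if'' direction, assume $\cp{\xR} \subseteq {\Db{\xR}}$ and take an arbitrary local peak $t \Lb{\xR} \cdot \Rb{\xR} u$. By \lemref{cp}, either $t \Db{\xR} u$ already holds, or there are terms $t', u'$ with $t \Rab{\xR}{*} t'$, $t' \LRb{\lcp{\xR}} u'$ and $u' \Lab{\xR}{*} u$. By \lemref{lcpjoin}, every lifted critical pair is joinable. Since $\Db{\xR}$ is symmetric, this gives $t' \Db{\xR} u'$ whichever orientation the step $t' \LRb{\lcp{\xR}} u'$ uses. Concatenating $t \Rab{\xR}{*} t' \Rab{\xR}{*} v \Lab{\xR}{*} u' \Lab{\xR}{*} u$ then yields $t \Db{\xR} u$. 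Hence $\xR$ is locally confluent.

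For the ``only if'' direction, assume $\xR$ is locally confluent and take a critical pair $s \approx t \in \cp{\xR}$. By definition it arises from an overlap with $l = 0$ and some $\theta \in U$. It therefore comes with the critical peak
\[
s = \ell_2\gamma\theta[(\vec{x}.r_1\delta)\theta]_{pq} \Lab{\xR}{pq}
\ell_2\gamma\theta \Rab{\xR}{\epsilon} r_2\gamma\theta = t .
\]
The only point to check is that both steps are genuine rewrite steps of $\xR$. For the root step this is immediate, since $\ell_2\gamma\theta$ is an instance of $\ell_2$ via the substitution $\gamma\theta$. For the inner step, \lemref{overlapEqualSource} gives $\ell_2\gamma\theta|_{pq} = (\vec{x}.\ell_1\delta)\theta$, and this equals $\vec{x}.\ell_1(\delta\theta)$ after the usual renaming of bound variables. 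So the inner step is a rewrite step with rule $\ell_1 \R r_1$ at position $pq$ and substitution $\delta\theta$. Thus $s \Lb{\xR} \cdot \Rb{\xR} t$ is a local peak, and local confluence gives $s \Db{\xR} t$.

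All the substantive work lies in \lemref{cp} and \lemref{lcpjoin}, which we may assume. The only mild obstacle is the bookkeeping in the ``only if'' direction: making sure the critical peak is an honest local peak. This needs the subterm and hereditary substitution conventions to commute as \lemref{overlapEqualSource} guarantees, in particular that $(\vec{x}.\ell_1\delta)\theta$ has the form $\vec{x}.\ell_1\delta\theta$ with the bound variables $\vec{x}$ not captured by $\theta$.
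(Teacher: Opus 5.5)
Your decomposition is the paper's. The ``if'' direction is \lemref{cp} combined with \lemref{lcpjoin}. The ``only if'' direction, which the paper leaves implicit, is the observation that critical peaks are genuine local peaks, and your check of this via \lemref{overlapEqualSource} is fine.

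There is, however, a missing step in the ``if'' direction. $\LRb{\lcp{\xR}}$ is the symmetric closure of the rewrite relation of the equational system $\lcp{\xR}$. So $t' \LRb{\lcp{\xR}} u'$ does not say that $t' \approx u'$ (or $u' \approx t'$) is itself a lifted critical pair. It says that $t'|_q = \vec{x}.v_1\sigma$ and $u' = t'[\vec{x}.v_2\sigma]_q$ for some $v_1 \approx v_2 \in \lcp{\xR}$ (or the reverse), some position $q$ and some substitution $\sigma$. This is exactly what the proof of \lemref{cp} produces: in the critical variable overlap case, the lifted critical pair is used under a substitution $\mu$ inside the context $s[\,\cdot\,]_{p_2}$.

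\lemref{lcpjoin} only gives you $v_1 \Db{\xR} v_2$. To conclude $t' \Db{\xR} u'$ you need $\Rab{\xR}{*}$ to be closed under substitutions and contexts. Closure under contexts is immediate from the definition of rewrite steps at positions. Closure under hereditary substitution is not a formality in this setting. A redex may sit in an argument of a free variable that $\sigma$ instantiates with a function duplicating or erasing that argument, so a single step can become several steps or none. This closure is \thmref{clsubst}, applied with $\theta' = \theta$, and it is precisely the extra ingredient the paper cites to obtain ${\LRb{\lcp{\xR}}} \subseteq {\Db{\xR}}$. Your remark that all the substantive work lies in the two lemmas is therefore not quite right; with this step inserted, your argument is complete.
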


\begin{proof}
Let $\xR$ be a DPRS. If all
critical pairs are joinable then
${\LRb{\lcp{\xR}}} \subseteq {\Db{\xR}}$
by \lemref{lcpjoin} and \thmref{clsubst}.
Thus,
${\Lb{\xR} \cdot \Rb{\xR}} \subseteq {\Db{\xR}}$ follows from
\lemref{cp}.
\end{proof}

From \thmref{lconfluence}, Newman's Lemma \cite{N42} and
\thmref{unifdecidable} we immediately obtain the following results.

\begin{corollary} 
\label{cor:confluence}
A terminating \textup{DPRS} is confluent if and only if all its
critical pairs are joinable.
\end{corollary}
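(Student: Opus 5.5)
The corollary follows from \thmref{lconfluence} together with Newman's Lemma; the plan is to prove the two directions separately.

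For the ``only if'' direction, let $\xR$ be a confluent DPRS and let $t \approx u \in \cp{\xR}$. By definition, $t \approx u$ arises from a critical peak $t \Lb{\xR} s \Rb{\xR} u$. Here $s = \ell_2\gamma\theta$, and \lemref{overlapEqualSource} guarantees that this is indeed a common source term. A local peak is in particular a peak $t \Lab{\xR}{*} s \Rab{\xR}{*} u$, so confluence yields $t \Db{\xR} u$. This direction does not use termination at all; in fact it only needs local confluence, which is exactly the easy half of \thmref{lconfluence}.

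For the ``if'' direction, assume $\xR$ is terminating and that all its critical pairs are joinable. \thmref{lconfluence} then gives local confluence of $\xR$, that is, ${\Lb{\xR} \cdot \Rb{\xR}} \subseteq {\Db{\xR}}$. Termination means that $\Rb{\xR}$ is well-founded on $\trm{\xS}$. Newman's Lemma is a purely abstract result about binary relations, so it applies directly to the relation $\Rb{\xR}$ on canonical terms and gives confluence.

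The only point to check is that nothing specific to the higher-order setting blocks the application of Newman's Lemma. Our terms are canonical forms, rewriting stays within $\trm{\xS}$ because substitution application is hereditary, and the notions of confluence and termination for HRSs were defined exactly as the abstract properties of $\Rb{\xR}$. So I expect no real obstacle; all the substantive work already lies in \lemref{cp} and \lemref{lcpjoin}.

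For decidability one would additionally invoke \thmref{unifdecidable}, but that is needed only for computability. It matters for finite left-linear systems, where it guarantees that there are finitely many critical pairs, and it is not needed for the equivalence itself.
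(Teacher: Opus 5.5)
Your proposal is correct and matches the paper's argument, which obtains the corollary directly from \thmref{lconfluence} together with Newman's Lemma. Your ``only if'' direction is the same trivial observation that critical peaks are local peaks. You are also right that \thmref{unifdecidable} is needed only for the decidability corollaries, not for this equivalence.
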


\begin{corollary}
Local confluence of finite left-linear \textup{DPRSs} is decidable.
\end{corollary}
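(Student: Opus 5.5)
The plan is to reduce local confluence to a finite, effectively computable check, following the route the paper indicates: \thmref{lconfluence}, \thmref{unifdecidable}, and Newman's Lemma~\cite{N42}. By \thmref{lconfluence}, a DPRS $\xR$ is locally confluent if and only if every pair in $\cp{\xR}$ is joinable. So it suffices to show two things: that $\cp{\xR}$ is a finite set we can compute from $\xR$, and that joinability of each of these finitely many pairs can be decided.

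\textbf{Step 1: finitely many candidate overlaps.} For $\cp{\xR}$ only overlaps with $l = 0$ matter. Take such an overlap $\langle \ell_1 \R r_1,p,q,(\vec{x}),(),\delta,\gamma,U,\ell_2 \R r_2\rangle$.
\begin{itemize}
\item The rules range over the finite set $\xR$.
\item The position $p$ ranges over the finite set $\pos{\ell_2}$.
\item $q \in \SET{\epsilon,1}$ is fixed by whether $p \in \posf{\ell_2}$.
\item The lifter $\delta$ and the substitution $\gamma$ are unique up to the choice of fresh variable names, so we fix one canonical choice.
\end{itemize}
Hence there are only finitely many candidate overlaps up to renaming.

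\textbf{Step 2: each overlap has a finite, computable set of unifiers.} Fix a candidate overlap. Since $\xR$ is left-linear, $\ell_1$ and $\ell_2$ are linear DHPs. I then check the following.
\begin{itemize}
\item Applying the lifter $\delta$ replaces each free variable of $\ell_1$ by a distinct fresh variable applied to bound variables. So $\vec{x}.\ell_1\delta$ is again a linear DHP (by the footnote to \defref{overlap}).
\item When $q = 1$, $\gamma$ replaces the single occurrence of $Y$ by a term whose fresh variables $Y'$ and $Y''$ each occur once. So $\ell_2\gamma|_{pq}$ is linear.
\item Linearity is preserved when passing to a subterm.
\item Because $\delta$ is taken away from $\fv{\ell_2\gamma}$, the two terms to be unified share no free variables.
\end{itemize}
Therefore \thmref{unifdecidable} applies: the minimal complete set $U'$ is finite and computable.

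\textbf{Step 3: filtering $U'$ and building the critical pairs.} For case (b) we must compute $U = U' \setminus U''$. Deciding membership of $\theta$ in $U''$ amounts to two checks on $\theta(Y')$. Condition (1) asks whether $\theta(Y')$ is a projection, which is syntactic. Condition (2) asks whether $\theta(Y')$ is an instance of $\ell_1$ under the binders, which is a matching problem against a DHP. By \thmref{matchUnique} and the linear-time matching algorithm of \cite{YHT04}, this is decidable. Applying each $\theta \in U$ via hereditary substitution then yields $\cp{\xR}$ as an explicitly computed finite list.

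\textbf{Step 4: deciding joinability (the main obstacle).} It remains to decide $s \Db{\xR} t$ for each $s \approx t \in \cp{\xR}$. First, $\Rb{\xR}$ is finitely branching and computable: positions are finite, rules are finite, and by \thmref{matchUnique} each rule applies at a given position with at most one substitution, computed by DHP matching. So the sets of reducts of $s$ and of $t$ can be enumerated level by level, and we search for a common reduct. This already semi-decides joinability.

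The delicate point is certifying \emph{non}-joinability. Here I would follow the paper's indication that the corollary is obtained together with Newman's Lemma. The approach I would try first is to show that the search can be cut off: whenever the two reduct sets are exhausted without meeting, the pair is non-joinable. In the setting the corollary inherits, where reduction is well-founded, this holds because both reduct trees are finite by König's Lemma, so the search terminates with a yes-or-no answer.

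I expect this termination of the search to be the step that carries the real weight of the statement. Steps 1--3 are essentially bookkeeping around \thmref{unifdecidable}.
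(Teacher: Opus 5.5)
Your Steps 1--3 are sound and match the paper's one-line derivation: for a finite left-linear DPRS, \thmref{unifdecidable} makes $\cp{\xR}$ a finite, computable set, and \thmref{lconfluence} reduces local confluence to joinability of these pairs. The gap is Step 4. The statement has no termination hypothesis, and your phrase ``in the setting the corollary inherits, where reduction is well-founded'' imports one. Without termination, enumerating reducts only semi-decides $s \Db{\xR} t$. Reduct sets can be infinite, so König's Lemma gives nothing, and no failed search ever certifies non-joinability.

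No other argument can close this gap, because the statement as written is false. Local confluence is undecidable already for finite left-linear first-order systems, and these are DPRSs. Take the sort $\m{o}$ and symbols $\m{ap} : (\m{o},\m{o}) \to \m{o}$ and $\m{S},\m{K},\m{c} : \m{o}$. Add the combinatory logic rules $\m{ap}(\m{ap}(\m{K},X),Y) \R X$ and $\m{ap}(\m{ap}(\m{ap}(\m{S},X),Y),Z) \R \m{ap}(\m{ap}(X,Z),\m{ap}(Y,Z))$. For a closed CL term $M \neq \m{K}$, add also $\m{c} \R M$ and $\m{c} \R \m{K}$. The only critical pairs are $M \approx \m{K}$ and its reverse: the CL rules do not overlap, and every variable overlap is discarded by condition (b)(2) of \defref{overlap}. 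Since $\m{K}$ is a normal form and no reduct of $M$ contains $\m{c}$, this DPRS is locally confluent iff $M$ reduces to $\m{K}$ in CL. That question is undecidable because CL is Turing-complete. So your proof establishes decidability of local confluence only for finite \emph{terminating} left-linear DPRSs. The paper's one-line justification, which cites Newman's Lemma alongside \thmref{lconfluence} and \thmref{unifdecidable}, likewise supports only that version. Termination has to be stated as an explicit hypothesis rather than assumed to be inherited.
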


\begin{corollary}
Confluence of finite terminating left-linear \textup{DPRSs} is
decidable.
\end{corollary}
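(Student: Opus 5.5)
The plan is to reduce the claim to \corref{confluence} and then check two things: the set $\cp{\xR}$ is finite and computable, and joinability of each critical pair can be decided. By \corref{confluence}, a terminating DPRS $\xR$ is confluent if and only if $\cp{\xR} \subseteq {\Db{\xR}}$. So the decision procedure is: compute all critical pairs, and for each one decide whether its two sides are joinable.

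\emph{Computing $\cp{\xR}$.} Critical pairs come only from overlaps with $l = 0$. So for a fixed pair of rules $\ell_1 \R r_1, \ell_2 \R r_2 \in \xR$ and a fixed position $p \in \pos{\ell_2}$, the data of an overlap is determined up to renaming of fresh variables.
\begin{itemize}
\item The bound variables $\vec{x} = \bvp{\ell_2}{p}$ are determined by $p$.
\item The lifter $\delta$ is unique up to the choice of fresh names.
\item Either $q = \epsilon$ and $\gamma = \varnothing$, or $q = 1$ and $\gamma$ is the fixed substitution from \defref{overlap}(iv)(b).
\end{itemize}
Since $\xR$ is finite and each $\ell_2$ has finitely many positions, there are finitely many candidate unification problems. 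Each of them satisfies the hypotheses of \thmref{unifdecidable}:
\begin{itemize}
\item $\vec{x}.\ell_1\delta$ is linear, because $\ell_1$ is linear and the lifter maps distinct variables to distinct fresh variables via the renaming $\rho$.
\item $\ell_2\gamma|_{pq}$ is linear, because $Y$ occurs exactly once in the linear term $\ell_2$, and $\gamma$ replaces it by a term in which $Y'$ and $Y''$ each occur once and are fresh.
\item Both terms are DHPs, by the footnote to \defref{overlap}.
\item Their free variables are disjoint, because $\delta$ is a lifter away from $\fv{\ell_2\gamma}$.
\end{itemize}
Hence \thmref{unifdecidable} yields a finite, computable set $U'$. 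From $U'$ we must compute $U = U' \setminus U''$. Condition (1) is a syntactic check on $\theta(Y')$. Condition (2) asks whether the body of $\theta(Y')$ is an instance of the DHP $\ell_1$. This is a DHP matching problem, which is decidable; by \thmref{matchUnique} it has at most one solution, and the algorithm of Yokoyama et al.\ decides it. The non-variant condition for $p = \epsilon$ is likewise a decidable matching check. Finally, forming $\ell_2\gamma\theta[(\vec{x}.r_1\delta)\theta]_{pq}$ and $r_2\gamma\theta$ is effective, since substitution application is computable. Thus $\cp{\xR}$ is a finite, computable set.

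\emph{Deciding joinability.} Take a critical pair $s \approx t$. We argue that the set of $\Rb{\xR}$-reducts of any term is finite and computable.
\begin{itemize}
\item A term has finitely many positions and $\xR$ has finitely many rules.
\item By \thmref{matchUnique}, each rule applies at each position with at most one matching substitution, and that substitution is computable.
\item Hence $\Rb{\xR}$ is finitely branching and effectively so.
\end{itemize}
Since $\xR$ is terminating, König's lemma shows that the reachable sets $\SET{v \mid s \Rab{\xR}{*} v}$ and $\SET{v \mid t \Rab{\xR}{*} v}$ are finite, and we can enumerate them exhaustively. Then $s \Db{\xR} t$ holds iff these two finite sets intersect, and this can be tested up to $\alpha$-equivalence.

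I expect the main obstacle to be the computation of $\cp{\xR}$. The delicate point is verifying that the variable-overlap case really stays inside the linear, disjoint setting of \thmref{unifdecidable} after applying $\gamma$. A second point is making sure the filtering by $U''$ and the restriction to $l = 0$ do not smuggle in infinitely many problems. By contrast, the joinability part is routine once finite branching is obtained from \thmref{matchUnique}.
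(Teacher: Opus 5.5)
Your proposal is correct and follows the paper's route: the paper derives this corollary directly from \thmref{lconfluence} with Newman's Lemma (that is, \corref{confluence}) together with \thmref{unifdecidable}. You make explicit what the paper leaves implicit, namely that the unification problems for variable overlaps are linear with disjoint free variables, that filtering by $U''$ is effective, and that joinability is decidable because $\Rb{\xR}$ is terminating and, by \thmref{matchUnique}, finitely branching.
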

  
The locally confluent DPRSs introduced in this paper
(Examples~\ref{exa:round} and \ref{exa:fusion}) can be shown to
be terminating by the monotone algebra approach for HRSs put
forward by van de Pol \cite{vdP96}. For both examples,
our prototype can
automatically generate suitable polynomial interpretations
over the natural numbers. Hence, these DPRSs are also
confluent by \corref{confluence}.

\subsection{Orthogonal DPRSs}
\label{sec:ortho}

Our definition of orthogonal DPRSs corresponds to the
standard syntactic definition via left-linearity
and absence of critical pairs \cite{BN98,MN98}.

\begin{definition}
A \textup{DPRS} $\xR$ is \emph{non-ambiguous} if $\cp{\xR} = \varnothing$.
We call a \textup{DPRS} \emph{orthogonal} if it is both left-linear and
non-ambiguous.
\end{definition}

We prove confluence of orthogonal DPRSs by establishing the diamond
property ($\L \cdot \R \,\subseteq\, \R \cdot \L$) of their
multi-step relation which lies between one-step and
many-step rewriting. This general approach is often used to
obtain confluence of orthogonal rewrite systems \cite{T03,MN98}.

\begin{definition}
Let $\xR$ be an \textup{HRS}. We define the \emph{multi-step} relation
$\MSb{\xR}$ inductively as follows:
\begin{enumerate}[(i)]
\item \label{enu:ms1}
$\vec[k]{x}.h(\vec{s}) \MSb{\xR} \vec[k]{x}.h(\vec{t})$
if $\vec[k]{x}.s_i \MSb{\xR} \vec[k]{x}.t_i$ for all $1 \leq i \leq n$, and
\item \label{enu:ms2} \smallskip
$\vec[k]{x}.\ell\theta \MSb{\xR} \vec[k]{x}.r\delta$
if $\ell \R r \in \xR$, $\dom{\theta} \subseteq \fv{\ell}$ and
$\theta \MSb{\xR} \delta$.
\end{enumerate}
\end{definition}

This inductive definition of multi-steps is based on \cite{HM11}.
For DPRSs, it can be implemented recursively as $\theta(X)$ is
contained in $\ell\theta$ for all $X \in \dom{\theta}$. Note that this
is not the case for HRSs with nested free variables in left-hand sides.
Even though Mayr and Nipkow use a different presentation of multi-steps,
the following result can be obtained by adapting their proof
accordingly \cite{MN98}.

\begin{restatable}{lemma}{msincl}
\label{lem:msIncl}
For every \textup{HRS} $\xR$,
${\Rb{\xR}} \subseteq {\MSb{\xR}} \subseteq {\Rab{\xR}{*}}$.
\end{restatable}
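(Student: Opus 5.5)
We prove the two inclusions separately. Both go by induction, following the standard argument for multi-steps in the style of \cite{HM11,MN98}. The key tool is that hereditary substitution is compatible with rewriting: if $s \Rb{\xR} t$ then $s\theta \Rb{\xR} t\theta$ (closure of $\Rb{\xR}$ under substitution, used in the proof of \thmref{lconfluence} as \thmref{clsubst}). We also need the fact that rewriting inside the image of a substitution lifts to the instantiated term, i.e.\ $\theta \Rab{\xR}{*} \delta$ implies $u\theta \Rab{\xR}{*} u\delta$.

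\emph{First inclusion} ${\Rb{\xR}} \subseteq {\MSb{\xR}}$. First establish reflexivity of $\MSb{\xR}$ by induction on the term: rule \enuref{ms1} applied to $\vec[k]{x}.h(\vec{s})$ needs $\vec[k]{x}.s_i \MSb{\xR} \vec[k]{x}.s_i$, which holds by the induction hypothesis, since each $s_i$ is a proper subterm. Reflexivity extends pointwise to substitutions. Now let $s \Rab{\xR}{p \mid \ell \R r \mid \theta} t$ and induct on $p$.
\begin{itemize}
\item If $p = \epsilon$, then $s = \vec[k]{x}.\ell\theta$ and $t = \vec[k]{x}.r\theta$. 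Restrict $\theta$ to $\fv{\ell}$; since $\fv{r} \subseteq \fv{\ell}$, both instances are unchanged. Rule \enuref{ms2} then applies with $\theta \MSb{\xR} \theta$ by reflexivity.
\item If $p = ip'$, the step takes place inside the $i$-th argument, with the binders $\vec[k]{x}$ kept in scope. The induction hypothesis gives $\vec[k]{x}.s_i \MSb{\xR} \vec[k]{x}.t_i$, all other arguments are related by reflexivity, and rule \enuref{ms1} concludes.
\end{itemize}

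\emph{Second inclusion} ${\MSb{\xR}} \subseteq {\Rab{\xR}{*}}$. Induct on the derivation of the multi-step.
\begin{itemize}
\item Case \enuref{ms1}: the induction hypotheses give $\vec[k]{x}.s_i \Rab{\xR}{*} \vec[k]{x}.t_i$. Since a rewrite step at position $p$ in $\vec[k]{x}.s_i$ is a step at position $ip$ in the whole term, we can perform these sequences one argument after another. This needs the observation that subterm replacement at $i$ commutes with rewriting, which is immediate from the definition of $\Rb{\xR}$ via $s|_p$ and $s[\cdot]_p$.
\item Case \enuref{ms2}: we have $\vec[k]{x}.\ell\theta \Rb{\xR} \vec[k]{x}.r\theta$ by a root step under the binders. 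The induction hypothesis on $\theta \MSb{\xR} \delta$ yields $\theta(X) \Rab{\xR}{*} \delta(X)$ for every $X$, so it remains to show $r\theta \Rab{\xR}{*} r\delta$.
\end{itemize}

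The main obstacle is this last lemma: $\theta \Rab{\xR}{*} \delta$ implies $u\theta \Rab{\xR}{*} u\delta$ for arbitrary $u$. The induction hypothesis applies to $\theta(X)$, but a free variable $X$ of $r$ may occur applied, as $X(\vec{u})$, so $\theta(X)$ is $\beta$-reduced against arguments by hereditary substitution. We handle it in two steps.
\begin{enumerate}
\item Change one variable's image at a time, and one rewrite step at a time, via a chain of substitutions.
\item For a single step $\theta(X) = \vec{y}.v \Rb{\xR} \vec{y}.v'$, prove $u\theta \Rab{\xR}{*} u\theta'$ by induction on $(\ord{\theta},u)$, matching the well-definedness order of substitution application. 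The interesting case is $u = X(\vec{u})$. There
\[
u\theta = v\SET{\vec{y} \mapsto \vec{u}\theta}
\quad\text{and}\quad
u\theta' = v'\SET{\vec{y} \mapsto \vec{u}\theta'} .
\]
We first rewrite the arguments by the induction hypothesis, which is applicable because each $u_i$ is a proper subterm. We then use closure of $\Rb{\xR}$ under substitution to get $v\SET{\vec{y} \mapsto \vec{w}} \Rb{\xR} v'\SET{\vec{y} \mapsto \vec{w}}$.
\end{enumerate}
Since the lemma is claimed for arbitrary HRSs, we cannot rely on DHP properties here. We only use $\dom{\theta} \subseteq \fv{\ell}$, $\fv{r} \subseteq \fv{\ell}$, and the two substitution lemmas above, which is the route of Mayr and Nipkow \cite{MN98}.
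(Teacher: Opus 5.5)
Your overall structure matches the paper's. The first inclusion goes by induction on $p$ using reflexivity of $\MSb{\xR}$. The second goes by induction on the multi-step derivation, with sequentialization in case \enuref{ms1}. The problem is in case \enuref{ms2}.

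The ``key tool'' you attribute to \thmref{clsubst}, namely that $s \Rb{\xR} t$ implies $s\theta \Rb{\xR} t\theta$, is false for single steps under hereditary substitution. Take the rule $\m{c} \R \m{d}$ and the step $X(\m{c}) \Rb{\xR} X(\m{d})$. The substitution $\SET{X \mapsto y.\m{f}(y,y)}$ yields $\m{f}(\m{c},\m{c})$ and $\m{f}(\m{d},\m{d})$, which are two steps apart, and $\SET{X \mapsto y.\m{d}}$ erases the step altogether. Your step 2 runs into exactly this whenever the redex in $v$ lies in an argument of an applied $y_i$. There you only get $v\SET{\vec{y} \mapsto \vec{w}} \Rab{\xR}{*} v'\SET{\vec{y} \mapsto \vec{w}}$. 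That weaker statement would suffice for your goal. However, proving it from scratch needs rewriting inside a substitution of lower order, which is the very lemma you are proving. So the two properties would have to be established by a simultaneous induction on $\ord{\theta}$, not one taken as a black box for the other.

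None of this is needed. \thmref{clsubst}, as actually stated, says that $s \Rab{\xR}{*} t$ and $\theta \Rab{\xR}{*} \theta'$ imply $s\theta \Rab{\xR}{*} t\theta'$, so it already contains both halves. Apply it to $\ell \Rb{\xR} r$ and the induction hypothesis $\theta \Rab{\xR}{*} \delta$. This gives $\ell\theta \Rab{\xR}{*} r\delta$ at once. Restoring the binders $\vec[k]{x}$ is harmless, since rewrite steps are defined on subterms together with their binders. This is the paper's one-line argument for case \enuref{ms2}, so your ``main obstacle'' disappears and no longer has to be re-proved by hand.
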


For orthogonal DPRSs $\xR$, we will obtain the diamond property for
$\MSb{\xR}$ via the \emph{triangle property} \cite{T03} of $\MSb{\xR}$
(\lemref{multistepTriangle})
with respect to the following restriction of $\MSb{\xR}$.

\begin{definition}
Let $\xR$ be an \textup{HRS}. We define the
\emph{maximal multi-step} relation $\MMSb{\xR}$ inductively as
follows:
\begin{enumerate}[(i)]
\item \label{enu:mms1}
$\vec[k]{x}.h(\vec{s}) \MMSb{\xR} \vec[k]{x}.h(\vec{t})$
if $\vec[k]{x}.s_i \MMSb{\xR} \vec[k]{x}.t_i$ for all $1 \leq i \leq n$ and
$h(\vec{s})$ is not a redex, and
\item \label{enu:mms2}
$\vec[k]{x}.\ell\theta \MMSb{\xR} \vec[k]{x}.r\delta$
if $\ell \R r \in \xR$, $\dom{\theta} \subseteq \fv{\ell}$ and
$\theta \MMSb{\xR} \delta$.
\end{enumerate}
\end{definition}

Note that the relation $\MMSb{\xR}$ is deterministic if $\xR$ is an
orthogonal DPRS:
Since matching is deterministic and there are no root
overlaps, the rule $\ell \R r$ as well as the substitution $\theta$ in
case \enuref{mms2} are uniquely determined and the substitution $\delta$
is uniquely determined by induction.
Using our definition of critical pairs, we establish
\emph{coherence} of orthogonal DPRSs.

\begin{restatable}{lemma}{lhssubt}
\label{lem:lhsSubt}
Let $\xR$ be an orthogonal \textup{DPRS}, $\ell \R r \in \xR$ and
$\ell \subt s$. If $s\theta \MSb{\xR} t$ then there exists a substitution
$\delta$ such that $s\delta = t$ and $\theta|_{\fv{s}} \MSb{\xR} \delta$.
\end{restatable}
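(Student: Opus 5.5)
We argue by induction on the derivation of $s\theta \MSb{\xR} t$, and within that on the structure of $s$. Here $s$ is a proper subterm of the left-hand side $\ell$ of a rule of an orthogonal DPRS, so $s = \vec[k]{x}.h(\vec{s})$ is a DHP up to its bound-variable context. Whenever free variables of $\ell$ occur in $s$, they occur with anchors as arguments. Two situations arise, according to the head of $s$.

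\emph{Case 1: the head $h$ is a free variable $X \in \fv{s}$.} Then $s = \vec[k]{x}.X(\vec[m]{u})$, where the anchors $\vec[m]{u}$ are expanded, are rigid and contain no free variables. We take $\delta(X)$ to be the reduct of $\theta(X)$ and must show that $t$ has the shape $X(\vec[m]{u})\delta$. The plan is to abstract over the anchors. We write $s\theta = \theta(X)\SET{\vec[m]{z} \mapsto \vec[m]{u}}$. Since $\xR$ is left-linear and non-ambiguous, no redex contracted in the multi-step can overlap an anchor occurrence. An overlap with an anchor, or with the context built around it, would yield a variable overlap in the sense of \defref{overlap}\,(iv)(b), and hence a critical pair with $l = 0$ after lifting. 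Variable overlaps excluded by condition (1) are genuine anchor overlaps, which are excluded because the anchors contain function symbols at positions of $\ell$ and would yield ordinary critical pairs. Those excluded by condition (2) would make $\theta(Y')$ an instance of $\ell_1$, and we handle them by pushing the redex into $\theta(X)$. Each contracted redex therefore lies either entirely inside the part of $s\theta$ contributed by $\theta(X)$ (with anchor occurrences acting as parameters) or can be moved there. Consequently $t = \theta'(X)\SET{\vec[m]{z} \mapsto \vec[m]{u}}$ with $\theta(X) \MSb{\xR} \theta'(X)$, and we set $\delta = \SET{X \mapsto \theta'(X)}$. Making this precise needs an auxiliary lemma: if $v\SET{\vec z \mapsto \vec u} \MSb{\xR} w$ with the anchors $\vec u$ untouched, then $w = v'\SET{\vec z \mapsto \vec u}$ with $v \MSb{\xR} v'$. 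This would be proved by induction on the multi-step, using \thmref{matchUnique} to recover $v'$ uniquely from $w$.

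\emph{Case 2: the head $h$ is a function symbol or a bound variable.} If the multi-step at the root uses clause (i), then $t = \vec[k]{x}.h(\vec{t})$ with $s_i\theta \MSb{\xR} t_i$. The induction hypothesis applies to each $s_i$, since $\ell \subt s \subt s_i$, and yields substitutions $\delta_i$ with $s_i\delta_i = t_i$. Left-linearity makes the sets $\fv{s_i}$ pairwise disjoint, so $\delta = \bigcup_i \delta_i$ is a well-defined substitution with $s\delta = t$. If instead clause (ii) applies at the root, then $s\theta = \ell'\theta'$ for some rule $\ell' \R r'$. The position of $s$ in $\ell$ is a non-root functional position, since the head of $s$ is a function symbol. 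This yields an overlap of $\ell' \R r'$ into $\ell \R r$ at that position, with $l = 0$. The minimal complete set of unifiers is nonempty because $\theta$ (suitably lifted via $\delta$ to account for bound variables) unifies the two terms, so we obtain a critical pair, contradicting $\cp{\xR} = \varnothing$. The same argument applies when $h$ is a bound variable, because such a head can never be a redex.

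The main obstacle is Case 1. There we must show that a multi-step redex cannot straddle the boundary between $\theta(X)$ and the anchors. Such a straddling redex corresponds exactly to the variable overlaps of \defref{overlap}, including the lifted context $\vec[l]{w}$ that accounts for bound variables introduced inside $\theta(X)$. I would first show that any straddling redex yields a lifted critical pair in $\lcp{\xR}$, and then reduce from lifted critical pairs to $l = 0$ critical pairs by instantiation, as in the argument behind \lemref{lcpjoin}. Emptiness of $\cp{\xR}$ then also excludes the lifted ones. Finally, the redexes excluded by condition (2) have to be re-expressed as redexes inside $\theta(X)$, which I expect to be the most delicate bookkeeping step.
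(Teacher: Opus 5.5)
Your proposal is correct and follows essentially the same route as the paper. It uses the same split on the head of $s$, and the same non-ambiguity contradiction (via \lemref{unifierExistence}) for a root redex when the head is rigid. For a variable head it abstracts over the anchors, and non-ambiguity forces every redex of $u\SEt{\vec{y} \mapsto \vec{s}}$ at the root to be either an anchor or covered by condition~(2), i.e.\ an instance of a redex of $u$ itself; this is exactly the content of \lemref{lhsSubtAux}.

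The differences are only organizational. The paper gets terminality of the anchors from the induction hypothesis, since they are closed proper subterms of $\ell$, rather than from ordinary critical pairs as you do. It also carries out the overlap analysis inside the induction of the auxiliary lemma, instead of as a separate global ``no straddling redex'' claim.
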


We can now just replay the proof of the triangle property for PRSs
adapted to our presentation of multi-steps \cite{MN98}.

\begin{restatable}{lemma}{multisteptriangle}
\label{lem:multistepTriangle}
For every orthogonal \textup{DPRS} $\xR$, if $s \MSb{\xR} t$ then
$t \MSb{\xR} u$ for the term $u$ that satisfies $s \MMSb{\xR} u$.
\end{restatable}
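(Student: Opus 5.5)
We prove the triangle property by induction on the derivation of $s \MSb{\xR} t$ (equivalently on the structure of $s$), with case analysis on the last rule applied. Throughout we use that $\MMSb{\xR}$ is deterministic for orthogonal DPRSs, so the term $u$ with $s \MMSb{\xR} u$ is unique. We also need that $\MSb{\xR}$ is reflexive and closed under substitution. For the latter, $\theta \MSb{\xR} \theta'$ should imply $v\theta \MSb{\xR} v\theta'$, and $v \MSb{\xR} v'$ together with $\theta \MSb{\xR} \theta'$ should imply $v\theta \MSb{\xR} v'\theta'$. This is proved by a separate induction on $(\ord{\theta}, v)$, following the definition of hereditary substitution. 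I expect it to be stated as an auxiliary lemma analogous to the one in \cite{MN98}.

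\emph{Case \enuref{ms1}:} $s = \vec[k]{x}.h(\vec{s})$ and $t = \vec[k]{x}.h(\vec{t})$ with $\vec[k]{x}.s_i \MSb{\xR} \vec[k]{x}.t_i$. There are two subcases.

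If $h(\vec{s})$ is not a redex, then $u$ arises by \enuref{mms1}, so $u = \vec[k]{x}.h(\vec{u})$ with $\vec[k]{x}.s_i \MMSb{\xR} \vec[k]{x}.u_i$. The induction hypothesis gives $\vec[k]{x}.t_i \MSb{\xR} \vec[k]{x}.u_i$, and \enuref{ms1} closes the case.

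If $h(\vec{s})$ is a redex, then $h(\vec{s}) = \ell\theta$ and $u = \vec[k]{x}.r\delta'$ with $\theta \MMSb{\xR} \delta'$. Here the reduction $s \MSb{\xR} t$ takes place strictly below the root of the redex. Since $h(\vec{s})$ is built from the arguments $s_i$ and each $\ell|_i \subt$-relates appropriately to $\ell$, we apply \lemref{lhsSubt} to each argument $\ell|_i$ (with the binder context $\vec[k]{x}$ handled by closing over it). This yields $\delta$ with $\ell\delta = h(\vec{t})$ and $\theta|_{\fv{\ell}} \MSb{\xR} \delta$. Left-linearity is what makes the per-argument substitutions combine into one $\delta$. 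For each $X \in \fv{\ell}$, $\theta(X)$ is a proper subterm-like part of $s$ (its body appears in $\ell\theta$, as noted after the multi-step definition). So the induction hypothesis applies to $\theta(X) \MSb{\xR} \delta(X)$ and $\theta(X) \MMSb{\xR} \delta'(X)$, giving $\delta \MSb{\xR} \delta'$. Then \enuref{ms2} yields $t = \vec[k]{x}.\ell\delta \MSb{\xR} \vec[k]{x}.r\delta' = u$.

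\emph{Case \enuref{ms2}:} $s = \vec[k]{x}.\ell\theta$ and $t = \vec[k]{x}.r\delta$ with $\theta \MSb{\xR} \delta$. Since $s$ is a redex, $u$ is obtained by \enuref{mms2}. Uniqueness of DHP matching (\thmref{matchUnique}) and the absence of root overlaps ($\cp{\xR} = \varnothing$ excludes non-variant rules overlapping at $\epsilon$) make the rule and $\theta$ the same. So $u = \vec[k]{x}.r\delta'$ with $\theta \MMSb{\xR} \delta'$. The induction hypothesis on the components of $\theta$ gives $\delta \MSb{\xR} \delta'$. Substitution closure then yields $r\delta \MSb{\xR} r\delta'$, with $\fv{r} \subseteq \fv{\ell}$ and reflexivity on $r$.

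\textbf{Main obstacle.} The main obstacle is the redex subcase of \enuref{ms1}. There one must know that an inner multi-step inside an instance of a left-hand side cannot destroy the redex. For DPRSs this includes steps occurring inside anchors and at variable positions whose arguments are non-variable anchors. This is exactly what \lemref{lhsSubt} (coherence) provides, and its proof is where the lifted variable overlaps of \defref{overlap} are needed. The secondary technical point is justifying the induction measure. Since $\theta(X)$ may be applied to anchors, I would measure by the size of the derivation of $s \MSb{\xR} t$ rather than by the subterm relation on $s$.
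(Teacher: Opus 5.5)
Your proposal is correct and follows the paper's proof essentially step for step. It uses the same case split on the last multi-step rule and on whether $h(\vec{s})$ is a redex, \lemref{lhsSubt} plus left-linearity to assemble $\delta$ in the redex subcase, and the induction hypothesis on the components of $\theta$ followed by \lemref{msSubst} (with reflexivity) in case \enuref{ms2}. One small caution applies equally to the paper's phrasing ``induction on the derivation'': in the redex subcase of \enuref{ms1} the derivations $\theta(X) \MSb{\xR} \delta(X)$ come from \lemref{lhsSubt} rather than being subderivations, so derivation size is not obviously decreasing there; the size of the source term $s$ is the cleaner measure, and it works because applying $\theta(X)$ to expanded anchors never decreases size, so each $\theta(X)$ is strictly smaller than $\ell\theta$.
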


\begin{theorem}
\label{thm:orthoConfluent}
Orthogonal \textup{DPRSs} are confluent.
\end{theorem}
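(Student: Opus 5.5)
The plan is the standard diamond-property argument for multi-steps.

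\emph{Step 1: diamond property.} Let $\xR$ be an orthogonal DPRS and consider a peak $t \MLSb{\xR} s \MSb{\xR} t'$. Since $\MMSb{\xR}$ is a restriction of $\MSb{\xR}$, every term has an $\MMSb{\xR}$-reduct. This follows by induction on the term: at a redex, apply case~\enuref{mms2} with the unique matching rule and substitution, using the induction hypothesis on $\theta$. Otherwise apply case~\enuref{mms1}. So let $u$ be a term with $s \MMSb{\xR} u$. By \lemref{multistepTriangle}, applied once to each side of the peak, we obtain both $t \MSb{\xR} u$ and $t' \MSb{\xR} u$. Hence ${\MLSb{\xR} \cdot \MSb{\xR}} \subseteq {\MSb{\xR} \cdot \MLSb{\xR}}$.

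\emph{Step 2: from diamond to confluence.} A relation with the diamond property is confluent. This is the usual tiling argument: by induction, first on the length of one side of the peak and then on the other, close every peak of $\MSb{\xR}$-sequences. Therefore $\Rab{}{*}$ taken for $\MSb{\xR}$ is confluent.

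\emph{Step 3: transfer to $\Rb{\xR}$.} By \lemref{msIncl} we have ${\Rb{\xR}} \subseteq {\MSb{\xR}} \subseteq {\Rab{\xR}{*}}$. Taking reflexive-transitive closures shows that the transitive-reflexive closure of $\MSb{\xR}$ coincides with $\Rab{\xR}{*}$. Confluence of $\xR$ follows.

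The only point needing care is the existence of the maximal multi-step reduct in Step~1. Determinism of that reduct is not needed; existence suffices. To ensure existence, I would verify that $\theta(X)$ is a proper subterm (up to bound-variable context) of $\ell\theta$ for DHP left-hand sides, so that the recursion on $\theta$ is well-founded. This holds because free variables of DHPs are not nested and their arguments contain bound variables. Everything else is routine.
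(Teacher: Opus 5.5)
Your proposal is correct and is essentially the paper's proof: \lemref{multistepTriangle} yields the triangle property and hence the diamond property of $\MSb{\xR}$, and the sandwich ${\Rb{\xR}} \subseteq {\MSb{\xR}} \subseteq {\Rab{\xR}{*}}$ from \lemref{msIncl} transfers confluence to $\Rb{\xR}$. Your explicit check that every term has a $\MMSb{\xR}$-reduct makes visible what the paper presupposes with ``the term $u$'' in \lemref{multistepTriangle}; however, $\theta(X)$ is not literally a subterm of $\ell\theta$, so justify the recursion by term size instead, which does not decrease when expanded anchors are substituted for bound variables.
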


\begin{proof}
Let $\xR$ be an orthogonal DPRS.
By \lemref{multistepTriangle}, $\MSb{\xR}$ has the triangle property
and therefore also the diamond property \cite{T03}. Since we also have 
${\R} \subseteq {\MSb{\xR}} \subseteq {\Ra{*}}$
(\lemref{msIncl}), we conclude confluence of $\xR$ by standard
results for abstract rewriting, e.g.~\cite[Corollary 2.7.7]{BN98}.
\end{proof}

\begin{example}
Recall the DPRS $\xR$ presented in \exaref{fusion} and consider
the additional function symbol $\m{from} : \m{a} \to \m{b}$.
We define the DPRS $\xR'$ as the extension of $\xR$ by the
nonterminating rule
$\m{from}(X) \R \m{cons}(X,\m{from}(s(X)))$
which can generate infinite lists $[i,i+1,\dots]$.
Clearly, $\xR'$ is not terminating. However, it is left-linear and
non-ambiguous, so $\xR'$ is orthogonal and therefore confluent by
\thmref{orthoConfluent}.
\end{example}

\section{Case Study: Program Transformations}
\label{sec:progtrans}

It is not a coincidence that we were able to capture an interesting
program transformation as a DPRS in \exaref{fusion}.  In fact, as we
will see in this section, DPRSs can express a large class of program
transformations for functional programs. In particular, this
means that we can
use the results from the previous section
to analyze confluence of such program transformations.
Confluent program transformations
are well-behaved in the sense that there do not exist different
``optimized'' versions of a program which cannot be transformed into
each other.

Since we want to express program transformations purely by rewriting,
the inductive reasoning common to recursive definitions
has to be carried out through \emph{fusion transformation rules}
\cite{dMS99}. For each datatype, there is a fold function as well
as a corresponding fusion rule. In the following,
we will use Haskell-like notation to speak about
concepts from functional programming. For example, the
fusion rule for $\m{foldr}$ on lists is defined as follows:
\[
f\:(\m{foldr}\:({\oplus})\:e\:xs) \,=\,
\m{foldr}\:({\otimes})\:(f\:e)\:xs \quad \text{if} \quad
\lambda a,z.f\:(a \oplus z) \,=\, \lambda a,z.a \otimes f\:z
\]
Here, everything except for $\m{foldr}$ is a variable. In particular,
the binary operators $\oplus$ and $\otimes$ are higher-order variables
which can be instantiated by concrete combining functions of folds.
The general
fusion rule cannot be used in DPRSs, but each specific instance
of $f$ turns $f\:(\m{foldr}\:({\oplus})\:e\:xs)$ and
$\lambda a,z.a \otimes f\:z$ into DHPs. Hence, we can define two
DPRS rules
\begin{align*}
f\:(\m{foldr}\:({\oplus})\:e\:xs) &\R
\m{check}\:(\lambda a,z.f\:(a \oplus z))\:e\:xs\\
\m{check}\:(\lambda a,z.a \otimes f\:z)\:e\:xs &\R
\m{foldr}\:({\otimes})\:(f\:e)\:xs
\end{align*}
by unraveling the conditional equation.%
\footnote{For a completely faithful encoding, one would have to
introduce conditional DPRSs.}
This way, we can
rewrite concrete instances of the left-hand side
of the condition such that the right-hand side matches the
resulting term. Using this match, we are then able construct the
fused fold. Note that fusion transformations are
also applicable to programs written using explicit
recursion instead of folds by using the fact that the identity function
on any type is itself an instance of the fold function on that type.
For lists we obtain $\m{id} = \m{foldr}\:(:)\:[]$.

We now present two different examples which demonstrate the
usability of (confluence analysis of) DPRSs for program
transformations with fusion rules.
The program transformations modeled in the examples are taken from
\cite[Section 4]{dMS99}.
In both examples, we want to transform the naive recursive definition of
a function into an efficient definition using a fold by rewriting.
To that end, we use a rewrite system consisting of the recursive
definition of the given function $f$, a definition of the fast version
in terms of $f$, the corresponding fusion transformation
rules and additional rewrite rules which represent the required facts
to derive the fast version.

\begin{table}[t]
\centering
\caption{Signature for the DPRS $\xR$ in \exaref{fastrev}.}
\label{tab:fastrevfs}  
\begin{tabular}{c@{\quad}c@{\qquad}c@{\quad}c}
symbol & type & symbol & type \\ \midrule
$\m{nil}$ & $\m{b}$ &
$\m{foldr}_1$ & $((\m{a}, \m{b}) \to \m{b}, \m{b}, \m{b}) \to \m{b}$ \\
$\m{cons}$ & $(\m{a}, \m{b}) \to \m{b}$ &
$\m{foldr}_2$ & $((\m{a}, \m{b} \to \m{b}, \m{b}) \to \m{b},
\m{b} \to \m{b}, \m{b}, \m{b}) \to \m{b}$ \\
$\app$ & $(\m{b},\m{b}) \to \m{b}$ &
$\m{checkRev}$ & $((\m{a},\m{b}) \to \m{b},\m{b},\m{b}) \to \m{b}$ \\
$\m{sing}$ & $\m{a} \to \m{b}$ &
$\m{checkApp}$ &
$((\m{a},\m{b},\m{b}) \to \m{b}),\m{b},\m{b},\m{b}) \to \m{b}$ \\
$\m{rev}$ & $\m{b} \to \m{b}$ &
$\m{fastRev}$ & $(\m{b},\m{b}) \to \m{b}$
\end{tabular}
\end{table}

\begin{example}
\label{exa:fastrev}
Consider the sorts $\m{a}$, $\m{b}$ as well as the function
symbols given in \tabref{fastrevfs}.
The DPRS $\xR$ consisting of the following rewrite rules
\begin{align*}
\m{nil} \app Y &\R Y &
\m{sing}(X) &\R \m{cons}(X,\m{nil}) \\
\m{cons}(X,Y) \app Z &\R \m{cons}(X,Y \app Z) &
\m{rev}(\m{nil}) &\R \m{nil} \\
(Y \app Z) \app W &\R Y \app (Z \app W) &
\m{rev}(\m{cons}(X,Y)) &\R \m{rev}(Y) \app \m{sing}(X)
\end{align*} \vspace{-5ex}
\begin{align*}
&\m{rev}(\m{foldr}_1(z_1,z_2.G(z_1,z_2),E,Y)) \R
\m{checkRev}(z_1,z_2.\m{rev}(G(z_1,z_2)),E,Y) \\
&\m{checkRev}(z_1,z_2.H(z_1,\m{rev}(z_2)),E,Y) \R
\m{foldr}_1(z_1,z_2.H(z_1,z_2),\m{rev}(E),Y) \\
&\m{fastRev}(Y,Z) \R
\m{rev}(\m{foldr}_1(z_1,z_2.\m{cons}(z_1,z_2),\m{nil},Y)) \app Z  
\end{align*} \vspace{-5ex}
\begin{align*}
&\m{foldr}_1(z_1,z_2.G(z_1,z_2),E,Y) \app Z \\
&\quad \R \m{checkApp}(z_1,z_2,z_3.G(z_1,z_2) \app z_3,E,Y,Z) \\
&\m{checkApp}(z_1,z_2,z_3.F(z_1,z_4.z_2 \app z_4,z_3),E,Y,Z) \\
&\quad \R \m{foldr_2}(z_1,z_2,z_3.F(z_1,z_4.z_2(z_4),z_3),z.E \app z,Y,Z)
\end{align*}
contains a naive recursive definition of the reverse function on lists.
As can be seen by inspecting the rule for $\m{fastRev}$, the idea is
to use an accumulation parameter in order to avoid using $\app$.
In this example, we need two fusion transformations,
one transforming $\m{rev}$ into a fold and another one for $\app$
which is intended to transform the definition of $\m{fastRev}$ in terms
of the slow $\m{rev}$ function into the desired efficient fold expression:
{\allowdisplaybreaks
\begin{align*}
\m{fastRev}(Y,Z)
&\RbR \m{rev}(\m{foldr}_1(z_1,z_2.\m{cons}(z_1,z_2),\m{nil},Y)) \app Z
\\
&\RbR \m{checkRev}(z_1,z_2.\m{rev}(\m{cons}(z_1,z_2)),\m{nil},Y) \app Z
\\
&\RbR
\m{checkRev}(z_1,z_2.\m{rev}(z_2) \app \m{sing}(z_1),\m{nil},Y) \app Z \\
&\RbR
\m{foldr}_1(z_1,z_2.z_2 \app \m{sing}(z_1),\m{rev}(\m{nil}),Y) \app Z \\
&\RbR
  \m{foldr}_1(z_1,z_2.z_2 \app \m{sing}(z_1),\m{nil},Y) \app Z \\
&\RbR
\m{checkApp}(z_1,z_2,z_3.(z_2 \app \m{sing}(z_1)) \app z_3,\m{nil},Y,Z)
\\
&\RbR
\m{checkApp}(z_1,z_2,z_3.z_2 \app (\m{sing}(z_1) \app z_3),\m{nil},Y,Z)
\\
&\RbR
\m{foldr}_2(z_1,z_2,z_3.z_2(\m{sing}(z_1) \app z_3),z.\m{nil} \app z,Y,Z)
\\
&\Rab{\xR}{+} \m{foldr}_2(z_1,z_2,z_3.z_2(\m{cons}(z_1,z_3)),z.z,Y,Z)
\end{align*}}%
For this computation to succeed, it is important to apply associativity
of $\app$ before using the rule defining $\m{checkApp}$. In fact, $\xR$
is not locally confluent. Our prototype implementation reports two
critical pairs; one of these corresponds to the problem at hand. These
critical pairs are neither joinable nor simplify to an equation which can
be oriented into a DPRS rule. Even though this means that there is no
confluent DPRS representing this program transformation,
the non-joinable critical pairs which are automatically computed by our
prototype yield separation constraints for a stratification of $\RbR$
into a rewrite process which yields unique normal forms.
While we get unique normal forms from any such stratification,
it is important to choose the one which yields the desired normal
form: If we chose to exhaustively apply the rule defining $\m{checkApp}$
before the associativity rule of $\app$, we would obtain the unique but
undesired normal form
$\m{foldr}_2(z_1,z_2,z_3.\m{app}(z_2(\m{cons}(z_1,\m{nil})),z_3),z.z,Y,Z)$.
Hence, in order to make this decision, one needs more information than
the one inherent in the non-joinable critical pairs.
\end{example}

In our second example, we use natural numbers instead of lists.
Hence, instead of $\m{foldr}$ we use the fold function on natural numbers
denoted by $\m{foldn}$. Moreover, the intended program transformation
uses tupling instead of accumulation parameters. We define a specific
instance of the fold fusion rule for the function
$\m{split}\:f\:g\:a \,=\, (f\:a,g\:a)$ \cite{dMS99}:
\[
\m{split}\:f\:g\:(\m{foldn}\:s\:z\:x) \,=\,
\m{foldn}\:(\m{uncurry}\:h)\:(\m{split}\:f\:g\:z)\:x
\]
if
$\lambda y.\m{split}\:f\:g\:(s\:y)
\,=\, \lambda y.h\:(f\:y)\:(g\:y)$.
We follow the same general procedure as for the original fusion rule;
this time we have to ensure that $f$ and $g$ are instantiated such
that $\m{split}\:f\:g\:(\m{foldn}\:s\:z\:x)$ and
$\lambda y.h\:(f\:y)\:(g\:y)$ are DHPs.
\begin{table}[t]
\centering
\caption{Signature for the DPRS $\xR$ in \exaref{fastfib}.}
\label{tab:fastfibfs}  
\begin{tabular}{c@{\quad}c@{\qquad}c@{\quad}c}
symbol & type & symbol & type \\ \midrule
$\m{0}$ & $\m{a}$ &
$\m{foldn}_1$ & $(\m{a} \to \m{a}, \m{a}, \m{a}) \to \m{a}$ \\
$\m{s}$, $\m{fib}$ & $\m{a} \to \m{a}$ &
$\m{foldn}_2$ & $(\m{b} \to \m{b}, \m{b}, \m{a}) \to \m{b}$ \\
${+}$ & $(\m{a},\m{a}) \to \m{a}$ &
$\m{check}$ & $(\m{a} \to \m{b},\m{b},\m{a}) \to \m{b}$ \\
$\m{fastFib}$ & $\m{a} \to \m{b}$ &
$\m{uncurry}$ & $((\m{a},\m{a}) \to \m{b}, \m{b}) \to \m{b}$ \\
$\m{pair}$ & $(\m{a},\m{a}) \to \m{b}$ &
$\m{split}$ & $(\m{a} \to \m{a},\m{a} \to \m{a}, \m{a}) \to \m{b}$
\end{tabular}
\end{table}

\begin{example}
\label{exa:fastfib}
Consider the sorts $\m{a}$, $\m{b}$ as well as the function
symbols given in \tabref{fastfibfs}.
The DPRS $\xR$ consisting of the following rewrite rules
\begin{align*}
&\m{fib}(\m{0}) \R \m{0} \qquad
\m{fib}(\m{s}(\m{0})) \R \m{s}(\m{0}) \qquad
\m{fib}(\m{s}(\m{s}(X))) \R \m{fib}(\m{s}(X)) + \m{fib}(X) \\  
&\m{split}(z.F(z),z.G(z),X) \R \m{pair}(F(X),G(X)) \\
&\m{fastFib}(X) \R
\m{split}(z.\m{fib}(z),z.\m{fib}(\m{s}(z)),\m{foldn}_1(z.\m{s}(z),\m{0},X))
\\
&\m{split}(z.F(z),z.G(z),\m{foldn}_1(z.S(z),Z,X)) \\
&\quad \R \m{check}(z_1.\m{split}(z_2.F(z_2),z_2.G(z_2),z_1),
\m{split}(z.F(z),z.G(z),Z),X) \\
&\m{check}(z.H(\m{fib}(z),\m{fib}(\m{s}(z))),Z,X) \R
\m{foldn}_2(z_1.\m{uncurry}(z_2,z_3.H(z_2,z_3),z_1),Z,X)
\end{align*}
contains a naive recursive definition of a function computing the
Fibonacci numbers. As can be seen by inspecting the rule for
$\m{fastFib}$, the idea is to compute tuples of successive Fibonacci
numbers simultaneously instead of computing them one at a time. Using
$\xR$ we can compute an efficient version of $\m{fastFib}$:
{\allowdisplaybreaks\begin{align*}
&\mr{\phantom{\RbR}} \m{fastFib}(X) \\
&\RbR
\m{split}(z.\m{fib}(z),z.\m{fib}(\m{s}(z)),\m{foldn}_1(z.\m{s}(z),\m{0},X))
\\
&\RbR \m{check}(z_1.\m{split}(z_2.\m{fib}(z_2),z_2.\m{fib}(\m{s}(z_2)),
\m{s}(z_1)),\m{split}(z.\m{fib}(z),z.\m{fib}(\m{s}(z)),\m{0},X)) \\
&\Rab{\xR}{+}
\m{check}(z_1.\m{pair}(\m{fib}(\m{s}(z_1)),\m{fib}(\m{s}(\m{s}(z_1)))),
\m{pair}(\m{fib}(\m{0}),\m{fib}(\m{s}(\m{0}))),X) \\
&\Rab{\xR}{+}
\m{check}(z_1.\m{pair}(\m{fib}(\m{s}(z_1)),\m{fib}(\m{s}(\m{s}(z_1)))),
\m{pair}(\m{0},\m{s}(\m{0})),X) \\
&\RbR
\m{check}(z_1.\m{pair}(\m{fib}(\m{s}(z_1)),
\m{fib}(\m{s}(z_1)) + \m{fib}(z_1)),\m{pair}(\m{0},\m{s}(\m{0})),X) \\
&\RbR
\m{foldn}_2(z_1.\m{uncurry}(z_2,z_3.\m{pair}(z_3,z_3 + z_2),z_1),
\m{pair}(\m{0},\m{s}(\m{0})),X)
\end{align*}}%
As before, our prototype implementation reports two non-joinable critical
pairs, so $\xR$ is not locally confluent. This time, it is necessary to
apply the $\m{split}$ fusion transformation rule before the rule defining
$\m{split}$ in order to get the desired normal form.
\end{example}

Fusion transformation rules are prominent in the rewrite rules of
Haskell's GHC compiler \cite{PJTH01}. Recently, the rewrite rules of
GHC have been extended to Miller's patterns \cite{R24}. The results
given in the present paper serve as a perfect starting point for a further
extension to DHPs. In particular, the non-joinable critical pairs
automatically given by our prototype implementation yield separation
constraints for stratifying the rewrite relation into a phased rewrite
process with unique normal forms. At this point, the concrete order of
the stratification which yields the desired normal form has to be given
manually. In future work, our automatic confluence analysis could be
augmented by additional static analysis techniques in order to
(partially) automate the whole process.

\section{Conclusion}
\label{sec:conclusion}

By giving a complete definition of overlaps at variable positions with
at most one critical peak per unifier in the minimal complete set of
unifiers, we lifted the notion of higher-order critical pairs from PRSs
to DPRSs. DPRSs do not only include PRSs but also allow for succinct
representations of complex higher-order computations including
program transformations from functional programming.

For finite terminating PRSs, confluence is decidable because we
only have to check joinability of finitely many critical pairs.
We were able to show that this property at least carries over to
left-linear DPRSs. The computation
of critical pairs and the corresponding joinability checks for
all examples in this paper can be done automatically by our
prototype implementation which is not limited to left-linear DPRSs
as demonstrated by its applicability to \exaref{sdu}.

Beyond the two basic confluence results established in
\secref{conf}, our definition of critical pairs lays the
foundation for adapting many more confluence methods for
non-terminating DPRSs such as
development-closedness \cite{vO97} and even methods for
non-left-linear systems \cite{FJ25}. All these confluence criteria
demand special joinability conditions on (variations of) critical pairs.
%\exaref{repl} also hints at
%the possibility of a higher-order completion procedure for DPRSs.
%Finally, it is important to note that DPRSs are often inherently
%non-confluent when nested free variables in critical pairs
%stemming from variable overlaps cannot be removed by
%rewriting. In this case, the simplified critical pair
%cannot be oriented into a DPRS rule.
%Hence, as we have already seen
%in \exaref{sducps}, there is a need for an even more permissive class
%of higher-order rewrite systems in order to ensure confluence of
%DPRSs. However, going in this direction completely changes the nature
%of our model of computation: For such rewrite systems, matching is not
%deterministic anymore; a given rule can now be applied in
%many different ways at a single position. Consequently,
%the number of possible rewrite sequences starting from a term and therefore
%also the number of critical pairs explodes as illustrated in \exaref{hrs}.
%For now, as shown in
%\secref{progtrans}, the main strength of confluence analysis for DPRSs
%lies in the automatic generation of negative answers which
%can be used to stratify DPRS computations to yield the desired unique
%normal forms.

\bibliography{references}

\appendix

\section{Proof of \thmref{unifdecidable}}

In this section, we assume familiarity with and use notation from
\cite{NM26a}. In the following, the \emph{size} $|s|$ of
a term is defined inductively as
$|\vec[k]{x}.h(\vec{t})| = 1 + \sum_{1 \leq i \leq n} t_i$.
Similarly, the \emph{height} $\m{h}(s)$ of a term is defined inductively
as $\m{h}(\vec[k]{x}.h(\vec{t})) =
1 + \m{max}\,\SET{\m{h}(t_i) \mid 1 \leq i \leq n}$.

\begin{definition}
Let $E$ be a set of unordered equations. Its free variables are
defined as
$\fv{E} = \bigcup \SET{\fv{s} \cup \fv{t} \mid s \approx t \in E}$.
Furthermore, $E$ is \emph{linear} if each $X \in \fv{E}$ occurs only once
in $E$. For linear $E$ and $X \in \fv{E}$ we define the
\emph{cost function} $\m{c} : \fv{E} \to \mathbb{N} \cup \SET{\infty}$ as
follows:
\[
\m{c}(X) = \begin{cases}
\m{h}(t) &\text{if $s \approx t \in E$ where $X$ is the head of $s$} \\
\m{h}(t_i) &\text{if $\vec[k]{x}.h(\vec{s}) \approx \vec[k]{x}.h(\vec{t})
\in E$ where $X$ is the head of $s_i$} \\
\infty &\text{otherwise}                        
\end{cases}
\]
\end{definition}

\begin{lemma}
\label{lem:lininvariant}
If $(E,\theta)$ is a unification pair with linear $E$
and $(E,\theta) \ustep (E',\theta')$ then $E'$ is also linear.
\end{lemma}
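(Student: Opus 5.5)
Since the inference rules of the unification procedure of \cite{NM26a} are not listed in this paper, the plan is a case analysis over those rules, each shown to preserve the invariant that every free variable occurs at most once across the whole equation set $E'$. The base observation is that a rule only rewrites a single selected equation $s \approx t \in E$ (together with possibly applying a substitution to the rest), so it suffices to track where each variable of $E'$ comes from. I would group the rules into three kinds: decomposition-like rules (decomposition of rigid-rigid pairs, deletion of trivial equations, failure rules), elimination/binding rules (solving $\vec{x}.X(\vec{t}) \approx u$ by binding $X$), and imitation/projection rules for flex-rigid pairs, plus the most general flex-flex rule.

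For decomposition, $\vec[k]{x}.h(\vec{s}) \approx \vec[k]{x}.h(\vec{t})$ is replaced by $\vec[k]{x}.s_i \approx \vec[k]{x}.t_i$; every free variable of the new equations already occurred in the old one exactly once, and the remaining equations are untouched, so occurrences are merely redistributed and linearity is preserved. Deletion and failure are trivial. For the binding rules, the key point is that the substitution $\{X \mapsto u'\}$ is applied to $E \setminus \{s \approx t\}$; since $E$ is linear, $X$ occurs nowhere else, so this application is the identity on the remaining equations and the equation itself is removed. Hence $\fv{E'} \subseteq \fv{E} \setminus \{X\}$ with unchanged multiplicities.

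For imitation and projection on a flex-rigid pair $\vec{x}.X(\vec[m]{t}) \approx \vec{x}.f(\vec{u})$, the rule introduces a binding $X \mapsto \vec[m]{y}.f(H_1(\vec[m]{y}\cf),\dots,H_n(\vec[m]{y}\cf))$ (or the projection analogue) with fresh $H_j$, and then rewrites the selected equation. Again $X$ does not occur elsewhere, so the substitution only touches the selected equation, producing $\vec{x}.f(H_1(\vec{t}),\dots) \approx \vec{x}.f(\vec{u})$ in which each fresh $H_j$ occurs once; the anchors $\vec{t}$ are duplicated, but by condition~\enuref{dhp-var} of \defref{dhp} they contain no free variables, which is exactly what makes duplication harmless. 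The projection case is analogous: $X$ is replaced by something built from one anchor $t_i$ and fresh variables, and $t_i$ is closed relative to free variables. The flex-flex case binds both heads in terms of a single fresh variable; since both original heads occur only in the selected equation and the equation is discarded, the fresh variable does not appear in $E'$ at all (only in $\theta'$).

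The main obstacle is being faithful to the exact form of the rules in \cite{NM26a}, particularly whether some rule copies a subterm containing free variables (e.g., an imitation that copies a flexible subterm of the rigid side, or a rule that applies the accumulated substitution $\theta$ to $E$). I would check each rule for whether any copied material originates from anchors only; condition~\enuref{dhp-var} guarantees anchors have no free variables, and this should be the only source of duplication. If some rule instead copies a general subterm, I would argue that the copy replaces (not supplements) the original occurrence, keeping counts at one.
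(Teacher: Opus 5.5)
Your proposal is correct and takes the same route as the paper. Removal and decomposition merely redistribute occurrences. Variable elimination and the flex-flex rule for distinct heads yield $E' \subseteq E$, because the bound head occurs nowhere else in $E$. Imitation and projection alter only the selected equation, replacing the flex side by a linear term whose only free variables are fresh. Duplicating the anchors there is harmless because they are closed under their binders by condition~(i) of the DHP definition, a point the paper leaves implicit.

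The one thing to add is the paper's remark that the flex-flex rule for identical heads can never fire on a linear $E$, so it needs no separate argument.
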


\begin{proof}
The result is trivial for \urem and \udec. For \uvar and \uffne,
$E'$ is linear because $E' \subseteq E$ and $E$ is linear by assumption.
Note that \uffe is not applicable as $E$ is linear.
Finally, consider \uimt and \uprj. By the linearity assumption, only one
equation is altered. Hence, the result follows as
$u\SEt{\vec{x} \mapsto \vec{s}}$ is a linear term which only contains
fresh free variables.
\end{proof}

\begin{lemma}
\label{lem:unifterm}
The derivation relation $\ustep$ terminates for linear finite sets of
unordered equations.
\end{lemma}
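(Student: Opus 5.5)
We prove termination of $\ustep$ on linear finite sets of unordered equations by exhibiting a well-founded measure that strictly decreases with every inference step. By \lemref{lininvariant}, linearity is preserved along any derivation, so the cost function $\m{c}$ is defined at every stage. This invariant is what we exploit: every free variable has exactly one occurrence, and this bounds how much \uimt and \uprj can enlarge the problem.

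The measure we would use is the multiset $M(E)$ of pairs $(\m{c}(X),\dots)$, or more simply a lexicographic combination. The first component is the multiset $\SET{\m{c}(X) \mid X \in \fv{E}}$ of costs, restricted to finite values. The second component is the multiset of sizes $|s|+|t|$ of the equations in $E$. Both are compared with the multiset extension of $>_{\mathbb{N}}$, and the pair lexicographically, which is well-founded by Dershowitz–Manna.

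The cases go through the rules one at a time.
\begin{itemize}
\item \urem deletes an equation and \udec replaces an equation by equations over strictly smaller subterms. Costs of remaining variables do not increase, because the height of the opposite side only shrinks to a subterm's height. Total size strictly decreases.
\item \uvar (binding a variable occurring once) and \uffne (solving a flex-flex pair with distinct heads) remove an equation. Since $E$ is linear, the binding does not touch other equations, so the remaining costs are unchanged. Here $E' \subseteq E$, as already noted in \lemref{lininvariant}.
\item \uimt and \uprj on an equation $\vec{x}.X(\vec{s}) \approx \vec{x}.f(\vec{t})$ are the crucial cases. By linearity only this one equation changes. \uimt replaces $X$ by fresh variables $Y_i$ that end up as heads of arguments paired against $t_i$ after decomposition, so $\m{c}(Y_i) \le \m{h}(t_i) < \m{h}(f(\vec{t})) = \m{c}(X)$. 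One element of the cost multiset is thus replaced by finitely many strictly smaller ones. \uprj replaces $X(\vec{s})$ by an anchor-derived term $u\SEt{\vec{x}\mapsto\vec{s}}$ containing only fresh variables. We need every fresh variable's cost to lie strictly below $\m{c}(X)$, or else $X$'s finite cost to be removed while the fresh variables sit at depth at least one inside the new left side, so that their cost is measured against strict subterms of $f(\vec{t})$ after the next decomposition.
\end{itemize}

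The main obstacle is \uprj, together with the interplay of \uimt with the "otherwise $\infty$" clause. A fresh variable could land in a position not covered by the first two cases of $\m{c}$, receive cost $\infty$, and then the multiset of finite costs would not reflect it.

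To handle this, we first try counting $\infty$-cost variables separately. We argue that variables with cost $\infty$ occur in positions where no \uimt or \uprj step can ever fire on them before a \udec step reduces equation size. So their contribution is covered by the size component of the measure.

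If that fails, the fallback is to refine the measure to the multiset of heights of the rigid sides of all equations. The reason this should work is that, by linearity, \uimt and \uprj only ever expose strict subterms of the rigid side that the eliminated variable was matched against.
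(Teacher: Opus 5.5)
\textbf{There is a genuine gap.} Your overall plan is the right one: a lexicographic measure built from the cost function $\m{c}$ and the size of $E$, with linearity maintained by \lemref{lininvariant}. But restricting the cost multiset to finite values breaks it.

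Under \udec, variables move one level up. A variable of cost $\infty$ can become the head of a side, or of an argument of a same-head equation, and so acquire a finite cost. For example, $\SET{\m{f}(\m{g}(Z)) \approx \m{f}(\m{g}(\m{d}))}$ has no finite costs, yet after one \udec step $Z$ has cost $\m{h}(\m{d}) = 1$. So your first component can strictly increase under \udec, and the size component comes second, so it cannot compensate. Your remark that costs of remaining variables do not increase holds only if $\infty$ is counted.

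The same effect occurs in \uimt and \uprj, which you overlooked. A variable heading some $t_i$ of the rigid side $f(\vec{t})$ has cost $\infty$ before the step. Afterwards its cost is the height of the new opposite argument, which can exceed $\m{c}(X)$. So $\m{c}(X)$ is not merely replaced by smaller elements.

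Your proposed repair does not address this. Whether \uimt or \uprj can fire on $\infty$-cost variables is irrelevant. Covering them by the size component fails for the reason above, and putting size first instead fails because \uimt and \uprj enlarge $E$. The fallback via heights of rigid sides is not worked out. After an imitation the former flex side is itself rigid and may be higher than $f(\vec{t})$, and it is unclear how that measure handles this.

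The missing idea is to keep every free variable in the multiset and to order $\mathbb{N} \cup \SET{\infty}$ with $\infty$ above all natural numbers. Compare by the multiset extension first, then by size. This amounts to comparing first the number of $\infty$-cost variables and only then the finite costs. With this measure:
\begin{itemize}
\item \urem and \udec never increase the multiset, since every cost either stays the same or drops from $\infty$ to a finite value, while the size strictly decreases.
\item \uvar and \uffne delete an equation. By linearity its variables, including $X$ with finite cost, leave $\fv{E}$ without affecting any other cost.
\item For \uimt and \uprj, the second clause of $\m{c}$ does the work immediately after the binding, without waiting for a decomposition. Each fresh variable heads an argument of a same-head equation with $f(\vec{t})$, so it has cost $\m{h}(t_i) < \m{h}(f(\vec{t})) = \m{c}(X)$. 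The only other change is that some $\infty$-cost variables become finite.
\end{itemize}
Your \uprj case stops at ``we need''. Its ``or else'' branch, with fresh variables of cost $\infty$, would actually violate the measure. You must show from the form of the projection binding that the fresh variables get finite costs below $\m{c}(X)$, exactly as for \uimt.
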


\begin{proof}
Given a unification pair $(E,\theta)$ with linear $E$,
the measure $\m{m}(E,\theta) = (M,n)$
with multiset $M = \SET{\m{c}(X) \mid X \in \fv{E}}$
and $n = \sum\SET{|s| + |t| \mid s \approx t \in E}$ is well-defined.
By \lemref{lininvariant}, this is the case for all unification pairs
which may come up during our analysis. Let $>$ be the standard
order on natural numbers and $\succ$ be the
well-founded order on $\mathbb{N} \cup \SET{\infty}$ which extends $>$
with the case $\infty \succ n$
for all $n \in \mathbb{N}$. We use $\geq$ and $\succeq$ to denote the
reflexive closures of $>$ and $\succ$, respectively.
For an arbitrary derivation step $(E,\theta) \ustep (E',\theta')$, we
show $\m{m}(E,\theta) = (M,n) \mathrel{(\succ_\mul,>)_\lex} (M',n') =
\m{m}(E',\theta')$: For \urem and \udec, we have $M \succeq_\mul M'$ and
$n > n'$. For \uvar, \uimt, \uprj and \uffne, we have $M \succ_\mul M'$.
As before, \uffe is not applicable because $E$ is linear.
\end{proof}

\unifdecidable*

\begin{proof}
By definition, $E = \SET{s \approx t}$ is a linear set of unordered
equations. Hence, \lemref{unifterm} yields that there are no
infinite derivations with $\ustep$. Since $\ustep$ is also finitely
branching, $\mcsus{E}$ is finite and computable.
\end{proof}

\section{Proof of \lemref{overlapEqualSource}}

\begin{definition}
Given $p \in \pos{s}$, we write $p \in \posGEfv{s}$ if there exists a
position $q \in \pos{s}$ with $q \leq p$ such that the head of $s|_q$
is a free variable. Otherwise, $p \in \posLTfv{s}$.
\end{definition}

\overlapequalsource*

\begin{proof}
Let $\theta \in U$. It is sufficient to prove that
$(\vec{x},\vec[l]{w}.\ell_1\delta)\theta = \ell_2\gamma\theta|_{pq}$.
Note that
$(\vec{x},\vec[l]{w}.\ell_1\delta)\theta = \ell_2\gamma|_{pq}\theta$
as $\theta$ is a unifier of the two terms. Hence, we only have to show
$\ell_2\gamma|_{pq}\theta = \ell_2\gamma\theta|_{pq}$. If
$pq \in \posLTfv{\ell_2\gamma}$ then the result follows
immediately. Otherwise, $pq \in \posGEfv{\ell_2\gamma}$. Since $U$
is a subset of a minimal complete set of unifiers,
$\dom{\theta} \subseteq \fv{\vec{x},\vec[l]{w}.\ell_1\delta} \cup
\fv{\ell_2\gamma|_{pq}}$. Moreover, any free
variable occurring in $\ell_2\gamma$ does not occur below
itself because $\ell_2$ is a DHP. Hence, whenever the head of
$\ell_2\gamma|_r$ for $r < pq$ is a free variable $Y$ then
$Y \notin \dom{\theta}$. Again, we obtain
$\ell_2\gamma|_{pq}\theta = \ell_2\gamma\theta|_{pq}$
as desired.
\end{proof}

\section{Proof of \lemref{cp}}

\tikzstyle{term}=[regular polygon,regular polygon sides=3,is=0,os=5]
\tikzstyle{context}=[term,fill=gray!15,draw=gray!40]
\tikzstyle{l1}=[term,fill=green!70!black!30,draw=green!60!black!40]
\tikzstyle{r1}=[term,fill=blue!70!black!30,draw=blue!60!black!40]
\tikzstyle{l2}=[term,fill=green!50!black!20,draw=green!60!black!40]
\tikzstyle{r2}=[term,fill=blue!50!black!20,draw=blue!60!black!40]
\tikzstyle{l2g}=[l2,fill=green!20!gray!20]
\tikzstyle{r2g}=[r2,fill=blue!20!gray!20]
\tikzstyle{l1g}=[l1,fill=blue!20!gray!20]
\tikzstyle{c}=[context,mis=20,as]
\tikzstyle{i}=[mis=10,an]
\tikzstyle{m}=[mis=23,an]
\tikzstyle{m'}=[mis=14,an]
\tikzstyle{o}=[mis=15,as]
\tikzstyle{o'}=[mis=30,as]
\tikzstyle{o''}=[mis=40,as]
\tikzstyle{ci}=[mis=10,an]
\tikzstyle{ic}=[mis=6,as]
\tikzstyle{ic2}=[mis=10,as]
\tikzstyle{ci'}=[mis=6,an]

\begin{definition}
A set of variables $X$ is \emph{fresh for} a substitution $\theta$
if $(\dom{\theta} \cup \fv{\theta}) \cap X = \varnothing$.
\end{definition}

We start with an auxiliary lemma which we will use to connect
overlapping rewrite steps to our formal definition of overlaps. 
The statement and proof closely follows 
\cite[Lemma 4.3]{MN98} but without the restriction to patterns
which is also not needed in the proof given there.

\begin{lemma}
\label{lem:unifierExistence}
Consider the terms $s$ and $t$ as well as the position $p \in \pos{t}$.
Let $\bvp{t}{p} = \vec{x}$ and $\delta$ be an $\vec{x}$-lifter of $s$ away
from $\fv{t}$. If there exist substitutions $\theta_1$, $\theta_2$ such
that $\vec{x}.s\theta_1 = t|_p\theta_2$ then $\vec{x}.s\delta$ and $t|_p$
are unifiable.
\end{lemma}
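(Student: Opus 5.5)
The plan is to construct an explicit unifier $\mu$ of $\vec{x}.s\delta$ and $t|_p$ out of $\theta_1$ and $\theta_2$. This follows \cite[Lemma 4.3]{MN98}: the lifter only adds the bound variables $\vec{x}$ as extra arguments, and these can be absorbed by abstracting over them.

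Write $\delta = \SET{Z \mapsto \vec[m]{y}.\rho(Z)(\vec{x}\cf,\vec[m]{y}\cf) \mid Z \in \fv{s}}$. We may assume $\dom{\theta_1} \subseteq \fv{s}$ and $\dom{\theta_2} \subseteq \fv{t}$, since restricting them does not change $s\theta_1$ or $t|_p\theta_2$. We may also assume, by $\alpha$-renaming, that the variables $\vec{x}$ do not occur in $\fv{\theta_1}$ as free variables; they only appear as bound variables of $\vec{x}.s\theta_1$.

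The unifier is
\[
\mu = \theta_2 \cup \SET{\rho(Z) \mapsto \vec{x},\vec[m]{y}.u_Z \mid Z \in \fv{s}}
\]
where $\theta_1(Z) = \vec[m]{y}.u_Z$. Here $u_Z$ may mention the variables $\vec{x}$ freely, and they get bound by the new outer abstraction. This is well-typed because $\rho(Z) : (\vec{\sigma},\vec[m]{\tau}) \to a$. It is a well-defined single substitution because $\im{\rho}$ is disjoint from $\fv{t}$ and $\rho$ is injective, so the two parts have disjoint domains.

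The steps are as follows.
\begin{enumerate}[(1)]
\item Show $\delta\mu = \theta_1 \restr{\fv{s}}$. By hereditary substitution,
\[
Z\cf\delta\mu = \vec[m]{y}.(\vec{x},\vec[m]{y}.u_Z)(\vec{x}\cf,\vec[m]{y}\cf)
= \vec[m]{y}.u_Z\SET{\vec{x}\mapsto\vec{x}\cf,\vec[m]{y}\mapsto\vec[m]{y}\cf}.
\]
We need the auxiliary fact that substituting each variable by its canonical $\eta$-expansion is the identity on canonical terms. I would prove this by induction on the type order and the term structure.
\item Conclude $s\delta\mu = s(\delta\mu) = s\theta_1$. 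This uses associativity of hereditary substitution, $(s\delta)\mu = s(\delta\mu)$, which is a standard property of hereditary substitution. The disjointness of $\dom{\theta_2}$ from $\fv{s\delta} \subseteq \im{\rho}$ ensures that $\theta_2$ does not interfere.
\item Conclude $t|_p\mu = t|_p\theta_2$, since $\im{\rho} \cap \fv{t} = \varnothing$.
\end{enumerate}
Combining these gives $(\vec{x}.s\delta)\mu = \vec{x}.s\theta_1 = t|_p\theta_2 = t|_p\mu$.

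I expect the main obstacle to be step (1) together with the composition law. Both the claim that identity-like $\eta$-expanded substitutions act trivially and the associativity of hereditary substitution need careful inductions on $(\ord{\cdot},\text{term})$ with respect to the lexicographic order used to define substitution application. A secondary subtlety is the handling of $\vec{x}$. In $(\vec{x}.s\delta)\mu$ the binders $\vec{x}$ are renamed apart by the third substitution clause, so I would argue up to $\alpha$-equivalence that the occurrences $\vec{x}\cf$ passed to $\rho(Z)$ are exactly the bound ones. That is what makes the abstraction over $\vec{x}$ in $\mu$ line up correctly.
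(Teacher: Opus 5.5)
Your proof is correct and is the paper's argument. The paper's unifier is $\theta_1' \cup \theta_2$ with $\theta_1'(\rho(X)) = \vec{x}.\theta_1(X)$, which is exactly your $\mu$. It is checked by the same chain ($\delta$ away from $\fv{t}$, $\SEt{\vec{x}}$ fresh for $\theta_1'$, $\delta\theta_1'|_{\fv{s}} = \theta_1|_{\fv{s}}$, $\dom{\theta_1'} \cap \fv{t} = \varnothing$), and the paper takes for granted the $\eta$-expansion identity and the associativity of hereditary substitution that you single out.

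One correction: delete the claim that you may assume $\vec{x}$ does not occur in $\fv{\theta_1}$. It is not without loss of generality, since the lifter is needed precisely because $\theta_1$ may introduce $\vec{x}$, as in $X \mapsto \m{f}(y_1)$. It also contradicts your own remark that $u_Z$ may mention $\vec{x}$. The freshness you actually use is that of $\vec{x}$ for the new part of $\mu$, which holds because $\vec{x}$ is abstracted in $\vec{x},\vec[m]{y}.u_Z$ and $\im{\rho}$ avoids $\vec{x}$.
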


\begin{proof}
Without loss of generality, $\dom{\theta_1} \subseteq \fv{s}$ as well as
$\dom{\theta_2} \subseteq \fv{t}$. Let $\theta_1' =
\SET{\rho(X) \mapsto \vec{x}.\theta_1(X) \mid X \in \fv{s}}$
where $\rho$ is the variable renaming corresponding to $\delta$.
Note that $\theta_1'(\delta(X)) = \theta_1(X)$ for $X \in \fv{s}$, so
$\delta\theta_1'|_{\fv{s}} = \theta_1|_{\fv{s}}$.
For $\theta_0 = \theta_1' \cup \theta_2$, the following holds:
\begin{align*}
(\vec{x}.s\delta)\theta_0
&= (\vec{x}.s\delta)\theta_1'
\tag{$\delta$ away from $\fv{t} \supseteq \dom{\theta_2}$} \\ 
&= \vec{x}.s\delta\theta_1'
\tag{$\SEt{\vec{x}}$ fresh for $\theta_1'$} \\
&= \vec{x}.s\theta_1
\tag{$\delta\theta_1'|_{\fv{s}} = \theta_1|_{\fv{s}}$} \\
&= t|_p\theta_2
\tag{assumption} \\
&= t|_p\theta_0
\tag{$\dom{\theta_1'} \cap \fv{t} = \varnothing$}
\end{align*}
Hence, $\theta_0$ is a unifier of $\vec{x}.s\delta$ and $t|_p$.
\end{proof}

Before finally proving \lemref{cp}, we recall a central theorem
about closure of rewriting under substitutions in HRSs.

\begin{theorem}[Theorem 3.9 from \cite{MN98}]
\label{thm:clsubst}  
Let $\xR$ be an \textup{HRS}. If $s \Rab{\xR}{*} t$ and
$\theta \Rab{\xR}{*} \theta'$ then $s\theta \Rab{\xR}{*} t\theta'$
\end{theorem}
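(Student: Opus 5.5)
We prove the statement by first reducing it to two one-step facts, and then establishing these simultaneously by induction. That induction follows the same well-founded order $(>_{\mathbb{N}},\subt)_{\lex}$ on $(\ord{\theta},s)$ that makes hereditary substitution application well defined.

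\textbf{The two one-step facts.}
\begin{enumerate}[(A)]
\item \emph{Stability:} if $s \Rb{\xR} t$ then $s\theta \Rab{\xR}{*} t\theta$.
\item \emph{Monotonicity in the substitution:} if $\theta(X) \Rb{\xR} \theta'(X)$ for a single variable $X$ and $\theta$, $\theta'$ agree elsewhere, then $s\theta \Rab{\xR}{*} s\theta'$.
\end{enumerate}
Fact (A) yields only $\Rab{\xR}{*}$, not a single step, because the redex may be duplicated or erased. Given (A) and (B), the theorem follows by chaining:
\[
s\theta \Rab{\xR}{*} t\theta \Rab{\xR}{*} t\theta'.
\]
The first part uses (A) once per step of $s \Rab{\xR}{*} t$. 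The second part uses (B) once per individual step in each component of $\theta \Rab{\xR}{*} \theta'$. Since $\dom{\theta}$ is finite, only finitely many such steps occur.

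\textbf{Auxiliary composition lemma.} Before the induction we need that hereditary substitution composes correctly: $(u\sigma)\theta = u(\sigma\theta)$, where $\sigma\theta$ is the composition defined in the preliminaries. In particular $(\ell\mu)\theta = \ell(\mu\theta)$, so an instance of a left-hand side stays an instance of the same left-hand side, and likewise for $r\mu$. We prove this by the same lexicographic induction, with the usual case analysis on whether the head of $u$ lies in $\dom{\sigma}$, in $\dom{\theta}$, or in neither. Freshness of bound variables ensures capture-avoidance commutes with both substitutions.

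\textbf{Proof of (A).} We induct on $s = \vec{y}.h(\vec{s})$, where the step takes place at position $p$.
\begin{itemize}
\item If $p=\epsilon$, then $s = \vec{y}.\ell\mu$ and $t = \vec{y}.r\mu$. By the composition lemma, $s\theta = \vec{y}'.\ell(\mu\theta)$ and $t\theta = \vec{y}'.r(\mu\theta)$, which is a single root step.
\item If $p = ip'$ and $h \notin \dom{\theta}$, then $s\theta = \vec{y}'.h(\vec{s}\theta)$. The induction hypothesis applies to $s_i$, and rewriting is closed under contexts.
\item If $p = ip'$ and $h = X \in \dom{\theta}$ with $\theta(X) = \vec{z}.u$, then
\[
s\theta = \vec{y}'.u\SEt{\vec{z} \mapsto \vec{s}\theta}, \qquad t\theta = \vec{y}'.u\SEt{\vec{z} \mapsto \vec{s}\theta[i \leftarrow t_i\theta]}.
\]
Here $s_i\theta \Rab{\xR}{*} t_i\theta$ holds by the induction hypothesis on the subterm. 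The remaining step is an instance of (B) for the substitution $\SEt{\vec{z} \mapsto \cdots}$. This substitution has type order strictly smaller than $\ord{\theta}$, so the appeal to (B) is legitimate in the lexicographic induction.
\end{itemize}

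\textbf{Proof of (B).} We induct on $s$.
\begin{itemize}
\item If the head of $s$ is not $X$, the claim follows directly from the induction hypothesis on the arguments, using closure of $\Rab{\xR}{*}$ under contexts, including under binders.
\item If $s = \vec{y}.X(\vec{s})$ with $\theta(X) = \vec{z}.u \Rb{\xR} \vec{z}.u' = \theta'(X)$, we split the claim in two. First,
\[
u\SEt{\vec{z}\mapsto\vec{s}\theta} \Rab{\xR}{*} u'\SEt{\vec{z}\mapsto\vec{s}\theta}
\]
is an instance of (A) for a substitution of lower order. Second,
\[
u'\SEt{\vec{z}\mapsto\vec{s}\theta} \Rab{\xR}{*} u'\SEt{\vec{z}\mapsto\vec{s}\theta'}
\]
follows because $\vec{s}\theta \Rab{\xR}{*} \vec{s}\theta'$ holds by the induction hypothesis on the subterms $s_j$. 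We then apply (B) repeatedly, again for the lower-order substitution $\SEt{\vec{z}\mapsto\cdot}$.
\end{itemize}

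\textbf{Main obstacle.} The delicate point is making this mutual recursion between (A) and (B) well founded. Every crossing from one fact to the other must either decrease the order of the substitution or pass to a proper subterm at the same order. This must be checked carefully in the case where the head variable is instantiated, since there the redex can be moved, copied, or erased by the implicit $\beta$-reduction. The composition lemma $(\ell\mu)\theta = \ell(\mu\theta)$ is routine but equally essential, because without it a root step would not remain a root step after substitution.
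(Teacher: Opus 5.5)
\textbf{How this compares with the paper.} The paper gives no proof of this statement; it imports it from Mayr and Nipkow, who apply substitutions and then $\beta$-normalise instead of using hereditary substitution. Your argument is therefore a self-contained reconstruction in the paper's own setting, and its architecture is right. You prove stability (A) and monotonicity (B) simultaneously by induction on $(\ord{\theta},s)$. Every switch between the two facts goes to a substitution $\SEt{\vec{z}\mapsto\cdots}$ whose order lies strictly below that of the instantiated head. A direct proof of this kind also has a real benefit: transferring the cited result to the hereditary setting implicitly relies on the correspondence between hereditary substitution and $\beta$-normalisation, and your proof does not.

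\textbf{A missing case in (B).} You dispose of every $s$ whose head is not $X$ by the induction hypothesis and closure under contexts. That only works when the head lies outside $\dom{\theta}$. Since $\theta$ and $\theta'$ merely agree away from $X$, the head can be some $Y\in\dom{\theta}\setminus\SET{X}$ with $\theta(Y)=\theta'(Y)=\vec{z}.v$. Then $s\theta=\vec{y}.v\SEt{\vec{z}\mapsto\vec{s}\theta}$ and $s\theta'=\vec{y}.v\SEt{\vec{z}\mapsto\vec{s}\theta'}$. The arguments have been substituted into $v$, possibly copied or erased, so there is no context to appeal to. The repair is exactly the second half of your $X$-case:
\begin{itemize}
\item the inner hypothesis gives $s_j\theta\Rab{\xR}{*}s_j\theta'$ for each $j$;
\item iterating (B) for the lower-order substitution $\SEt{\vec{z}\mapsto\cdots}$ on the fixed body $v$ then closes the case.
\end{itemize}

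\textbf{A smaller point on the composition lemma.} Your lemma $(u\sigma)\theta=u(\sigma\theta)$ involves two substitutions. When the head of $u$ is in $\dom{\sigma}$, you must commute $\theta$ past $\SEt{\vec{z}\mapsto\vec{u}\sigma}$. Doing so re-enters the lemma with $\theta$ as the inner substitution, whose order may exceed that of $\sigma$. So induction on the order of a single substitution does not suffice; you need a measure on both, for instance the multiset of the two orders, followed by the term. This is the standard substitution lemma for hereditary substitution, so the issue is stating the induction correctly rather than a missing idea.
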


\cplemma*

\begin{proof}
Let $t \Lab{\xR}{p_1 \mid \ell_1 \R r_1 \mid \theta_1} s
\Rab{\xR}{p_2 \mid \ell_2 \R r_2 \mid \theta_2} u$,
$\bvp{s}{p_1} = \vec[k]{z},\vec[o_1]{w}$ and
$\bvp{s}{p_2} = \vec[k]{z},\vec[o_2]{w}$ where $\vec[k]{z}$ is the
maximal common prefix of $\bvp{s}{p_1}$ and $\bvp{s}{p_2}$. Without
loss of generality we assume
$\dom{\theta_1} \subseteq \fv{\ell_1}$ and
$\dom{\theta_2} \subseteq \fv{\ell_2}$. If $p_1 \parallel p_2$ then
$\vec[k]{z}$ is empty and $t \Db{\xR} u$ can be established as
follows:
\[
t = s[\vec[o_1]{w}.r_1\theta_1]_{p_1}
\RbR s[\vec[o_1]{w}.r_1\theta_1]_{p_1}
[\vec[o_2]{w}.r_2\theta_2]_{p_2}
\LbR s[\vec[o_2]{w}.r_2\theta_2]_{p_2} = u
\]
Now consider the case where one position is above the other. Since
the statement we are about to prove is symmetric, we may assume
$p_1 \geq p_2$ with $p = p_1 \setminus p_2$.
Hence, $o_2 = 0$ and $\bvp{s}{p_2} = \vec[k]{z}$.
If $p = \epsilon$ and $\ell_1 \R r_1$,
$\ell_2 \R r_2$ are variants then $\ell_1\theta_1 = \ell_2\theta_2$ and
$\ell_1\mu = \ell_2$, $r_1\mu = r_2$ for some substitution $\mu$. Hence,
$\ell_1\theta_1 = \ell_1\mu\theta_2$, so
$\theta_1|_{\fv{\ell_1}} = \mu\theta_2|_{\fv{\ell_1}}$. Since
$\fv{r_1} \subseteq \fv{\ell_1}$, we obtain $r_1\theta_1 = r_1\mu\theta_2$
and therefore $r_1\theta_1 = r_2\theta_2$. Thus,
$t = s[\vec[k]{z}.r_1\theta_1]_{p_2} =
s[\vec[k]{z}.r_2\theta_2]_{p_2} = u$.
Otherwise $p \neq \epsilon$ or $\ell_1 \R r_1$ and $\ell_2 \R r_2$
are not variants.
We proceed by case analysis on whether $p \in \posLTfv{\ell_2}$ or
$p \in \posGEfv{\ell_2}$.

Assume $p \in \posLTfv{\ell_2}$ and let $\bvp{\ell_2}{p} = \vec{x}$.
Since the head of $\ell_1$ is a function symbol, $p \in \posf{\ell_2}$.
We obtain
$\vec{x}.\ell_1\theta_1 = \ell_2\theta_2|_p = \ell_2|_p\theta_2$.
Let $\delta$ be an $\vec{x}$-lifter of $\ell_1$ away from
$\fv{\ell_2}$. By \lemref{unifierExistence},
$\theta_0 = \theta_1' \cup \theta_2$ is a unifier of
$\vec{x}.\ell_1\delta$ and $\ell_2|_p$ where $\theta_1'$ is
defined as in the lemma. Hence, there is a minimal complete
set of unifiers $U \neq \varnothing$ of the two terms and
$\langle \ell_1 \R r_1,p,\epsilon,(\vec{x}),(),\delta,\varnothing,U,
\ell_2 \R r_2 \rangle$ is an overlap. Furthermore, there exists a unifier
$\theta \in U$ such that $\theta_0 = \theta\mu$ for some
substitution $\mu$. Thus, we can represent $t$ as follows:
\begin{align*}
t
&= s[\vec[k]{z}.\ell_2\theta_2[\vec{x}.r_1\theta_1]_p]_{p_2} \\
&= s[\vec[k]{z}.\ell_2\theta_2[\vec{x}.r_1\delta\theta_1']_p]_{p_2}
\tag{$\delta\theta_1'|_{\fv{r_1}} = \theta_1|_{\fv{r_1}}$} \\  
&= s[\vec[k]{z}.\ell_2\theta_2[(\vec{x}.r_1\delta)\theta_1']_p]_{p_2}
\tag{$\SEt{\vec{x}}$ is fresh for $\theta_1'$} \\
&= s[\vec[k]{z}.\ell_2\theta_{2}[(\vec{x}.r_1\delta)\theta_0]_p]_{p_2}
\tag{$\delta$ away from $\fv{\ell_2} \supseteq \dom{\theta_2}$} \\
&= s[\vec[k]{z}.\ell_2\theta_0[(\vec{x}.r_1\delta)\theta_0]_p]_{p_2}
\tag{$\dom{\theta_1'} \cap \fv{\ell_2} = \varnothing$} \\
&= s[\vec[k]{z}.\ell_2\theta\mu[(\vec{x}.r_1\delta)\theta\mu]_p]_{p_2} \\
&= s[\vec[k]{z}.(\ell_2\theta[(\vec{x}.r_1\delta)\theta]_p)\mu]_{p_2}
\tag{$p \in \posLTfv{\ell_2}$}
\end{align*}
Similarly,
\[
u = s[\vec[k]{z}.r_2\theta_2]_{p_2} = s[\vec[k]{z}.r_2\theta_0]_{p_2} =
s[\vec[k]{z}.r_2\theta\mu]_{p_2}
\]
as $\dom{\theta_1'} \cap \fv{r_2} = \varnothing$.
Thus, $t \LRb{\cp{\xR}} s[\vec[k]{z}.r_2\theta\mu]_{p_2} = u$
by an application of the critical pair
$\ell_2\theta[(\vec{x}.r_1\delta)\theta]_p \approx r_2\theta$,
as desired.

\begin{figure}[t]
\begin{center}
\begin{tikzpicture}[transform shape]
\begin{scope}
\node [c] (1) at (0,0) {};
\node [l2,o] (1o) at (0,0) {};
\node [l2,m,yscale=.4] (1m) at (1o.south) [ys=7.5] {};
\node [l2,ci] (1cil) at (1m.south) [xs=-8,ys=4.1] {};
\node [l2,ci] (1cir) at (1m.south) [xs=8,ys=4.1] {};
\node [l1,ic] (1icl) at (1cil.south) {};
\node [l1,ic] (1icr) at (1cir.south) {};
\node at (1.north) {\large $s$};
\node at (1o.north) [ys=-3.5,circle,is=1,fill] {};
\node at (1o.north) [above,xs=2.7,ys=-6] {$p_2$};
\node [as] at (1o) [above,ys=-4] {$\ell_2\theta_2$};
\node at (1m.north) [ys=-1.5,circle,is=1,fill] {};
\node at (1m.north) [above,xs=5,ys=-5] {$p_2q_1$};
\node at (1icl.north) [ys=-3.5,circle,is=1,fill] {};
\node at (1icl.north) [above,xs=4,ys=-6] {$p_1$};
\node at (1icl.south) [below,ys=1.5,xs=-1.5]
 {\scriptsize $(\vec{x},\vec[j]{y}.\ell_1\delta)\theta_1'$};
\node at (1icr.south) [below,ys=1.5,xs=1.5]
{\scriptsize $(\vec{x},\vec[j]{y}.\ell_1\delta)\theta_1'$};
\end{scope}
\begin{scope}[xs=-35,ys=-20]
\node [c] (3) at (0,0) {};
\node [l2g,o] (3o) at (0,0) {};
\node [l2g,m,yscale=.4] (3m) at (3o.south) [ys=7.5] {};
\node [l2g,ci] (3cil) at (3m.south) [xs=-8,ys=4.1] {};
\node [l2g,ci] (3cir) at (3m.south) [xs=8,ys=4.1] {};
\node [r1,ic] (3icl) at (3cil.south) {};
\node [l1,ic] (3icr) at (3cir.south) {};
\node at (3.north) {\large $t$};
\node at (3icl.south) [below,ys=1.5,xs=-1.5]
 {\scriptsize $(\vec{x},\vec[j]{y}.r_1\delta)\theta_1'$};
\node at (3icr.south) [below,ys=1.5,xs=1.5]
 {\scriptsize $(\vec{x},\vec[j]{y}.\ell_1\delta)\theta_1'$};
\draw[->,sh=2] (1) -- (3);
\end{scope}
\begin{scope}[xs=35,ys=-20]
\node [c] (2) at (0,0) {};
\node [r2,o] (2o) at (0,0) {};
\node at (2.north) {\large $u$};
\node at (2o.south) [above,ys=2] {$r_2\theta_2$};
\draw[->,sh=2] (1) -- (2);
\end{scope}
\begin{scope}[ys=-40]
\node [c] (4) at (0,0) {};
\node [l2g,o] (4o) at (0,0) {};
\node [l2g,m,yscale=.4] (4m) at (4o.south) [ys=7.5] {};
\node [l2g,ci] (4cil) at (4m.south) [xs=-8,ys=4.1] {};
\node [l2g,ci] (4cir) at (4m.south) [xs=8,ys=4.1] {};
\node [r1,ic] (4icl) at (4cil.south) {};
\node [r1,ic] (4icr) at (4cir.south) {};
\node at (4icl.south) [below,ys=1.5,xs=-1.5]
 {\scriptsize $(\vec{x},\vec[j]{y}.r_1\delta)\theta_1'$};
\node at (4icr.south) [below,ys=1.5,xs=1.5]
 {\scriptsize $(\vec{x},\vec[j]{y}.r_1\delta)\theta_1'$};
\draw[->,sh=2] (3) -- (4) node[midway,below,sloped] {$*$};
\end{scope}
\draw[<->,sh=7] (2o) -- (4) node[midway,below,sloped] {$\cp{\xR}$};
\end{tikzpicture}
\caption{Anchor overlap case in the proof of \lemref{cp}.}
\label{fig:cpCase3}
\end{center}
\end{figure}

Now assume $p \in \posGEfv{\ell_2}$. By definition, $p = q_1q_2$
such that $\ell_2|_{q_1} = \vec{x}.Y(\vec[m]{s})$.
We proceed with another case distinction.

First assume that $\ell_1$ overlaps with a subterm of one of the
anchors $\vec[m]{s}$. More formally, there is some $1 \leq i \leq m$ and
$r \in \pos{s_i}$ such that
\[
\vec{x},\vec[l]{w},y_{o'+1},\dots,y_j.\ell_1\theta_1 =
\ell_2\theta_2|_p = \vec{x},\vec[l]{w},y_{o'+1},\dots,y_o.s_i'|_r\]
where $\bvp{s_i}{r} = \vec[j]{y}$,
$\vec{x}.s_i = \vec{x},\vec[o]{y}.s_i'$ with $s_i'$ basic
and $0 \leq o' \leq o$. Since $\ell_2$ is a DHP,
$\fv{\vec{x}.s_i|_r} = \varnothing$ and we obtain
$\vec{x},\vec[j]{y}.\ell_1\theta_1 = \vec{x}.s_i|_r = \ell_2|_{q_1ir}$
as well as $q_1ir \in \posf{\ell_2}$.
Let $\delta$ be an $\vec{x},\vec[j]{y}$-lifter of
$\ell_1$ away from $\fv{\ell_2}$. By \lemref{unifierExistence},
$\theta_0 = \theta_1' \cup \theta_2$ is a unifier of
$\vec{x},\vec[l]{y}.\ell_1\delta$ and $\ell_2|_{q_1ir}$
where $\theta_1'$ is defined as in the lemma. Let $U$ be a minimal
complete set of unifiers of the two terms. Since
$\fv{\ell_2|_{q_1ir}} = \varnothing$, this unification problem
actually corresponds to a matching problem which by
\thmref{matchUnique} has the unique solution $\theta_1' \in U$. Hence,
\[
\langle \ell_1 \R r_1,q_1ir,\epsilon,(\vec{x},\vec[j]{y}),(),
\delta,\varnothing,\SET{\theta_1'},\ell_2 \R r_2 \rangle
\]
is an overlap. Furthermore,
$\theta_0 = \theta_1'\theta_2$ since $\fv{\theta_1'} = \varnothing$.
We obtain
\[
t = s[\vec[k]{z}.\ell_2\theta_2[\vec{x},\vec[l]{w},y_{o'+1},\dots,
y_j.r_1\theta_1]_p]_{p_2} \Rab{\xR}{*} s[\vec[k]{z}.
(\ell_2[\vec{x},\vec[j]{y}.r_1\theta_1]_{q_1ir})\theta_2]_{p_2}
\]
as $\fv{\vec{x},\vec[j]{y}.r_1\theta_1} = \varnothing$ and if
$\theta_2(Y) = \vec[m]{y}.v$, $v$ may contain multiple occurrences of
$y_i$. As before, we can present this reduct of $t$ as follows:
\begin{align*}
t \,\Rab{\xR}{*}&\,\, s[\vec[k]{z}.
(\ell_2[\vec{x},\vec[j]{y}.r_1\theta_1]_{q_1ir})\theta_2]_{p_2} \\
=&\,\, s[\vec[k]{z}.
(\ell_2[\vec{x},\vec[j]{y}.r_1\delta\theta_1']_{q_1ir})\theta_2]_{p_2}
\tag{$\delta\theta_1'|_{\fv{r_1}} = \theta_1|_{\fv{r_1}}$} \\  
=&\,\, s[\vec[k]{z}.
(\ell_2[(\vec{x},\vec[l]{y}.r_1\delta)\theta_1']_{q_1ir})\theta_2]_{p_2}
\tag{$\SEt{\vec{x},\vec[j]{y}}$ is fresh for $\theta_1'$} \\
=&\,\, s[\vec[k]{z}.(\ell_2\theta_1'[(\vec{x},\vec[j]{y}.
r_1\delta)\theta_1']_{q_1ir})\theta_2]_{p_2}
\tag{$\dom{\theta_1'} \cap \fv{\ell_2} = \varnothing$}
\end{align*}
Similarly, $u = s[\vec[k]{z}.r_2\theta_2]_{p_2} =
s[\vec[k]{z}.r_2\theta_1'\theta_2]_{p_2}$
as $\dom{\theta_1'} \cap \fv{r_2} = \varnothing$.
Thus, $t \Rab{\xR}{*} \cdot \LRb{\cp{\xR}}
s[\vec[k]{z}.r_2\theta_1'\theta_2]_{p_2} = u$ as desired. This case is
illustrated in \figref{cpCase3}.

Next assume there is no underlying overlap in the anchors
$\vec[m]{s}$, so $q_2 \in \pos{\theta_2(Y)}$. Let
$Y : (\vec[m]{\sigma}) \to a$, $\ell_1 : b$ and
$\bvp{\theta_2(Y)}{q_2} =
\vec[m]{y},\vec[l]{w} : \vec[m]{\sigma},\vec[l]{\tau}$.
Consider the fresh
variables $Y' : (\vec[m]{\sigma},\vec[l]{\tau}) \to b$,
$Y'' : ((\vec[l]{\tau}) \to b,\vec[m]{\sigma}) \to a$
and define the substitution
$\gamma = \SET{Y \mapsto \vec[m]{y}.Y''(\vec[l]{w}.
Y'(\vec[m]{y}\cf,\vec[l]{w}\cf),\vec[m]{y}\cf)}$.
From the substitution
\[
\theta_2' = \theta_2 \cup {}
\SEt{Y' \mapsto \vec[m]{y}.Y(\vec[m]{y}\cf)\theta_2|_{q_2},
Y'' \mapsto z,\vec[m]{y}.y(\vec[m]{y}\cf)\theta_2[\vec[l]{w}.
z(\vec[l]{w})]_{q_2}}
\]
we obtain $\theta_2|_{\fv{\ell_2}} = \gamma\theta_2'|_{\fv{\ell_2}}$
and $\ell_2\gamma|_{q_11}\theta_2' = \ell_2\theta_2|_{q_1q_2} =
\vec{x},\vec[l]{w}.\ell_1\theta_1$.
Let $\delta$ be an $\vec{x},\vec[l]{w}$-lifter of $\ell_1$
away from $\fv{\ell_2\gamma}$. By \lemref{unifierExistence},
$\theta_0 = \theta_1' \cup \theta_2'$ is a unifier of
$\vec[n]{x}.\ell_1\delta$ and $\ell_2\gamma|_{q_11}$ where $\theta_1''$ is
defined as in the lemma. Let $U'$ be a minimal complete set of unifiers of
the two terms, so there exists a unifier $\theta \in U'$ such that
$\theta_0 = \theta\mu$ for some substitution $\mu$. Hence, we
can represent $t$ as follows:
\begin{align*}
t
&= s[\vec[k]{z}.\ell_2\theta_2[\vec{x},\vec[l]{w}.
r_1\theta_1]_p]_{p_2} \\  
&= s[\vec[k]{z}.\ell_2\gamma\theta_2'[\vec{x},\vec[l]{w}.
r_1\theta_1]_p]_{p_2}
\tag{$\theta_2|_{\fv{\ell_2}} = \gamma\theta_2'|_{\fv{\ell_2}}$} \\
&= s[\vec[k]{z}.\ell_2\gamma\theta_2'[\vec{x},\vec[l]{w}.
r_1\delta\theta_1']_p]_{p_2}
\tag{$\delta\theta_1'|_{\fv{r_1}} = \theta_1|_{\fv{r_1}}$} \\
&= s[\vec[k]{z}.\ell_2\gamma\theta_2'[(\vec{x},\vec[l]{w}.
r_1\delta)\theta_1']_p]_{p_2}
\tag{$\SEt{\vec{x},\vec[l]{w}}$ is fresh for $\theta_1'$} \\
&= s[\vec[k]{z}.\ell_2\gamma\theta_2'[(\vec{x},\vec[l]{w}.
r_1\delta)\theta_0]_p]_{p_2}
\tag{$\delta$ away from
$\fv{\ell_2\gamma} \subseteq \dom{\theta_2'}$} \\
&= s[\vec[k]{z}.\ell_2\gamma\theta_0[(\vec{x},\vec[l]{w}.
r_1\delta)\theta_0]_p]_{p_2}
\tag{$\dom{\theta_1'} \cap \fv{\ell_2\gamma} = \varnothing$} \\
&= s[\vec[k]{z}.\ell_2\gamma\theta\mu[(\vec{x},\vec[l]{w}.
r_1\delta)\theta\mu]_p]_{p_2} \\
&= s[\vec[k]{z}.(\ell_2\gamma\theta[(\vec{x},\vec[l]{w}.
r_1\delta)\theta]_{q_11})\mu]_{p_2}
\tag{$Y'' \notin \dom{\theta}$}
\end{align*}
Similarly,
\[
u
= s[\vec[k]{z}.r_2\theta_2]_{p_2}
= s[\vec[k]{z}.r_2\gamma\theta_2']_{p_2}
= s[\vec[k]{z}.r_2\gamma\theta_0]_{p_2}
= s[\vec[k]{z}.r_2\gamma\theta\mu]_{p_2}
\]
because $\theta_2|_{\fv{\ell_2}} = \gamma\theta_2'|_{\fv{\ell_2}}$ and
$\dom{\theta_1'} \cap \fv{r_2\gamma} = \varnothing$.
Let $U$ be defined from $U'$ as in \defref{overlap}\enuref{overlap-iv}(b).
By our assumption that there is no underlying overlap in the arguments of
$Y$, $\theta(Y') \neq \vec[m]{y},\vec[l]{w}.y_i\cf$ for some
$1 \leq i \leq m$. We proceed by a final
case analysis on whether
$\theta(Y') = \vec[m]{y},\vec[l]{w}.\ell_1\mu_1$ for some $\mu_1$.

\begin{figure}[t]
\begin{center}
\begin{tikzpicture}[transform shape]
\begin{scope}
\node [l2,o'',yscale=.4] (1o) at (0,0) {};
\node [l2,m'] (1m1) at (1o.south) [xs=-15,ys=4.15] {};
\node [l2,m'] (1m2) at (1o.south) [ys=4.15] {};
\node [l2,m'] (1m3) at (1o.south) [xs=15,ys=4.15] {};
\node [l1,ic2] (1ic1) at (1m1.south) {};
\node [l1,ic2] (1ic2) at (1m2.south) {};
\node [l1,ic2] (1ic3) at (1m3.south) {};
\node [l2,ci'] (1ci11) at (1m1.south) [xs=-4,ys=5.1] {};
\node [l2,ci'] (1ci12) at (1m1.south) [xs=4,ys=5.1] {};
\node [l2,ci'] (1ci21) at (1m2.south) [xs=-4,ys=5.1] {};
\node [l2,ci'] (1ci22) at (1m2.south) [xs=4,ys=5.1] {};
\node [l2,ci'] (1ci31) at (1m3.south) [xs=-4,ys=5.1] {};
\node [l2,ci'] (1ci32) at (1m3.south) [xs=4,ys=5.1] {};
\node [as] at (1o) [above,ys=-2] {$\ell_2\gamma\theta\mu$};
\node at (1m1.north) [ys=-3.5,circle,is=1,fill] {};
\node at (1m1.north) [above,xs=4,ys=-4] {$q_1$};
\node at (1ic1.north) [ys=-3.5,circle,is=1,fill] {};
\node at (1ic1.north) [above,xs=5,ys=-5] {$q_1q_2$};
\node at (1ic1.south) [above,ys=1.5] {\scriptsize $\ell_1\mu_1$};
\node at (1ic2.south) [above,ys=1.5] {\scriptsize $\ell_1\mu_1$};
\node at (1ic3.south) [above,ys=1.5] {\scriptsize $\ell_1\mu_1$};
\node at (1o.south) [xs=28,ys=-6]
 {\rotatebox{-90}{$\overbrace{\makebox[8mm]{}}^%
 {\rotatebox{90}{$\theta$}}$}};
\end{scope}
\begin{scope}[xs=-40,ys=-30]
\node [l2,o'',yscale=.4] (3o) at (0,0) {};
\node [l2,m'] (3m1) at (3o.south) [xs=-15,ys=4.15] {};
\node [l2,m'] (3m2) at (3o.south) [ys=4.15] {};
\node [l2,m'] (3m3) at (3o.south) [xs=15,ys=4.15] {};
\node (3mm) at (3o.south) [ys=-15] {};
\node [r1,ic2] (3ic1) at (3m1.south) {};
\node [l1,ic2] (3ic2) at (3m2.south) {};
\node [l1,ic2] (3ic3) at (3m3.south) {};
\node [l2,ci'] (3ci11) at (3m1.south) [xs=-4,ys=5.1] {};
\node [l2,ci'] (3ci12) at (3m1.south) [xs=4,ys=5.1] {};
\node [l2,ci'] (3ci21) at (3m2.south) [xs=-4,ys=5.1] {};
\node [l2,ci'] (3ci22) at (3m2.south) [xs=4,ys=5.1] {};
\node [l2,ci'] (3ci31) at (3m3.south) [xs=-4,ys=5.1] {};
\node [l2,ci'] (3ci32) at (3m3.south) [xs=4,ys=5.1] {};
\node [as] at (3o) [above,ys=-2] {$\ell_2\gamma\theta\mu$};
\node at (3ic1.south) [above,ys=1.5] {\scriptsize $r_1\mu_1$};
\node at (3ic2.south) [above,ys=1.5] {\scriptsize $\ell_1\mu_1$};
\node at (3ic3.south) [above,ys=1.5] {\scriptsize $\ell_1\mu_1$};
\draw[->,sh=2] (1ci12) -- (3o);
\end{scope}
\begin{scope}[xs=-40,ys=-70]
\node [l2,o'',yscale=.4] (5o) at (0,0) {};
\node [l2,m'] (5m1) at (5o.south) [xs=-15,ys=4.15] {};
\node [l2,m'] (5m2) at (5o.south) [ys=4.15] {};
\node [l2,m'] (5m3) at (5o.south) [xs=15,ys=4.15] {};
\node [r1,ic2] (5ic1) at (5m1.south) {};
\node [r1,ic2] (5ic2) at (5m2.south) {};
\node [r1,ic2] (5ic3) at (5m3.south) {};
\node [l2,ci'] (5ci11) at (5m1.south) [xs=-4,ys=5.1] {};
\node [l2,ci'] (5ci12) at (5m1.south) [xs=4,ys=5.1] {};
\node [l2,ci'] (5ci21) at (5m2.south) [xs=-4,ys=5.1] {};
\node [l2,ci'] (5ci22) at (5m2.south) [xs=4,ys=5.1] {};
\node [l2,ci'] (5ci31) at (5m3.south) [xs=-4,ys=5.1] {};
\node [l2,ci'] (5ci32) at (5m3.south) [xs=4,ys=5.1] {};
\node [as] at (5o) [above,ys=-2] {$\ell_2\gamma\theta'\mu$};
\node at (5ic1.south) [above,ys=1.5] {\scriptsize $r_1\mu_1$};
\node at (5ic2.south) [above,ys=1.5] {\scriptsize $r_1\mu_1$};
\node at (5ic3.south) [above,ys=1.5] {\scriptsize $r_1\mu_1$};
\node at (5o.south) [xs=28,ys=-6]
 {\rotatebox{-90}{$\overbrace{\makebox[8mm]{}}^%
 {\rotatebox{90}{$\theta'$}}$}};
\draw[->,sh=2] (3mm) -- (5o) node [midway,below,sloped] {$*$};
\end{scope}
\begin{scope}[xs=40,ys=-30]
\node [r2,o',yscale=.54] (2o) at (0,0) {};
\node [r2,m'] (2ml) at (2o.south) [xs=-10,ys=4.4] {};
\node (2mm) at (2o.south) [ys=-15] {};
\node [r2,m'] (2mr) at (2o.south) [xs=10,ys=4.4] {};
\node [l1,ic2] (2icl) at (2ml.south) {};
\node [l1,ic2] (2icr) at (2mr.south) {};
\node [r2,ci'] (2cil1) at (2ml.south) [xs=-4,ys=5.1] {};
\node [r2,ci'] (2cil1) at (2ml.south) [xs=4,ys=5.1] {};
\node [r2,ci'] (2cir1) at (2mr.south) [xs=-4,ys=5.1] {};
\node [r2,ci'] (2cir2) at (2mr.south) [xs=4,ys=5.1] {};
\node [as] at (2o) [above,ys=-2] {$r_2\gamma\theta\mu$};
\node at (2icl.south) [above,ys=1.5] {\scriptsize $\ell_1\mu_1$};
\node at (2icr.south) [above,ys=1.5] {\scriptsize $\ell_1\mu_1$};
\draw[->,sh=2] (1ci32) -- (2o);
\end{scope}
\begin{scope}[xs=40,ys=-70]
\node [r2,o',yscale=.54] (4o) at (0,0) {};
\node [r2,m'] (4ml) at (4o.south) [xs=-10,ys=4.4] {};
\node [r2,m'] (4mr) at (4o.south) [xs=10,ys=4.4] {};
\node [r1,ic2] (4icl) at (4ml.south) {};
\node [r1,ic2] (4icr) at (4mr.south) {};
\node [r2,ci'] (4cil1) at (4ml.south) [xs=-4,ys=5.1] {};
\node [r2,ci'] (4cil1) at (4ml.south) [xs=4,ys=5.1] {};
\node [r2,ci'] (4cir1) at (4mr.south) [xs=-4,ys=5.1] {};
\node [r2,ci'] (4cir2) at (4mr.south) [xs=4,ys=5.1] {};
\node [as] at (4o) [above,ys=-2] {$r_2\gamma\theta'\mu$};
\node at (4icl.south) [above,ys=1.5] {\scriptsize $r_1\mu_1$};
\node at (4icr.south) [above,ys=1.5] {\scriptsize $r_1\mu_1$};
\draw[->,sh=2] (2mm) -- (4o) node[midway,above,sloped] {$*$};
\end{scope}
\draw[->,sh=7] (5o) -- (4o);
\end{tikzpicture}
\caption{Trivial variable overlap case in the proof of \lemref{cp}.}
\label{fig:cpCase4}
\end{center}
\end{figure}

\begin{figure}[t]
\begin{center}
\begin{tikzpicture}[transform shape]
\begin{scope}
\node [c] (1) at (0,0) {};
\node [l2,o] (1o) at (0,0) {};
\node [l2,m,yscale=.5] (1m) at (1o.south) [ys=7.5] {};
\node [l1,i] (1i) at (1m.south) [ys=8] {};
\node at (1.north) {\large $s$};
\node at (1o.north) [ys=-3.5,circle,is=1,fill] {};
\node at (1o.north) [above,xs=3,ys=-6] {$p_2$};
\node [as] at (1o) [above,ys=-4] {$\ell_2\gamma\theta\mu$};
\node at (1m.north) [ys=-2,circle,is=1,fill] {};
\node at (1m.north) [above,xs=5,ys=-5.5] {$p_2q_1$};
\node at (1i.north) [ys=-3.5,circle,is=1,fill] {};
\node at (1i.north) [above,xs=4,ys=-6] {$p_1$};
\node at (1i.south) [below,ys=2]
 {$(\vec{x}.\ell_1\delta)\theta\mu$};
\end{scope}
\begin{scope}[xs=-40,ys=-25]
\node [c] (3) at (0,0) {};
\node [l2g,o] (3o) at (0,0) {};
\node [l2g,m,yscale=.4] (3m) at (3o.south) [ys=7.5] {};
\node [r1,i] (3i) at (3m.south) [ys=8] {};
\node at (3.north) {\large $t$};
\node at (3i.south) [below,ys=2] {$(\vec{x}.r_1\delta)\theta\mu$};
\draw[->,sh=2] (1) -- (3);
\end{scope}
\begin{scope}[xs=40,ys=-25]
\node [c] (2) at (0,0) {};
\node [r2,o] (2o) at (0,0) {};
\node at (2.north) {\large $u$};
\node at (2o.south) [above,ys=2] {$r_2\gamma\theta\mu$};
\draw[->,sh=2] (1) -- (2);
\end{scope}
\draw[<->,sh=7] (2) -- (3) node[midway,below] {$\lcp{\xR}$};
\end{tikzpicture}
\caption{Critical variable overlap case in the proof of \lemref{cp}.}
\label{fig:cpCase5}
\end{center}
\end{figure}

First assume $\theta(Y') = \vec[m]{y},\vec[l]{w}.\ell_1\mu_1$ for some
$\mu_1$. From
\[
\theta' = \SEt{Y' \mapsto \vec[m]{y},\vec[l]{w}.r_1\mu_1} \cup
\theta|_{\dom{\theta} \setminus \SET{Y'}}
\]
we obtain $\theta \Rab{\xR}{=} \theta'$ and therefore
$r_2\gamma\theta \Rab{\xR}{*} r_2\gamma\theta'$ by \thmref{clsubst}.
Without loss of generality, $\SEt{\vec{x},\vec[l]{w}}$ is fresh
for $\theta$. Furthermore, for
$\mu_2 = \SEt{\vec[m]{y} \mapsto \vec[m]{s}}$
we have $\vec{x},\vec[l]{w}.\ell_1\mu_1\mu_2 =
(\vec{x},\vec[l]{w}.Y'(\vec[m]{s},\vec[l]{w}\cf))\theta =
(\vec{x},\vec[l]{w}.\ell_1\delta)\theta =
\vec{x},\vec[l]{w}.\ell_1\delta\theta$,
so $\mu_1\mu_2|_{\fv{\ell_1}} = \delta\theta|_{\fv{\ell_1}}$. Hence,
\begin{align*}
(\vec{x},\vec[l]{w}.r_1\delta)\theta &=
\vec{x},\vec[l]{w}.r_1\delta\theta =
\vec{x},\vec[l]{w}.r_1\mu_1\mu_2 =
(\vec{x},\vec[l]{w}.Y'(\vec[m]{s}))\theta'
\end{align*}
and thus
\begin{align*}
t &= s[\vec[k]{z}.(\ell_2\gamma\theta[(\vec{x},\vec[l]{w}.
r_1\delta)\theta]_{q_11})\mu]_{p_2}
\Rab{\xR}{*} s[\vec[k]{z}.\ell_2\gamma\theta'\mu]_{p_2}
\Rb{\xR} s[\vec[k]{z}.r_2\gamma\theta'\mu]_{p_2} \\
&\Lab{\xR}{*} s[\vec[k]{z}.r_2\gamma\theta\mu]_{p_2} = u
\end{align*}
This case is illustrated in \figref{cpCase4}.

Otherwise, $\theta \in U$ and
$\langle \ell_1 \R r_1,q_1,1,(\vec{x}),(\vec[l]{w}),\delta,\gamma,U,
\ell_2 \R r_2 \rangle$ is an overlap and we obtain
\[
t = s[\vec[k]{z}.(\ell_2\gamma\theta[(\vec{x},\vec[l]{w}.
r_1\delta)\theta]_{q_11})\mu]_{p_2} \LRb{\lcp{\xR}}
s[\vec[k]{z}.r_2\gamma\theta\mu]_{p_2} = u
\]
as illustrated in \figref{cpCase5}.
\end{proof}

\section{Proof of \lemref{lcpjoin}}

In the following, we say that a substitution $\theta$ is
\emph{idempotent} if $\dom{\theta} \cap \fv{\theta} = \varnothing$.
Furthermore, given a permutation $\pi$ we write $\vec{x}^\pi$ for the
corresponding permutation of $\vec{x}$.

\begin{definition}
\label{def:lcprec}
Given a term $s = \vec[k]{x}.h(\vec{s})$, variables $Y$, $Y'$ of the
same type with arity $\geq 1$, a sequence of variables $\vec[l]{w}$ and a
substitution $\delta$,
we define $s^{Y,Y',\vec[l]{w},\delta}$ recursively as follows:
\begin{align*}
s^{Y,Y',\vec[l]{w},\delta} = \begin{cases}
\vec[k]{x}.Y'(\vec[l]{w}.s_1^{Y,Y',\vec[l]{w},\delta}\delta,
s_2^{Y,Y',\vec[l]{w},\delta},\dots,s_n^{Y,Y',\vec[l]{w},\delta})
&\text{if $h = Y$} \\
\vec[k]{x}.h(\vec{s^{Y,Y',\vec[l]{w},\delta}})
&\text{otherwise}
\end{cases}
\end{align*}
where $\SEt{\vec[k]{x}} $ is fresh for $\delta$. Given a substitution
$\theta$, $\theta^{Y,Y',\vec[l]{w},\delta} =
\SET{X \mapsto u^{Y,Y',\vec[l]{w},\delta} \mid X \mapsto u \in \theta}$.
\end{definition}

The preceding definition allows us to capture the connection
between ordinary and lifted critical pairs as stated in the upcoming
result.

\begin{lemma}
\label{lem:lcp_cp}
Let $\xR$ be a \textup{DPRS}. If $s \approx t \in \lcp{\xR}$ then
here exist $u \approx v \in \cp{\xR}$, variables $Y$, $Y'$ of the same
type with arity $\geq 1$, a sequence of variables $\vec[l]{w}$ and an
idempotent substitution $\delta$ with $\SET{Y,Y'}$ fresh for $\delta$
such that $s = u^{Y,Y',\vec[l]{w},\delta}\delta$ and
$t = v^{Y,Y',\vec[l]{w},\delta}\delta$.
\end{lemma}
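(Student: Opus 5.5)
The plan is to show that every lifted critical pair with $l > 0$ arises from the critical pair of a closely related overlap with $l = 0$, by a syntactic ``lifting'' of the variable $Y''$ and of the variables introduced by the lifter. Critical pairs coming from overlaps of type \enuref{overlap-iv}(a), or of type (b) with $l = 0$, are already in $\cp{\xR}$. For them I take $\delta = \varnothing$ and $Y$, $Y'$ fresh variables of some type of arity $\geq 1$ that occur nowhere. Then \defref{lcprec} leaves every term unchanged, and the claim is trivial.

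So fix an overlap
$\langle \ell_1 \R r_1,p,1,(\vec{x}),(\vec[l]{w}),\delta_l,\gamma_l,U_l,\ell_2 \R r_2 \rangle$ with $l > 0$ and some $\theta_l \in U_l$. I form the companion overlap
$\langle \ell_1 \R r_1,p,1,(\vec{x}),(),\delta_0,\gamma_0,U_0,\ell_2 \R r_2 \rangle$.
Here $\gamma_0 = \SET{Y_\ell \mapsto \vec[m]{y}.Y''_0(Y'_0(\vec[m]{y}\cf),\vec[m]{y}\cf)}$, where $Y_\ell$ is the variable at $p$ in $\ell_2$. The substitution $\delta_0$ is the $\vec{x}$-lifter with renaming $\rho_0$, and $\delta_l$ uses $\rho_l$. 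The lemma's data are then chosen as follows:
\begin{itemize}
\item $Y := Y''_0$ and $Y' := Y''_l$;
\item $\vec[l]{w}$ is the given context;
\item $\delta$ maps $Y'_0 \mapsto \vec[m]{y}.Y'_l(\vec[m]{y}\cf,\vec[l]{w}\cf)$ and each $\rho_0(Z) \mapsto \vec{x},\vec[j]{y}.\rho_l(Z)(\vec{x}\cf,\vec[l]{w}\cf,\vec[j]{y}\cf)$.
\end{itemize}
With these choices $\delta$ is idempotent and $\SET{Y,Y'}$ is fresh for it. A direct calculation should give $\ell_2\gamma_l = (\ell_2\gamma_0)^{Y,Y',\vec[l]{w},\delta}\delta$ and $\vec{x},\vec[l]{w}.\ell_1\delta_l = \vec[l]{w}.(\vec{x}.\ell_1\delta_0)\delta$, where the lifter part is exactly what $\delta$ encodes.

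The core of the argument is a commutation property of the operation $(\cdot)^{Y,Y',\vec[l]{w},\delta}\delta$ with hereditary substitution. I would prove $(s\theta)^{Y,Y',\vec[l]{w},\delta}\delta = s^{Y,Y',\vec[l]{w},\delta}\delta\,\theta^{Y,Y',\vec[l]{w},\delta}\delta'$ for a suitable adjusted $\delta'$. The proof goes by induction on $(\ord{\theta},s)$, mirroring the well-definedness of substitution application. The only interesting case is a $Y$-headed subterm: the first argument is abstracted over $\vec[l]{w}$ and then re-instantiated when $Y'$ is applied. With this property:
\begin{itemize}
\item every unifier of the $l = 0$ problem lifts to a unifier of the $l$ problem;
\item conversely, since the $\vec[l]{w}$ occur only inside the first argument of $Y''_l$, every unifier $\theta_l$ of the lifted problem is, up to renaming, of the form $\theta_0^{Y,Y',\vec[l]{w},\delta}\delta$ for some unifier $\theta_0$ of the $l = 0$ problem.
\end{itemize}
The operation also preserves instantiation order and orthogonality of unifiers. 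It preserves the exclusion conditions (1) and (2) of \defref{overlap}\enuref{overlap-iv}(b), because $\theta_l(Y'_l)$ is a projection or an instance of $\ell_1$ iff $\theta_0(Y'_0)$ is. Hence the lifting of a minimal complete set $U_0$ is a minimal complete set for the lifted problem, and $\theta_0 \in U_0$ survives the filtering exactly when its lift does.

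Applying the commutation property to both sides of the critical peak, with $p$ and $q = 1$ unchanged, yields $s = u^{Y,Y',\vec[l]{w},\delta}\delta$ and $t = v^{Y,Y',\vec[l]{w},\delta}\delta$ for the critical pair $u \approx v$ induced by $\theta_0$. The right-hand side is handled the same way through $r_2\gamma$, and the left through $\ell_2\gamma\theta[(\vec{x}.r_1\delta_0)\theta]_{p1}$.

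I expect the main obstacle to be the surjectivity direction. We must know that $\theta_l$ itself, rather than merely some unifier more general than it, is a lift of a member of $U_0$. Since the definition allows an arbitrary minimal complete set $U_l$, this needs the fact that minimal complete sets are unique up to variants. I would argue this via orthogonality together with completeness. If that is delicate, a fallback is to choose $U_0$ as the ``unlifting'' of the given $U_l$ and verify minimality and completeness directly using the bijection above.
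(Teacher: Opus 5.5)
Your skeleton matches the paper's. You handle the trivial cases with $\delta=\varnothing$. For $l>0$ you build the companion overlap with $l=0$ from $\gamma_0$ and an $\vec{x}$-lifter $\delta_0$, and you choose $Y=Y_0''$, $Y'=Y_l''$. Your $\delta$ raises the variables of the $l=0$ unification problem over $\vec[l]{w}$, and its idempotence and freshness come from disjointness of the two variable sets.

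You differ in how you justify the central step: that $\theta_l\in U_l$ comes from a member of a minimal complete set for the $l=0$ problem that also escapes conditions (1) and (2) of \defref{overlap}\enuref{overlap-iv}(b). The paper argues procedurally. The two problems differ only in that every free variable carries the extra arguments $\vec[l]{w}\cf$, which occur nowhere else. This shape is invariant under the inference rules of \cite{NM26a}, and a renaming $\rho$ plus an argument permutation $\pi$ then gives $\delta$. You argue semantically instead: a raising map on unifiers preserves unifiers, generality, orthogonality and the exclusion conditions, and minimal complete sets are unique up to mutual instantiation. This costs you a commutation lemma and the uniqueness argument. In return it is independent of the unification procedure and applies to an arbitrary minimal complete set $U_l$. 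Mutual instances need not be renamings or argument permutations of each other, so ``up to renaming'' is too optimistic. This is harmless, because the lemma only asks for some idempotent $\delta$, and the connecting substitution can be absorbed into it; you do not need the renaming-plus-permutation shape the paper asserts.

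Two details need correcting.
\begin{itemize}
\item Your $\delta$ as written only covers $Y_0'$ and the lifter variables. It must also send each fresh variable $H$ introduced by $\theta_0$ to a raised copy, roughly $\vec{z}.H'(\vec{z}\cf,\vec[l]{w}\cf)$; the paper's $\delta$ ranges over $\fv{\theta_2}$ for exactly this reason. Otherwise $u^{Y,Y',\vec[l]{w},\delta}\delta$ still contains $H$ where $s$ contains $H'$ applied to $\vec[l]{w}\cf$. Moreover, the raised image of $U_0$ would not even be complete for the lifted problem.
\item Your converse direction should be stated for members of minimal complete sets, not for all unifiers. An arbitrary unifier of the lifted problem may use some $w_i$ rigidly and is then merely an instance of a raised one. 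The reason the converse holds is that $\vec[l]{w}$ occur in the lifted unification problem only as extra arguments of free variables. It is not that they occur only inside the first argument of $Y_l''$; that is a property of the critical pair terms, not of the unification problem.
\end{itemize}
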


\begin{proof}
Let $\langle \ell_1 \R r_1,p,q,(\vec{x}),(\vec[l]{w}),\delta_1,\gamma_1,U,
\ell_2 \R r_2 \rangle$ be the overlap corresponding to
$s \approx t \in \lcp{\xR}$, so
$s = \ell_2\gamma_1\theta_1[(\vec{x},\vec[l]{w}.r_1\delta_1)\theta_1]_{pq}$
and $t = r_2\gamma_1\theta_1$ for some $\theta_1 \in U$. If we are in case
\enuref{overlap-iv}(a) of \defref{overlap} then
$q = 0$, $l = 0$ and $\gamma_1 = \varnothing$. Hence, we can
choose fresh variables $Y$ and $Y'$ of the same type with arity $\geq 1$
as well as $\delta = \varnothing$ such that
$s = s^{Y,Y',(),\delta}\delta \approx t^{Y,Y',(),\delta}\delta = t
\in \cp{\xR}$ as desired.
Now consider case \enuref{overlap-iv}(b) of \defref{overlap} and
let $q = 1$, $\ell_2|_p = \vec{x}.Y(\vec[m]{s})$ and
$\gamma_1 = \SET{Y \mapsto \vec[m]{y}.Y_1''(\vec[l]{w}.
Y_1'(\vec[m]{y}\cf,\vec[l]{w}\cf),\vec[m]{y}\cf)}$.
If $l = 0$ then we proceed as in the previous case. So let $l > 0$ and
define $\gamma_2 = \SET{Y \mapsto \vec[m]{y}.Y_2''(Y_2'(\vec[m]{y}\cf),
\vec[m]{y}\cf)}$ and $\delta_2$ to be a $\vec{x}$-lifter of $\ell_1$ away
from $\fv{\ell_2\gamma_2}$. Note that
$\vec{x},\vec[l]{w}.\ell_1\delta_1$ and $\ell_2\gamma_1|_{p1}$ are
unifiable if and only if $\vec{x}.\ell_2\delta_2$ and
$\ell_2\gamma_2|_{p1}$ are; the two unification problems differ only
in the fact that in the first one, every free variable additionally has
$\vec[l]{w}\cf$ as arguments which do not occur elsewhere. Inspecting the
inference rules for DHP unification \cite{NM26a} reveals that this is an
invariant which holds at all times during the unification process.
Thus, there exists a unifier $\theta_2$ of $\vec{x}.\ell_2\delta_2$ and
$\ell_2\gamma_2|_{p1}$ which like $\theta_1$ does not satisfy one of the
conditions (1) and (2) in \defref{overlap}\enuref{overlap-iv}(b).
Hence, $u = \ell_2\gamma_2\theta_2[(\vec{x}.r_1\delta_2)\theta_2]_{pq}
\approx r_2\gamma_2\theta_2 = v \in \cp{\xR}$.
Moreover, there exists a variable renaming $\rho : \fv{\theta_2} \cup
\fv{\ell_2\delta_2} \to \fv{\theta_1} \cup \fv{\ell_1\delta_1}$
and permutation $\pi$ such that for
\[
\delta = \SET{X \mapsto \vec[m]{y}.\rho(X)((\vec[l]{w},\vec[m]{y})^\pi)
\mid X \in \fv{\theta_2} \cup \fv{\ell_2\delta_2}}
\]
we have $s = u^{Y_2'',Y_1'',\vec[l]{w},\delta}$ and
$t = v^{Y_2',Y_1',\vec[l]{w},\delta}$. Since $l > 0$,
$(\fv{\theta_2} \cup \fv{\ell_2\delta_2}) \cap
(\fv{\theta_1} \cup \fv{\ell_1\delta_1}) = \varnothing$,
so $\delta$ is idempotent. Furthermore, $\SET{Y_2'',Y_1''}$ is fresh for
$\delta$. Since $\dom{\delta}$ only occurs in the first arguments of
$Y_2''$, $s = s\delta$ and $t = t\delta$ as desired.
\end{proof}

We will use the following auxiliary result in order to show
that rewrite steps on (descendants of) terms in critical pairs
can also be performed in (descendants of) terms in lifted critical pairs.

\begin{lemma}
\label{lem:lcpsplit}
Consider a term $s$, variables $Y$, $Y'$ of the same type with arity
$\geq 1$, a sequence of variables $\vec[l]{w}$ and substitutions $\delta$,
$\theta$. If
\begin{enumerate}[(i)]
\item
$\delta$ is idempotent
\item
$\SET{Y,Y'}$ is fresh for $\delta$
\item
$\dom{\theta} \cap (\dom{\delta} \cup \SET{Y}) = \varnothing$
\end{enumerate}
then $s^{Y,Y',\vec[l]{w},\delta}\theta^{Y,Y',\vec[l]{w},\delta}\delta =
(s\theta)^{Y,Y',\vec[l]{w},\delta}\delta$.
\end{lemma}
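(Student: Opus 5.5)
We argue by well-founded induction on the pair $(\ord{\theta}, s)$ ordered by $(>_{\mathbb{N}},\subt)_{\lex}$. This mirrors the recursion that defines hereditary substitution application, so every unfolding of $s\theta$ on the right-hand side has a matching unfolding on the left. Write $\hat{u}$ for $u^{Y,Y',\vec[l]{w},\delta}$ and $\hat\theta$ for $\theta^{Y,Y',\vec[l]{w},\delta}$. We must show $\hat{s}\hat\theta\delta = \widehat{s\theta}\delta$.

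Two small auxiliary facts are needed first, both by routine induction on terms.
\begin{itemize}
\item[(a)] The operation $\hat{\cdot}$ commutes with variable renamings that avoid $Y$, $Y'$, $\dom{\delta}$ and $\fv{\delta}$. This handles the $\alpha$-renaming clause $(\vec{x}.t)\theta = \vec{y}.t\{\vec{x}\mapsto\vec{y}\}\theta$ and the freshness side condition in \defref{lcprec}.
\item[(b)] Since $\delta$ is idempotent and $\{Y,Y'\}$ is fresh for $\delta$, we have $u\delta\delta = u\delta$, and $\widehat{u\delta} = \hat{u}\delta$ whenever $Y \notin \dom{\delta}$. In particular, $\delta$ applied under the binder $\vec[l]{w}$ in the first argument of $Y'$ can be absorbed by the outer $\delta$.
\end{itemize}

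Now let $s = \vec[k]{x}.h(\vec{s})$ and split on the head $h$.
\begin{itemize}
\item[(1)] $h \notin \dom{\theta}$ and $h \neq Y$. Both sides reduce componentwise to $\hat{s_i}\hat\theta\delta$ versus $\widehat{s_i\theta}\delta$, which holds by the induction hypothesis on the smaller $s_i$. Bound variables are handled by (a), using that $\vec[k]{x}$ is fresh for $\theta$ and $\delta$.
\item[(2)] $h = Y$. By hypothesis (iii), $Y \notin \dom{\theta}$, so $s\theta = \vec[k]{x}.Y(\vec{s}\theta)$ and
\[
\widehat{s\theta} = \vec[k]{x}.Y'(\vec[l]{w}.\widehat{s_1\theta}\delta,\widehat{s_2\theta},\dots).
\]
On the left-hand side, $Y' \notin \dom{\hat\theta}$ because $Y'$ is fresh. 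Hence $\hat{s}\hat\theta = \vec[k]{x}.Y'(\vec[l]{w}.\hat{s_1}\delta\hat\theta,\hat{s_2}\hat\theta,\dots)$. The critical subterm is $\hat{s_1}\delta\hat\theta\delta$. Hypothesis (iii) gives $\dom{\hat\theta}\cap\dom{\delta} = \varnothing$, and idempotence of $\delta$ then lets us commute it to $\hat{s_1}\hat\theta\delta$ (after an extra $\delta$, which (b) absorbs). The induction hypothesis on $s_1$ then finishes the argument.
\item[(3)] $h = X \in \dom{\theta}$ with $\theta(X) = \vec{z}.v$. Then $s\theta = \vec[k]{x}.v\{\vec{z}\mapsto\vec{s}\theta\}$, while $\hat{s}\hat\theta = \vec[k]{x}.\hat{v}\{\vec{z}\mapsto\hat{\vec{s}}\hat\theta\}$ since $\hat\theta(X) = \vec{z}.\hat{v}$. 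By the induction hypothesis on the $s_i$, after applying $\delta$ the arguments agree with $\widehat{s_i\theta}\delta$. It remains to show
\[
\hat{v}\{\vec{z}\mapsto \vec{w'}\}\delta = \widehat{v\{\vec{z}\mapsto\vec{t}\}}\delta
\qquad\text{where } \hat{t_i}\delta = w'_i\delta.
\]
This is another instance of the lemma, with the substitution $\{\vec{z}\mapsto\vec{t}\}$ of strictly smaller order, so the outer induction hypothesis applies. Its side conditions hold because the $z_i$ are bound and hence fresh, and $\delta$ applied to the $w'_i$ is idempotent.
\end{itemize}

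The main obstacle is case (3) combined with the bookkeeping of where $\delta$ is applied. The inner substitution receives arguments that are only equal \emph{after} $\delta$, so the induction hypothesis must be stated in a form that permits comparing up to a trailing $\delta$. I would first try strengthening the statement to $\hat{s}\hat\theta_1\delta = \widehat{s\theta_2}\delta$ whenever $\hat\theta_1\delta = \hat\theta_2\delta$ pointwise. If that becomes unwieldy, I would instead replace $\theta$ up front by an equivalent substitution with $\delta$ already pushed in, using (b).
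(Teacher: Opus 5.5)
Your plan has the same architecture as the paper's proof. Both use lexicographic induction on $(\ord{\theta},s)$ and a split on the head of $s$, and in the $\theta$-headed case both reduce to the outer hypothesis for the smaller-order substitution $\SEt{\vec{z} \mapsto \vec{s}\theta}$. Your componentwise argument in case (1) is also valid when $h \in \dom{\delta}$, because applying $\delta$ to $h(\vec{a})$ depends only on $h$ and $\vec{a}\delta$. This is more direct than the paper's treatment of that subcase.

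The step you leave open in case (3) closes without strengthening the statement. Choose the bound variables $\vec{z}$ fresh for $\delta$, i.e.\ outside $\dom{\delta} \cup \fv{\delta}$. The usual composition property of substitutions then lets you push $\delta$ in, use the inner hypothesis on the arguments, and pull $\delta$ out again:
\[
\hat{v}\SEt{\vec{z} \mapsto \vec{w'}}\delta
= \hat{v}\delta\SEt{\vec{z} \mapsto \vec{w'}\delta}
= \hat{v}\delta\SEt{\vec{z} \mapsto \vec{\hat{t}}\delta}
= \hat{v}\SEt{\vec{z} \mapsto \vec{\hat{t}}}\delta .
\]
By definition, the lifting of $\SEt{\vec{z} \mapsto \vec{t}}$ is $\SEt{\vec{z} \mapsto \vec{\hat{t}}}$, so the outer hypothesis applies verbatim. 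This push-in/pull-out of $\delta$ is exactly the paper's argument.

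One caveat applies to your proof and the paper's alike. The commutation in case (2), $\hat{s_1}\delta\hat\theta\delta = \hat{s_1}\hat\theta\delta$, does not follow from idempotence and (iii) alone. It also needs $\dom{\theta} \cap \fv{\delta} = \varnothing$, and $Y' \notin \dom{\theta}$, which you take for granted by calling $Y'$ fresh. Without these the statement is false. Take $X, Z, \m{c} : \m{a}$, $Y, Y' : \m{a} \to \m{a}$, $l = 0$, $\delta = \SET{X \mapsto Z}$, $\theta = \SET{Z \mapsto \m{c}}$ and $s = Y(X)$. Conditions (i)--(iii) hold, but $\hat{s}\hat\theta\delta = Y'(\m{c})$ whereas $\widehat{s\theta}\delta = Y'(Z)$. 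The paper's step justified by (ii) silently makes the same assumption. It is harmless in the only use, \lemref{lcpstep}, where $\dom{\theta} \subseteq \fv{\ell}$ for a freshly renamed rule. You should still state it as a hypothesis.
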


\begin{proof}
We proceed by induction on $\ord{\theta}$ with a nested induction
on $s$ and abbreviate
$\cdot^{Y,Y',\vec[l]{w},\delta}$ to $\cdot^\bullet$.
Let $s = \vec[k]{x}.h(\vec{s})$. Without loss of generality,
$\SET{\vec[k]{x}}$ is fresh for $\theta$, $\theta^\bullet$ and $\delta$.
Consider the case where $h \notin \dom{\theta}$. If $h = Y$ then
\begin{align*}
s^\bullet\theta^\bullet\delta
&= \vec[k]{x}.Y'(\vec[l]{w}.s_1^\bullet\theta^\bullet\delta\delta,
s_2^\bullet\theta^\bullet\delta,\dots,s_n^\bullet\theta^\bullet\delta)
\tag{ii} \\
&= \vec[k]{x}.Y'(\vec[l]{w}.(s_1\theta)^\bullet\delta\delta,
(s_2\theta)^\bullet\delta,\dots,(s_n\theta)^\bullet\delta)
\tag{inner IH} \\
&= (s\theta)^\bullet\delta \tag{ii}
\end{align*}
and if $\delta(h) = \vec{y}.u$ then we have the following:
\begin{align*}
s^\bullet\theta^\bullet\delta
&= \vec[k]{x}.u\SET{\vec{y} \mapsto \vec{s^\bullet}\theta^\bullet\delta} \\
&= \vec[k]{x}.u\SET{\vec{y} \mapsto (\vec{s}\theta)^\bullet\delta}
\tag{inner IH} \\
&= \vec[k]{x}.u^\bullet\SET{\vec{y} \mapsto (\vec{s}\theta)^\bullet}\delta 
tag{i, ii} \\
&= \vec[k]{x}.(u\SET{\vec{y} \mapsto \vec{s}\theta})^\bullet\delta
\tag{outer IH} \\
&= (s\theta)^\bullet\delta
\end{align*}
Here, the outer IH is applicable because we can choose $\SEt{\vec{y}}$ to
satisfy condition (iii).
For the remaining case $h \notin \dom{\delta} \cup \SET{Y}$ we have
\begin{align*}
s^\bullet\theta^\bullet\delta
&= \vec[k]{x}.h(\vec{s^\bullet}\theta^\bullet\delta) \\
&= \vec[k]{x}.h((\vec{s}\theta)^\bullet\delta) \tag{inner IH} \\
&= (s\theta)^\bullet\delta
\end{align*}
as desired. Finally, if $\theta(h) = \vec{y}.u$ then
$h \notin \dom{\delta} \cup \SET{Y}$ by (iii) and
\begin{align*}
s^\bullet\theta^\bullet\delta
&= \vec[k]{x}.u^\bullet\delta\SET{\vec{y} \mapsto
\vec{s^\bullet}\theta^\bullet\delta}
\tag{wlog, $\SEt{\vec{y}}$ fresh for $\delta$} \\
&= \vec[k]{x}.u^\bullet\delta\SET{\vec{y} \mapsto
(\vec{s}\theta)^\bullet\delta} \tag{inner IH} \\
&= \vec[k]{x}.u^\bullet\SET{\vec{y} \mapsto
(\vec{s}\theta)^\bullet}\delta
\tag{wlog, $\SEt{\vec{y}}$ fresh for $\delta$} \\
&= \vec[k]{x}.(u\SET{\vec{y} \mapsto \vec{s}\theta})^\bullet\delta
\tag{outer IH} \\
&= (s\theta)^\bullet\delta
\end{align*}
Here, the outer IH is applicable because we can choose $\SEt{\vec{y}}$ to
satisfy condition (iii).
\end{proof}

\begin{corollary}
\label{cor:lcpsubst}
Consider a term $s$, variables $Y$, $Y'$ of the same type with arity
$\geq 1$, a sequence of variables $\vec[l]{w}$ and a substitutions
$\delta$, $\theta$. If
\begin{enumerate}[(i)]
\item
$\delta$ is idempotent
\item
$\SET{Y,Y'}$ is fresh for $\delta$
\item
$\dom{\theta} \cap (\dom{\delta} \cup \SET{Y}) = \varnothing$ and
\item
$Y \notin \fv{s}$
\end{enumerate}
then $s\theta^{Y,Y',\vec[l]{w},\delta}\delta =
(s\theta)^{Y,Y',\vec[l]{w},\delta}\delta$.
\end{corollary}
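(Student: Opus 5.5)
The statement is Corollary~\ref{cor:lcpsubst}, and I would derive it directly from Lemma~\ref{lem:lcpsplit}. The extra hypothesis~(iv), $Y \notin \fv{s}$, should make the lifting operation $\cdot^{Y,Y',\vec[l]{w},\delta}$ act trivially on $s$. I abbreviate it as $\cdot^\bullet$, as in the proof of Lemma~\ref{lem:lcpsplit}.

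\textbf{Step 1: $s^\bullet = s$ whenever $Y \notin \fv{s}$.} I would prove this auxiliary claim by induction on $s = \vec[k]{x}.h(\vec{s})$. Since $Y$ is free in $s$ only if it is a head not bound by an enclosing binder, and we follow the convention that no variable occurs both free and bound, the head $h$ of any subterm of $s$ is never $Y$. Therefore only the second clause of Definition~\ref{def:lcprec} is ever used, and $s^\bullet = \vec[k]{x}.h(\vec{s^\bullet}) = \vec[k]{x}.h(\vec{s})$ by the induction hypothesis. The subterms $s_i$ again do not contain $Y$ free, so the induction goes through. The only point needing care is $\alpha$-renaming: the binders $\vec[k]{x}$ chosen fresh for $\delta$ are just a renaming and do not affect equality of terms.

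\textbf{Step 2: combine with Lemma~\ref{lem:lcpsplit}.} Hypotheses (i)--(iii) of the corollary are exactly those of Lemma~\ref{lem:lcpsplit}. That lemma gives $s^\bullet\theta^\bullet\delta = (s\theta)^\bullet\delta$. By Step~1, $s^\bullet = s$, so the left-hand side becomes $s\theta^\bullet\delta$, which is the claim.

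I expect no real obstacle. The only subtle point is Step~1's treatment of bound variables, which is resolved by the variable convention the paper already adopts.
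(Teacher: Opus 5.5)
Your proposal is correct and matches the paper's proof: the paper likewise notes that (iv) gives $s = s^{Y,Y',\vec[l]{w},\delta}$ and then applies Lemma~\ref{lem:lcpsplit}, whose hypotheses are exactly (i)--(iii). Your Step~1 just spells out the induction behind that identity; the cleanest justification for $h \neq Y$ there is that bound variables can be $\alpha$-renamed away from $Y$, since the free/bound convention says nothing about $Y$ when $Y$ is not free in $s$.
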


\begin{proof}
From (iv) we obtain $s = s^{Y,Y',\vec[l]{w},\delta}$, so
\lemref{lcpsplit} yields
$s\theta^{Y,Y',\vec[l]{w},\delta}\delta =
(s\theta)^{Y,Y',\vec[l]{w},\delta}\delta$.
\end{proof}

\begin{lemma}
\label{lem:lcpstep}
Consider a \textup{DPRS}, terms $s$, $t$
variables $Y$, $Y'$ of the same type with arity $\geq 1$,
a sequence of variables $\vec[l]{w}$ and an idempotent
substitution $\delta$ with $\SET{Y,Y'}$ fresh for $\delta$.
If $s \RbR t$ then $s^{Y,Y',\vec[l]{w},\delta}\delta \Rab{\xR}{*}
t^{Y,Y',\vec[l]{w},\delta}\delta$.
\end{lemma}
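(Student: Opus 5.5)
We prove \lemref{lcpstep} by unfolding the rewrite step. Abbreviate $\cdot^{Y,Y',\vec[l]{w},\delta}$ to $\cdot^\bullet$, and let $s \Rab{\xR}{p \mid \ell \R r \mid \theta} t$. Thus $s|_p = \vec{x}.\ell\theta$ and $t = s[\vec{x}.r\theta]_p$. We may assume $\dom{\theta} \subseteq \fv{\ell}$. By renaming the variables of the rule apart, we may also assume that $\fv{\ell}$ is disjoint from $\dom{\delta} \cup \fv{\delta} \cup \SET{Y,Y'}$. Hence condition (iii) of \lemref{lcpsplit} holds for $\theta$, and $Y \notin \fv{\ell} \supseteq \fv{r}$. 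The proof is by induction on the position $p$, i.e.\ on the structure of $s$.

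\emph{Root case} $p = \epsilon$. Then $s = \vec{x}.\ell\theta$ and $t = \vec{x}.r\theta$. Since $\cdot^\bullet$ and $\delta$ commute with the outer binder (the bound variables are fresh for $\delta$), \corref{lcpsubst} yields
$s^\bullet\delta = \vec{x}.\ell\theta^\bullet\delta$ and $t^\bullet\delta = \vec{x}.r\theta^\bullet\delta$.
Hence $s^\bullet\delta \R t^\bullet\delta$ is a single root step with rule $\ell \R r$ and substitution $\theta^\bullet\delta$.

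\emph{Inductive case} $p = ip'$ with $s = \vec[k]{x}.h(\vec{s})$. If $h \neq Y$ and $h \notin \dom{\delta}$, then $s^\bullet\delta = \vec[k]{x}.h(\vec{s^\bullet}\delta)$, and the claim follows from the induction hypothesis for argument $i$ by closure of $\Rab{\xR}{*}$ under contexts. If $h = Y$ and $i \geq 2$, the same argument works under $Y'$. If $h = Y$ and $i = 1$, the argument becomes $\vec[l]{w}.s_1^\bullet\delta\delta$. Since $\delta$ is idempotent, $s_1^\bullet\delta\delta = s_1^\bullet\delta$, so the induction hypothesis applies directly inside the extra binder $\vec[l]{w}$.

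The main obstacle is the case $h \in \dom{\delta}$, say $\delta(h) = \vec{y}.u$. Then $s^\bullet\delta = \vec[k]{x}.u\SET{\vec{y} \mapsto \vec{s^\bullet}\delta}$, where the argument $s_i^\bullet\delta$ may be copied several times or erased by $u$. Here we apply \thmref{clsubst} to the substitution $\SET{\vec{y} \mapsto \vec{s^\bullet}\delta}$, using $s_i^\bullet\delta \Rab{\xR}{*} t_i^\bullet\delta$ from the induction hypothesis. This gives $u\SET{\vec{y} \mapsto \vec{s^\bullet}\delta} \Rab{\xR}{*} u\SET{\vec{y} \mapsto \vec{t^\bullet}\delta}$, which equals $t^\bullet\delta$; if $y_i$ is erased, the sequence is simply empty. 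This is exactly why the statement only claims $\Rab{\xR}{*}$ rather than a single step. One must still check that $u$ contains no $Y$, which holds by freshness (ii), so that the $\cdot^\bullet$ of the whole term decomposes as described. The same reasoning covers a step inside $u$-images when $h = Y$ combined with $\delta$, because $\delta$ never touches $Y$ or $Y'$.
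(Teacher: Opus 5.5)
Your proof is correct and follows the same route as the paper: induction on $s$, the root case via \corref{lcpsubst} after renaming the rule apart, and the case of a step in the first argument of a $Y$-headed term via idempotency of $\delta$. The one difference is that you treat the case $h \in \dom{\delta}$ explicitly, using \thmref{clsubst} to account for copying or erasure of the rewritten argument, whereas the paper covers it with ``we immediately obtain''; your version supplies the missing justification for that step.
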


\begin{proof}
We proceed by induction on $s$. Let $s = \vec[k]{x}.h(\vec{s})$.
If $s \RbR s[u]_i = t$ then
$s_i^{Y,Y',\vec[l]{w},\delta}\delta \Rab{\xR}{*}
u^{Y,Y',\vec[l]{w},\delta}\delta$ by
the induction hypothesis. If $h \neq Y$ or $i > 1$ we immediately obtain
$s^{Y,Y',\vec[l]{w},\delta}\delta \Rab{\xR}{*}
t^{Y,Y',\vec[l]{w},\delta}\delta$. For $h = Y$ and $i = 1$ we obtain
$\vec[l]{w}.s_1^{Y,Y',\vec[l]{w},\delta}\delta \Rab{\xR}{*}
\vec[l]{w}.u^{Y,Y',\vec[l]{w},\delta}\delta$ and therefore also
$s^{Y,Y',\vec[l]{w},\delta}\delta \Rab{\xR}{*}
t^{Y,Y',\vec[l]{w},\delta}\delta$ as $\delta$ is idempotent.
Now consider $s \Rab{\xR}{\epsilon} t$. By definition,
$s = \vec[k]{x}.\ell\theta$ for some rule $\ell \R r \in \xR$
and substitution $\theta$ with $\dom{\theta} \subseteq \fv{\ell}$.
Without loss of generality, $\fv{\ell} \supseteq \fv{r}$ is fresh. Hence,
\corref{lcpsubst} yields $u\theta^{Y,Y',\vec[l]{w},\delta}\delta =
(u\theta)^{Y,Y',\vec[l]{w},\delta}\delta$ for $u \in \SET{\ell,r}$
and thus $s^{Y,Y',\vec[l]{w},\delta}\delta \RbR
t^{Y,Y',\vec[l]{w},\delta}\delta$.
\end{proof}

\lcpjoin*

\begin{proof}
Let $\xR$ be a DPRS and $s \approx t \in \lcp{\xR}$.
By \lemref{lcp_cp} there exist $u \approx v \in \cp{\xR}$, variables
$Y$, $Y'$ of the same type with arity $\geq 1$, a sequence of variables
$\vec[l]{w}$ and an idempotent substitution $\delta$ with $\SET{Y,Y'}$
fresh for $\delta$ such that $s = u^{Y,Y',\vec[l]{w},\delta}\delta$ and
$t = v^{Y,Y',\vec[l]{w},\delta}\delta$.
By assumption $u \Db{\xR} v$, so from \lemref{lcpstep}
together with a straightforward induction argument
we obtain $s \Db{\xR} t$ as desired.
\end{proof}

\section{Remaining Proofs in \secref{ortho}}
\label{app:proofs}

We start with two easy but important results. The first
one follows by a straightforward induction on terms by using case
\enuref{ms1} of the multi-step relation. The second one is an easy
consequence of the fact that rules consist of basic terms.

\begin{lemma}
\label{lem:msRefl}
For every \textup{DPRS} $\xR$, the relation $\MSb{\xR}$ is reflexive.
\end{lemma}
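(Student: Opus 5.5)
The claim is that $\MSb{\xR}$ is reflexive for every DPRS $\xR$, i.e.\ $s \MSb{\xR} s$ for every term $s$. I will prove this by structural induction on $s$, using only case \enuref{ms1} of the definition of multi-steps; case \enuref{ms2} is not needed.

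Write $s = \vec[k]{x}.h(\vec{s})$ with $h \in \xF \cup \xV$ and $s_i \in \trm{\sigma_i}$. Each $\vec[k]{x}.s_i$ is a term whose body $s_i$ is a proper subterm of $s$. Since $\vec[k]{x}.s_i$ merely prepends the binder context of $s$, I take the induction measure to be the size of the body, ignoring the binders. Alternatively, one can induct on $|s|$, where the prepended binders do not count.

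By the induction hypothesis, $\vec[k]{x}.s_i \MSb{\xR} \vec[k]{x}.s_i$ for all $1 \leq i \leq n$. Case \enuref{ms1} then gives $\vec[k]{x}.h(\vec{s}) \MSb{\xR} \vec[k]{x}.h(\vec{s})$ directly. The base case $n = 0$ is the same rule with no premises, so $\vec[k]{x}.h \MSb{\xR} \vec[k]{x}.h$.

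The only point needing care is well-foundedness of the induction. The abstractions $\vec[k]{x}.s_i$ in the premises are not literally subterms in the sense of $\subt$ with the same binder shape. They are, however, strictly smaller in size, since $|\vec[k]{x}.s_i| = |s_i| < |s|$ under the paper's size function, which ignores binders. Hence induction on size is well-founded. Nothing specific to DPRSs is used, so the argument works for any HRS.
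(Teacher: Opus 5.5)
Your proof is correct and takes essentially the paper's approach: a straightforward induction on terms using only case \enuref{ms1} of the multi-step relation. The detour through a size measure is harmless but unnecessary, since the paper's subterm definition keeps the binder context, so $\vec[k]{x}.s_i = s|_i$ is literally a proper subterm of $s$ and induction on $\subt$ already applies.
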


\begin{lemma}
\label{lem:msAbs}
We have $\vec[k]{x}.s \MSb{\xR} \vec[k]{x}.t$ if and only if
$s \MSb{\xR} t$.
\end{lemma}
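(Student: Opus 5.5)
The statement is Lemma~\ref{lem:msAbs}: $\vec[k]{x}.s \MSb{\xR} \vec[k]{x}.t$ if and only if $s \MSb{\xR} t$. The plan is to compare the derivations of the two judgements directly, using that the multi-step relation is defined by cases on a term $\vec[k]{x}.h(\vec{s})$ whose binder prefix is arbitrary. It is cleanest to prove the slightly more general statement that for all $\vec[k]{x}$ and $\vec[j]{y}$ we have $\vec[k]{x}.s \MSb{\xR} \vec[k]{x}.t$ iff $\vec[j]{y}.s \MSb{\xR} \vec[j]{y}.t$, for basic $s$ and $t$. The lemma is then the instance $j=0$; if $s$ is itself an abstraction, we first merge its prefix into $\vec[k]{x}$.

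\emph{Rule (i).} Suppose the last rule applied is case~\enuref{ms1}. Then $s = h(\vec{s})$, $t = h(\vec{t})$, and the premises are $\vec[k]{x}.s_i \MSb{\xR} \vec[k]{x}.t_i$. Here $\vec[k]{x}.s_i$ denotes the subterm in the paper's sense, i.e.\ $\vec[k]{x}$ followed by the own binders of $s_i$. By the induction hypothesis, applied to the premises with the prefix changed from $\vec[k]{x}$ to $\vec[j]{y}$, we obtain $\vec[j]{y}.s_i \MSb{\xR} \vec[j]{y}.t_i$. Case~\enuref{ms1} then yields $\vec[j]{y}.h(\vec{s}) \MSb{\xR} \vec[j]{y}.h(\vec{t})$. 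The argument works because the premises of~\enuref{ms1} only carry the outer prefix along, so the induction on the height of the derivation is uniform in the prefix.

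\emph{Rule (ii).} Suppose the last rule applied is case~\enuref{ms2}. Then $s = \ell\theta$ and $t = r\delta$ with $\theta \MSb{\xR} \delta$. The premise on the substitutions does not mention the prefix at all. Moreover, $\ell$ and $r$ are basic, so $\ell\theta$ and $r\delta$ are basic terms. Hence the same instance of~\enuref{ms2} directly gives $\vec[j]{y}.\ell\theta \MSb{\xR} \vec[j]{y}.r\delta$. This is exactly the remark that rules consist of basic terms.

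The only delicate point is bookkeeping for bound variables. Free occurrences of $\vec[k]{x}$ in $s$ must be handled by $\alpha$-renaming and the convention that no variable occurs both free and bound. We should also check that ``$\vec[k]{x}.s_i$'' in~\enuref{ms1} reads consistently with the subterm convention $(\vec[k]{x}.h(\vec{s}))|_{i} = \vec[k]{x}.s_i$. I do not expect any genuine obstacle beyond this.
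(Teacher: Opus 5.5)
Your proof is correct and takes the same approach as the paper. The paper only remarks that the lemma is an easy consequence of rules consisting of basic terms; your induction on the derivation, generalized over the binder prefix so that case (i) goes through and with case (ii) unaffected because $\ell\theta$ is basic and the premise on $\theta$ and $\delta$ ignores the prefix, is exactly the argument behind that remark.
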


\msincl*

\begin{proof}
Consider $s \Rab{\xR}{p} t$. We proceed by induction on $p$. If
$p = \epsilon$ then $s \MSb{\xR} t \mscase{ii}$ using the substitution
$\theta$ from the rewrite step since we obtain $\theta \MSb{\xR} \theta$
from \lemref{msRefl}. Now let $p = iq$, so $s = \vec[k]{x}.h(\vec{s})$ and
$t = \vec[k]{x}.h(s_1,\dots,t_i,\dots,s_n)$ for $n \geq i$.
From \lemref{msRefl} we obtain $\vec[k]{x}.s_j \MSb{\xR} \vec[k]{x}.s_j$
for all $1 \leq j \leq n$ and the induction hypothesis yields
$\vec[k]{x}.s_i \MSb{\xR} \vec[k]{x}.t_i$. Hence,
$s \MSb{\xR} t \mscase{i}$. For the second inclusion, consider
$s \MSb{\xR} t$. We proceed by induction on the derivation of
$s \MSb{\xR} t$. If $s \MSb{\xR} t\mscase{i}$ then
$s = \vec[k]{x}.h(\vec{s})$, $t = \vec[k]{x}.h(\vec{t})$ and the induction
hypothesis yields $\vec[k]{x}.s_i \Rab{\xR}{*} \vec[k]{x}.t_i$. Hence, we
obtain $s \Rab{\xR}{*} t$ by sequentialization. Now assume
$s \MSb{\xR} t \mscase{ii}$, so $s = \vec[k]{x}.\ell\theta$,
$t = \vec[k]{x}.r\delta$ and we obtain $\theta(x) \Rab{\xR}{*} \delta(x)$
for all $x \in \fv{\ell}$ from the induction hypothesis. Hence,
$s \Rab{\xR}{*} t$ by \thmref{clsubst} together with \lemref{msAbs}.
\end{proof}

Next, we establish a more general form of closure under substitutions
for multi-steps. Note that proving closure under substitutions directly
does not work \cite{MN98}.

\begin{lemma}
\label{lem:msSubst}
If $\theta \MSb{\xR} \delta$ and $s \MSb{\xR} t$ then
$s\theta \MSb{\xR} t\delta$.
\end{lemma}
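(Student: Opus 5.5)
We prove the statement by induction on the pair $(\ord{\theta},s)$ ordered by $(>_{\mathbb{N}},\subt)_{\lex}$, the same well-founded order that makes hereditary substitution application well-defined. This ordering is needed because applying $\theta$ at a flex head performs a $\beta$-reduction, and that reduction creates a new substitution application whose domain consists of bound variables of strictly smaller order. Throughout, $\theta \MSb{\xR} \delta$ means pointwise multi-steps, so $\dom{\theta}$ and $\dom{\delta}$ can be taken to coincide (outside the domain both act as the identity, which is fine by \lemref{msRefl}). Using \lemref{msAbs}, we may strip the outer binders $\vec[k]{x}$, choosing them fresh for $\theta$ and $\delta$, and so reduce to basic terms.

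The argument is by case analysis on the last rule used to derive $s \MSb{\xR} t$.

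\textbf{Case~\enuref{ms2}.} Here $s = \ell\theta_0$ and $t = r\delta_0$ with $\theta_0 \MSb{\xR} \delta_0$ and $\dom{\theta_0} \subseteq \fv{\ell}$. Renaming $\fv{\ell}$ apart from $\dom{\theta}$ gives $s\theta = \ell(\theta_0\theta)$ and $t\delta = r(\delta_0\delta)$.
\begin{itemize}
\item For $X \in \fv{\ell}$, the inner induction hypothesis applies to $\theta_0(X)$ and $\delta_0(X)$, since these are proper subterms of $s$ (up to binders) and $\theta$ is unchanged. It yields $\theta_0(X)\theta \MSb{\xR} \delta_0(X)\delta$, that is, $\theta_0\theta \MSb{\xR} \delta_0\delta$ restricted to $\fv{\ell}$.
\item Case~\enuref{ms2} then gives $s\theta \MSb{\xR} t\delta$.
\end{itemize}

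\textbf{Case~\enuref{ms1} with $h \notin \dom{\theta}$.} Write $s = h(\vec{s})$ and $t = h(\vec{t})$. Then $s\theta = h(\vec{s}\theta)$ and $t\delta = h(\vec{t}\delta)$. The inner induction hypothesis applied to each $s_i \MSb{\xR} t_i$, together with case~\enuref{ms1}, finishes this case.

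\textbf{Case~\enuref{ms1} with $h = X \in \dom{\theta}$.} This is the main obstacle. Let $\theta(X) = \vec{y}.u$ and $\delta(X) = \vec{y}.v$, so that $u \MSb{\xR} v$ by \lemref{msAbs}. Then
\begin{align*}
s\theta &= u\SET{\vec{y} \mapsto \vec{s}\theta}, & t\delta &= v\SET{\vec{y} \mapsto \vec{t}\delta}.
\end{align*}
The inner induction hypothesis gives $\vec{s}\theta \MSb{\xR} \vec{t}\delta$, so $\SET{\vec{y} \mapsto \vec{s}\theta} \MSb{\xR} \SET{\vec{y} \mapsto \vec{t}\delta}$. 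This substitution has order strictly smaller than $\ord{\theta}$, because the $y_i$ are argument variables of $X$. The outer induction hypothesis, applied with the multi-step $u \MSb{\xR} v$, therefore yields $s\theta \MSb{\xR} t\delta$.

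The delicate points are the following.
\begin{itemize}
\item \textbf{Freshness.} The $\vec{y}$ must be chosen fresh, so that the new substitution is a genuine substitution on bound variables not occurring elsewhere.
\item \textbf{Generality.} The claim must be stated for arbitrary pairs of terms related by $\MSb{\xR}$ and arbitrary pairs of substitutions, not merely for $u$ being a subterm of $s$; this is exactly why the induction is lexicographic with order first. It is also the reason why, as remarked in \cite{MN98}, proving plain closure under substitution, with $\theta = \delta$, does not go through: the nested substitution in this case has distinct left and right components.
\end{itemize}
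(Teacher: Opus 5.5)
Your proof has the same structure as the paper's. Outer induction is on $\ord{\theta}$. In case~(i), the rigid-head subcase uses the inner hypothesis. The flex-head subcase uses $\SET{\vec{y} \mapsto \vec{s}\theta} \MSb{\xR} \SET{\vec{y} \mapsto \vec{t}\delta}$, which has smaller order, together with the outer hypothesis applied to $u \MSb{\xR} v$. Case~(ii) applies the inner hypothesis to $\theta_0 \MSb{\xR} \delta_0$.

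The one step that fails as you justify it is the inner induction measure. You induct structurally on $s$, and in case~(ii) you claim that each $\theta_0(X)$ is a proper subterm of $s = \ell\theta_0$ ``up to binders''. This is false as soon as anchors are not bound variables. Take $\ell = \m{c}(y.Z(\m{g}(y)))$ and $\theta_0 = \SET{Z \mapsto z.\m{f}(z)}$. Then $\ell\theta_0 = \m{c}(y.\m{f}(\m{g}(y)))$, and none of its subterms is $z.\m{f}(z)$, even up to renaming. For HRSs with nested free variables it is worse. In Example~\ref{exa:hrs} we have $\ell\theta_0 = \m{f}(Y)$ while $\theta_0(X)$ is an arbitrary term $t$, so $\theta_0(X)$ need not be smaller than $s$ in any sense. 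Nothing in the statement (or in the definition of $\MSb{\xR}$) restricts $\xR$ to DPRSs.

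The repair is the paper's choice: do the inner induction on the derivation of $s \MSb{\xR} t$ rather than on $s$. In case~(ii) the steps $\theta_0(X) \MSb{\xR} \delta_0(X)$ are premises of the rule, so the inner hypothesis applies to them with no containment argument. In case~(i) the premises $\vec[k]{x}.s_i \MSb{\xR} \vec[k]{x}.t_i$ are likewise sub-derivations. With this change the rest of your argument goes through. That includes renaming $\fv{\ell}$ apart from $\dom{\theta}$ and restricting $\theta_0\theta$ to $\fv{\ell}$, both of which the paper leaves implicit.
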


\begin{proof}
Without loss of generality
$s = \vec[k]{x}.h(\vec{s})$ such that
$\SEt{\vec[k]{x}}$ is fresh for $\theta$ and $\delta$.
We proceed by induction on $\ord{\theta}$ with a nested
induction on the derivation of $s \MSb{\xR} t$.
If $s \MSb{\xR} t \mscase{i}$ then
$t = \vec[k]{x}.h(\vec{t})$ and $\vec[k]{x}.s_i \MSb{\xR} \vec[k]{x}.t_i$
for all $1 \leq i \leq n$. The inner induction hypothesis yields
$\vec[k]{x}.s_i\theta \MSb{\xR} \vec[k]{x}.t_i\delta$ for all
$1 \leq i \leq n$ (\textasteriskcentered). We distinguish two cases.
\begin{itemize}
\item
If $h \notin \dom{\theta}$ then also $h \notin \dom{\delta}$ by our
assumption $\theta \MSb{\xR} \delta$. Hence,
$s\theta = \vec[k]{x}.h(\vec{s}\theta) \MSb{\xR}
\vec[k]{x}.h(\vec{t}\delta) = t\delta$ as desired.
\item
Let $\theta(h) = \vec{y}.u$ and $\delta(h) = \vec{y}.v$. By assumption,
$\vec{y}.u \MSb{\xR} \vec{y}.v$ and therefore also $u \MSb{\xR} v$ by
\lemref{msAbs}. Define $\gamma = \SET{\vec{y} \mapsto \vec{s}\theta}$ and
$\mu = \SET{\vec{y} \mapsto \vec{t}\delta}$. Note that
$\ord{\theta} > \ord{\gamma}$. Furthermore, from (\textasteriskcentered) we
obtain $\gamma \MSb{\xR} \mu$. Hence, $u\gamma \MSb{\xR} v\mu$ by the outer
induction hypothesis. Thus, also
$s\theta  = \vec[k]{x}.u\gamma \MSb{\xR} \vec[k]{x}.v\mu = t\delta$ by
\lemref{msAbs}.
\end{itemize}
Now consider the case where $s \MSb{\xR} t\mscase{ii}$. We have
$s = \vec[k]{x}.\ell\gamma$ and $t = \vec[k]{x}.r\mu$ for some rule
$\ell \R r \in \xR$. By assumption, $\gamma \MSb{\xR} \mu$. From the inner
induction hypothesis we obtain $\gamma\theta \MSb{\xR} \mu\delta$.
Hence, from $s\theta = \vec[k]{x}.\ell\gamma\theta$ and
$t\delta = \vec[k]{x}.r\mu\delta$ we obtain $s\theta \MSb{\xR} t\delta$ as
desired.
\end{proof}

In the following, we say that an
element $a \in A$ is \emph{terminal} in a binary relation
$R \subseteq A \times A$ if $a \mathrel{R} b$ implies $b = a$.
Next, we show coherence of terminal anchors in redexes:
A step with $\MS$ starting from a term containing terminal anchors
can also be performed when one abstracts away from these terminal anchors.

\begin{lemma}
\label{lem:lhsSubtAux}
Let $\xR$ be an orthogonal \textup{DPRS},
$\ell \subt \vec[k]{x}.F(\vec{s})$
for some $\ell \R r \in \xR$ and let $\vec[k]{x}.s_i$ be terminal in
$\MSb{\xR}$ for all $1 \leq i \leq n$.
If $\vec[k]{x}.u\SEt{\vec{y} \mapsto \vec{s}} \MSb{\xR} \vec[k]{x}.t$
then there is a term $\vec{y}.v$ such that $\vec{y}.u \MSb{\xR} \vec{y}.v$
and $\vec[k]{x}.t = \vec[k]{x}.v\SEt{\vec{y} \mapsto \vec{s}}$.
\end{lemma}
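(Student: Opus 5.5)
Proof plan. We argue by induction on the derivation of $\vec[k]{x}.u\SEt{\vec{y} \mapsto \vec{s}} \MSb{\xR} \vec[k]{x}.t$, with a nested induction on $u$ for the case where the substitution is passed inside. Write $\sigma = \SEt{\vec{y} \mapsto \vec{s}}$. By \lemref{msAbs} we may strip the prefix $\vec[k]{x}$ and reason about $u\sigma \MSb{\xR} t$, keeping in mind that the free variables of each $s_i$ lie in $\SEt{\vec[k]{x}}$.

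Let $u = \vec[j]{z}.h(\vec[m]{u})$. Suppose first that $h \notin \SEt{\vec{y}}$. Then $u\sigma = \vec[j]{z}.h(\vec[m]{u}\sigma)$.
\begin{itemize}
\item If the step is by case \enuref{ms1}, we apply the induction hypothesis to each argument $u_i\sigma \MSb{\xR} t_i$ and obtain $v_i$. Setting $v = \vec[j]{z}.h(\vec[m]{v})$ gives $\vec{y}.u \MSb{\xR} \vec{y}.v$ by \enuref{ms1} and \lemref{msAbs}.
\item If the step is by case \enuref{ms2}, the root of $u\sigma$ is a redex $\ell'\theta$. We must show that this redex already lives in $u$. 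That is, we need $u = \ell'\theta'$ with $\theta = \theta'\sigma$, so that no part of the matched left-hand side $\ell'$ lies inside some $s_i$. Given this, we apply the induction hypothesis to $\theta'(X)\sigma \MSb{\xR} \delta(X)$ for each $X$ and conclude with \enuref{ms2}.
\end{itemize}

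Now suppose $h = y_i$. Then $u\sigma = s_i\SEt{\text{bound vars} \mapsto \vec[m]{u}\sigma}$ by hereditary substitution. This is the delicate case. The hypothesis $\ell \subt \vec[k]{x}.F(\vec{s})$ makes $\vec{s}$ anchors of a DHP: each $s_i$ is expanded, contains a bound variable, has no free variables besides $\vec[k]{x}$, and is a subterm of a left-hand side. Any multi-step on the instantiated $s_i$ therefore consists of two kinds of steps:
\begin{enumerate}[(a)]
\item steps inside the plugged-in arguments $u_j\sigma$, which we handle by the induction hypothesis;
\item steps at positions of the anchor skeleton itself.
\end{enumerate}
We rule out (b) using terminality of $\vec[k]{x}.s_i$ and non-ambiguity. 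A redex whose root lies in the anchor skeleton either lies entirely in $s_i$, contradicting terminality, or overlaps $\ell$ at a non-variable position or anchor. The latter yields a critical pair, namely an anchor overlap as in the case analysis of \lemref{cp}, contradicting $\cp{\xR} = \varnothing$. Left-linearity together with the shape of $\gamma$ in \defref{overlap} ensures that such partial overlaps are genuinely recorded as overlaps.

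The main obstacle is the redex-localization step in case \enuref{ms2}, and its analogue when $h = y_i$. We must show that a left-hand side matching $u\sigma$ cannot partially consume an $s_i$ without producing either a redex inside $s_i$, which contradicts terminality, or an overlap with $\ell$, which contradicts orthogonality. The first thing to try is to carry an explicit lifter/unifier construction, as in the proof of \lemref{cp} via \lemref{unifierExistence}, and exhibit the corresponding overlap; uniqueness of matching (\thmref{matchUnique}) then makes $\theta'$ well defined.
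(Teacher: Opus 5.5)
The structure of your proof follows the paper's, and inducting on the derivation rather than on $u$ is fine. The structure is the split on whether the head of $u$ is among $\vec{y}$, the split on the last clause of $\MSb{\xR}$, and the induction hypothesis on the arguments and on $\theta'(X)\sigma \MSb{\xR} \delta(X)$.

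The gap is the step you flag yourself. You need that a root redex $u\sigma = \ell'\gamma_1$, with $\sigma = \SEt{\vec{y} \mapsto \vec{s}}$, already lives in $u$, i.e.\ $u = \ell'\gamma_1'$ with $\gamma_1 = \gamma_1'\sigma$. This is the heart of the lemma, and ``the first thing to try'' is not an argument. The route you indicate also cannot work as stated: you propose to exhibit an overlap and contradict $\cp{\xR} = \varnothing$, but when the redex really does lie in $u$, the corresponding overlap exists.

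The missing idea is that this redex is a \emph{variable} overlap of $\ell'$ into $\ell$ at the position of $F(\vec{s})$. For variable overlaps, non-ambiguity does not say there are no unifiers. It says every unifier is discarded by condition (1) or (2) of \defref{overlap}\enuref{overlap-iv}(b). The paper uses exactly this:
\begin{enumerate}
\item Write $u = \vec[l]{z}.u'$ and take a fresh $F'$ of the type of $\vec{y}.u'$.
\item By \lemref{unifierExistence}, some $\theta_0 \supseteq \SET{F' \mapsto \vec{y}.u'}$ unifies $\vec[k]{x}.F'(\vec{s})$ with a lifted $\ell'$, and some $\theta$ in a minimal complete set satisfies $\theta \subsumes \theta_0$.
\item If $\theta(F')$ is a projection, then $u' = y_j$, and terminality of $\vec[k]{x}.s_j$ gives $t = u\sigma$, so $v = u$ works.
\item Otherwise condition (2) says that $\theta(F')$, hence $\vec{y}.u'$, is an instance of $\vec{y}.\ell'$. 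So $u' = \ell'\gamma_1'$, and \thmref{matchUnique} gives $\gamma_1 = \gamma_1'\sigma$ on $\fv{\ell'}$.
\end{enumerate}
The localization is read off from the triviality condition; it is not derived by contradiction.

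Your case $h = y_i$ has the same confusion. A root redex of $u\sigma$ that reaches into the plugged-in arguments is not ``an anchor overlap as in \lemref{cp}''. Those are redexes lying entirely inside an anchor. It is again the variable overlap just described. It is excluded because the subsuming unifier can be neither a projection (the body of $\theta_0(F')$ would then be a bare $y_j$) nor an instance of $\ell'$ (the head $y_i$ is not a function symbol). So for \enuref{ms2} the uniform argument above already shows this subcase is vacuous. Your remark that such overlaps are ``genuinely recorded'' does not help, since a recorded variable overlap may have all of its unifiers discarded.

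For \enuref{ms1} you need expandedness concretely rather than just listing it. Write $s_i = \vec[m]{w}.h_i(c_1,\dots,c_o,\vec[m]{w}\cf)$ with $\vec[m]{w}$ not free in $h_i(c_1,\dots,c_o)$. Hereditary substitution then gives $u\sigma = \vec[l]{z}.h_i(c_1,\dots,c_o,\vec[m]{u}\sigma)$. So the ``anchor skeleton'' is just the head with fixed arguments. Steps on $c_1,\dots,c_o$ are identities by terminality of $\vec[k]{x}.s_i$, and steps on $\vec[m]{u}\sigma$ are handled by the induction hypothesis, giving $v = \vec[l]{z}.y_i(\vec[m]{v})$. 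Without this normal form, your split into steps of kinds (a) and (b) has no precise meaning.
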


\begin{proof}
Let $\mu = \SEt{\vec{y} \mapsto \vec{s}}$. We proceed by induction on
$u = \vec[l]{z}.h_u(\vec[m]{u})$. Without loss of generality,
$\SEt{\vec[l]{z}}$ is fresh for $\mu$. Assume 
$\vec[k]{x}.u\mu \MSb{\xR} \vec[k]{x}.t$. We distinguish two cases.
\begin{itemize}
\item
Let $\vec[k]{x}.u\mu \MSb{\xR} \vec[k]{x}.t \mscase{i}$. If
$h_u \in \xF \cup (\xV \setminus \SEt{\vec{y}})$ then
$\vec[k]{x}.u\mu = \vec[k]{x},\vec[l]{z}.h_u(\vec[m]{u}\mu)$
and therefore $\vec[k]{x}.t = \vec[k]{x},\vec[l]{z}.h_u(\vec[m]{t})$ as
well as $\vec[k]{x},\vec[l]{z}.u_i\mu \MSb{\xR} \vec[k]{x},\vec[l]{z}.t_i$
for all $1 \leq i \leq m$. From the induction hypothesis, we obtain
the existence of terms $\vec{y},\vec[l]{z}.v_i$ such that
$\vec[k]{x},\vec[l]{z}.u_i \MSb{\xR} \vec[k]{x},\vec[l]{z}.v_i$ and
$\vec[k]{x},\vec[l]{z}.t_i = \vec[k]{x},\vec[l]{z}.v_i\mu$ for all
$1 \leq i \leq m$. Hence, for the term
$\vec{y}.v = \vec{y},\vec[l]{z}.h_u(\vec[m]{v})$ we have
$\vec{y}.u \MSb{\xR} \vec{y}.v$ and $\vec[k]{x}.t = \vec[k]{x}.v\mu$ as
desired. Otherwise, $h_u = y_j$ for some $1 \leq j \leq n$. Let
$\vec[k]{x}.s_j = \vec[k]{x},\vec[m]{z}.h_j(\vec[o]{s_j},\vec[m]{z}\cf)$,
so
$\vec[k]{x}.u\mu = \vec[k]{x},\vec[l]{z}.h_j(\vec[o]{s_j},\vec[m]{u}\mu)$.
Hence, $\vec[k]{x}.t = \vec[k]{x},\vec[l]{z}.h_j(\vec[o+m]{t})$ and
$\vec[k]{x},\vec[l]{z}.{s_j}_i \MSb{\xR} \vec[k]{x},\vec[l]{z}.t_i$ for
$1 \leq i \leq o$ as well as
$\vec[k]{x},\vec[l]{z}.u_i\mu \MSb{\xR} \vec[k]{x},\vec[l]{z}.t_{o+i}$ for
$1 \leq i \leq m$. By assumption, the term $\vec[k]{x}.s_j$ is
terminal in
$\MSb{\xR}$, so $\vec[x]{x},\vec[l]{z}.t_i = \vec[k]{x},\vec[l]{z}.{s_j}_i$
for all $1 \leq i \leq o$. Furthermore, the induction hypothesis yields
terms $\vec{y},\vec[l]{z}.v_i$ such that
$\vec[k]{x},\vec[l]{z}.u_i \MSb{\xR} \vec[k]{x},\vec[l]{z}.v_i$ and
$\vec[k]{x},\vec[l]{z}.t_{o+i} = \vec[k]{x},\vec[l]{z}.v_i\mu$ for all
$1 \leq i \leq m$. Hence, for the term
$\vec{y}.v = \vec{y},\vec[l]{z}.h_u(\vec[m]{v})$ we have
$\vec{y}.u \MSb{\xR} \vec{y}.v$ and $\vec[k]{x}.t = \vec[k]{x}.v\mu$ as
desired.
\item \smallskip
Now let $\vec[k]{x}.u\mu \MSb{\xR} \vec[k]{x}.t \mscase{ii}$, so
$\vec[k]{x}.u\mu = \vec[k]{x},\vec[l]{z}.\ell'\gamma_1$ and
$\vec[k]{x}.t = \vec[k]{x},\vec[l]{z}.r\gamma_2$ for some
$\ell' \R r' \in \xR$ and substitutions $\gamma_1$, $\gamma_2$ such that
$\dom{\gamma_1} \subseteq \fv{\ell'}$ and $\gamma_1 \MSb{\xR} \gamma_2$.
Let $\vec{y}.u = \vec{y},\vec[l]{z}.u'$ and $F'$ by a fresh variable such
that $F'$ and $\vec{y}.u'$ have the same type. Then,
$(\vec[k]{x}.F'(\vec{s}))\SET{F' \mapsto \vec{y}.u'} =
\vec[k]{x}.\ell'\gamma_1$, so by \lemref{unifierExistence},
$\vec[k]{x}.\ell'\delta$ and $\vec[k]{x}.F'(\vec{s})$ are unifiable by
some substitution $\theta_0 \supseteq \SET{F' \mapsto \vec{y}.u'}$ where
$\delta$ is a $\vec[k]{x}$-lifter of $\ell'$ away from $\fv{\ell}$.
Let $U$ be a minimal complete set of unifiers of the two terms, so there
exists $\theta \in U$ such that $\theta \subsumes \theta_0$.
If $(\vec[k]{x}.F'(\vec{s}))\theta = \vec[k]{x}.s_j$ for some
$1 \leq j \leq n$ then $\vec{y}.u' = \vec{y}.y_j$. Furthermore,
$\vec[k]{x}.u\mu = \vec[k]{x},\vec[l]{z}.s_j = \vec[k]{x}.t$ as
$\vec[k]{x}.s_j$ is terminal in $\MSb{\xR}$. Hence, we can choose
$\vec{y}.v = \vec{y}.u$. Otherwise, since $\xR$ is non-ambiguous, we have
$\vec{y}.\ell' \subsumes \theta(F') \subsumes \theta_0(F')$. Together with
$\theta_0 \supseteq \SET{F' \mapsto \vec{y}.u'}$ and
$\vec[k]{x}.u'\mu = \vec[k]{x}.\ell\gamma_1$ we get
$\vec{y}.u' = \vec{y}.\ell'\gamma_1'$ where $\gamma_1 = \gamma_1'\mu$.
Furthermore, we also obtain the existence of $\gamma_2'$ such that
$\delta = \gamma_2'\mu$ from $\gamma_1 \MSb{\xR} \gamma_2$ and our
assumption that $\vec[k]{x}.s_i$ is terminal in $\MSb{\xR}$ for all
$1 \leq i \leq n$. Hence, we have $\gamma_1' \MSb{\xR} \gamma_2'$ and
obtain $\vec{y}.u = \vec{y},\vec[l]{z}.\ell\gamma_1' \MSb{\xR}
\vec{y},\vec[l]{z}.r\gamma_2'$ as well as
$\vec[k]{x},\vec[l]{z}.r\gamma_2'\mu = \vec[k]{x},\vec[l]{z}.r\gamma_2 =
\vec[k]{x}.t$ as desired. \qedhere
\end{itemize}
\end{proof}

Using \lemref{lhsSubtAux},
we can now prove the general case of coherence for orthogonal DPRSs.
We will obtain the assumption that anchors are terminal for free
from the induction hypothesis.

\lhssubt*

\begin{proof}
By assumption, $s = \vec[k]{x}.h(\vec{s})$ is a linear term. Furthermore,
$\SEt{\vec[k]{x}}$ is fresh for $\theta$ without loss of generality.
We perform induction on $s$ and distinguish two cases.
\begin{itemize}
\item
Assume $h \notin \dom{\theta}$. If $s\theta \MSb{\xR} t \mscase{i}$ then
$t = \vec[k]{x}.h(\vec{t})$ and
$(\vec[k]{x}.s_i)\theta \MSb{\xR} \vec[k]{x}.t_i$ for all
$1 \leq i \leq n$.
By the induction hypothesis, there exist substitutions $\vec{\delta}$ such
that $\vec[k]{x}.t_i = (\vec[k]{x}.s_i)\delta_i$ and
$\theta|_{\fv{\vec[k]{x}.s_i}} \MSb{\xR} \delta_i$ for all
$1 \leq i \leq n$. Without loss of generality, $\SEt{\vec[k]{x}}$ is fresh
for $\delta_i$ for all $1 \leq i \leq n$. Since $s$ is linear, the
substitution $\delta = \bigcup_{i = 1}^n \delta_i$ is well-defined and
satisfies $\theta|_{\fv{s}} \MSb{\xR} \delta$. Finally,
$s\delta = \vec[k]{x}.h(\vec{s}\delta) = \vec[k]{x}.h(\vec{t}) = t$.
Now suppose $s\theta \MSb{\xR} t\mscase{ii}$, so there exist a rewrite rule
$\ell' \R r' \in \xR$ and a substitution $\theta'$ such that
$s\theta = \vec[k]{x}.\ell'\theta'$. In particular,
$h \in \xF$. Let $p$ be the position such that $\ell|_p = s$.
By \lemref{unifierExistence}, $\vec{x}.\ell'\mu$ and $s$ are unifiable
where
$\mu$ is a $\vec[k]{x}$-lifter of $\ell'$ away from $\fv{\ell}$. Let $U$ be
a complete set of unifiers of $\vec{x}.\ell'\mu$ and $s$. Then,
$\langle \ell' \R r', p, \epsilon, \vec[k]{x}, (), \mu, \varnothing,
U, \ell \R r \rangle$ is an overlap in $\xR$ which contradicts the fact
that orthogonal DPRSs are non-ambiguous.
\item \smallskip
Let $\theta(h) = \vec{y}.u = \vec{y}.h_u(\vec[m]{u})$, so
$s\theta = \vec[k]{x}.u\SEt{\vec{y} \mapsto \vec{s}}$. In particular,
$\fv{s} = \SET{h}$ as $\ell$ and therefore $s$ is a DHP. Thus, the
induction hypothesis yields that $\vec[k]{x}.s_i$ is terminal in
$\MSb{\xR}$ for all $1 \leq i \leq n$. From \lemref{lhsSubtAux}, we
obtain a term $\vec{y}.v$ such that $\vec{y}.u \MSb{\xR} \vec{y}.v$ and
$t = \vec[k]{x}.v\SEt{\vec{y} \mapsto \vec{s}}$. Hence, we can define
$\delta = \SET{h \mapsto \vec{y}.v}$ to obtain
$s\delta = \vec[k]{x}.v\SEt{\vec{y} \mapsto \vec{s}} = t$ and
$\theta(h) \MSb{\xR} \delta(h)$ as desired.
\end{itemize}
\end{proof}

\multisteptriangle*

\begin{proof}
We perform induction on the derivation of $s \MSb{\xR} t$. If
$s \MSb{\xR} t\mscase{i}$ then $s = \vec[k]{x}.h(\vec{s})$,
$t = \vec[k]{x}.h(\vec{t})$ and $\vec[k]{x}.s_i \MSb{\xR} \vec[k]{x}.t_i$
for all $1 \leq i \leq n$. We proceed by case analysis on whether
$h(\vec{s})$ is a redex.
\begin{itemize}
\item
If $h(\vec{s})$ is not a redex then $u = \vec[k]{x}.h(\vec{u})$ and
$\vec[k]{x}.s_i \MMSb{\xR} \vec[k]{x}.u_i$ for all $1 \leq i \leq n$.
From the induction hypothesis we obtain
$\vec[k]{x}.t_i \MSb{\xR} \vec[k]{x}.u_i$
and therefore $t \MSb{\xR} u$ as desired.
\item \smallskip
If $h(\vec{s})$ is a redex then
$\vec[k]{x}.h(\vec{s}) = \vec[k]{x}.\ell\theta$ for some rule
$\ell \R r \in \xR$ and substitution $\theta$ with
$\dom{\theta} \subseteq \fv{\ell}$. By definition,
$h \in \xF$ and $\ell = h(\vec{\ell})$. Fix $i \in \SET{1,\dots,n}$. By
assumption,
$\vec[k]{x}.s_i = \vec[k]{x}.\ell_i\theta \MSb{\xR} \vec[k]{x}.t_i$. From
\lemref{lhsSubt} we obtain substitutions $\delta_i$ such that
$\vec[k]{x}.t_i = \vec[k]{x}.\ell_i\delta_i$ and
$\theta|_{\fv{\ell_i}} \MSb{\xR} \delta_i$ for $1 \leq i \leq n$. Without
loss of generality, we may assume that
$\dom{\delta_i} \subseteq \fv{\ell_i}$. Since $\ell$ is linear,
$\delta = \bigcup_{i=1}^{n}\delta_i$ is well-defined. Furthermore,
$t = \vec[k]{x}.\ell\delta$ and $\theta \MSb{\xR} \delta$. By assumption,
$u = \vec[k]{x}.r\gamma$ such that $\theta \MMSb{\xR} \gamma$. From
$|\fv{\ell}|$ applications of the induction hypothesis we obtain
$\delta \MSb{\xR} \gamma$ and therefore
$t = \vec[k]{x}.\ell\delta \MSb{\xR} \vec[k]{x}.r\gamma = u$ as desired.
\end{itemize}
Now consider $s \MSb{\xR} t\mscase{ii}$, so $s = \vec[k]{x}.\ell\theta$,
$t = \vec[k]{x}.r\delta$ and $\theta \MSb{\xR} \delta$.
By assumption, $u = \vec[k]{x}.r\gamma$ such that
$\theta \MMSb{\xR} \gamma$.
As before, by applying the induction hypothesis $|\fv{\ell}|$ times
we obtain $\delta \MSb{\xR} \gamma$ and therefore
$t = \vec[k]{x}.r\delta \MSb{\xR} \vec[k]{x}.r\gamma = u$ by
Lemmata~\ref{lem:msRefl}, \ref{lem:msAbs} and \ref{lem:msSubst}.
\end{proof}

\end{document}